\newcommand{\AC}{{\mathsf{AC}}}
\newcommand{\CWP}{{\mathsf{CWP}}}
\newcommand{\depth}{{\mathsf{depth}}}
\newcommand{\DET}{{\mathsf{DET}}}
\newcommand{\DLOGTIME}{{\mathsf{DLOGTIME}}}
\newcommand{\val}{{\mathsf{val}}}
\newcommand{\rhs}{{\mathsf{rhs}}}
\newcommand{\Id}{{\mathsf{Id}}}
\newcommand{\mdepth}{{\mathsf{mdepth}}}
\newcommand{\CL}{\mathsf{C}_=\mathsf{L}}
\newcommand{\CLOGCFL}{\mathsf{C}_=\mathsf{LogCFL}}
\newcommand{\NC}{{\mathsf{NC}}}
\newcommand{\TC}{{\mathsf{TC}}}
\newcommand{\UT}{{\mathsf{UT}}}
\begin{document}

\title{Evaluating Matrix Circuits}

\author{Daniel K\"onig \and Markus Lohrey}

\institute{Universit\"at Siegen, Germany}

\maketitle

\begin{abstract}
The circuit evaluation problem (also known as the compressed word problem) for
finitely generated linear groups is studied. The best upper bound for this problem is 
{\sf coRP}, which is shown by a reduction to polynomial identity testing. Conversely,
the compressed word problem for the linear group $\mathsf{SL}_3(\mathbb{Z})$ 
is equivalent to polynomial identity testing. In the paper, it is shown that the compressed
word problem for every finitely generated nilpotent group is in $\mathsf{DET} \subseteq \NC^2$.
Within the larger class of polycyclic groups we find examples where the compressed word problem
is at least as hard as polynomial identity testing for skew arithmetic circuits.
\end{abstract}

\section{Introduction}

The study of circuit evaluation problems has a long tradition in theoretical computer science
and is tightly connected to many aspects in computational complexity theory. One of the most
important circuit evaluation problems is \emph{polynomial identity testing}.
Here, the input is an arithmetic circuit, whose internal gates are labelled with either addition or multiplication
and its input gates are labelled with variables ($x_1, x_2, \ldots$) or constants ($-1,0,1$), and
it is asked whether the output gate evaluates to the zero polynomial (in this paper, we always
work in the polynomial ring over the coefficient ring $\mathbb{Z}$ or $\mathbb{Z}_p$ 
for a prime $p$). Based on the Schwartz-Zippel-DeMillo-Lipton Lemma,
Ibarra and Moran \cite{IbMo83} proved that polynomial identity testing over $\mathbb{Z}$ or $\mathbb{Z}_p$ belongs to the class {\sf coRP}
(the complements of problems in randomized polynomial time). Whether there is a deterministic
polynomial time algorithm for polynomial identity testing is an important problem.
In \cite{ImpWig97} it is shown that if there exists a language in $\mathsf{DTIME}(2^{\mathcal{O}(n)})$
that has circuit complexity $2^{\Omega(n)}$, then $\mathsf{P} = \mathsf{BPP}$ (and hence
$\mathsf{P} = \mathsf{RP} = \mathsf{coRP}$). There is also an implication that goes the other way round:
 Kabanets and Impagliazzo \cite{KabanetsI04} have shown that if polynomial identity testing
belongs to $\mathsf{P}$, then (i) there is a language in $\mathsf{NEXPTIME}$
 that does not have polynomial size circuits, or (ii)
the permanent is not computable by polynomial size arithmetic circuits.
Both conclusions represent major open problem in complexity theory. Hence, although
it is quite plausible that polynomial identity testing belongs to $\mathsf{P}$ (by \cite{ImpWig97}),
it will be probably very hard to prove (by \cite{KabanetsI04}).

Circuit evaluation problems can be also studied for other structures than polynomial rings, in particular
non-commutative structures. For finite monoids, the circuit evaluation problem has been studied in \cite{BeMcPeTh97},
where it was shown using Barrington's technique \cite{Bar89} that for every non-solvable finite monoid the circuit
evaluation problem is {\sf P}-complete, whereas for every solvable monoid, the circuit
evaluation problem belongs to the parallel complexity class $\DET \subseteq \NC^2$. Starting with 
\cite{Loh06siam} the circuit evaluation problem has been also studied for infinite finitely generated (f.g) monoids, in particular 
infinite  f.g. groups. In this context, the input gates of the circuit are labelled with generators of the monoid
and the internal gates compute the product of the two input gates.

In \cite{Loh06siam} and subsequent work, the circuit evaluation problem
is also called the {\em compressed word problem}. This is due to the fact that if one forgets the underlying
monoid structure of a multiplicative circuit, the circuit simply evaluates to a word over the monoid generators
labelling the input gates. This word can be of length exponential in the number of circuit gates. Hence,
the circuit can be seen as a compressed representation of the word it produces. In this context, 
circuits are also known as {\em straight-line programs} (SLPs) and are intensively studied in the area of 
algorithms for compressed data, see \cite{Loh12survey} for an overview.

Concerning the compressed word problem, polynomial time algorithms have been developed for
many important classes of groups, e.g., finite groups,  f.g. nilpotent groups,  f.g. free groups,
graph groups (also known as right-angled Artin groups or partially commutative groups), and 
virtually special groups. The latter contain all Coxeter groups, one-relator groups with torsion,
fully residually free groups, and fundamental groups of hyperbolic 3-manifolds; see \cite{Loh14} for details.
For the important class of f.g. linear groups, i.e., f.g. groups of matrices over a field, it was shown
in \cite{Loh14} that the compressed word problem can be reduced to polynomial identity testing (over $\mathbb{Z}$
or $\mathbb{Z}_p$, depending on the characteristic of the field) and hence belongs to $\mathsf{coRP}$.
Vice versa, in \cite{Loh14} it was shown that polynomial identity testing over $\mathbb{Z}$ can be reduced
to the compressed word problem for the linear group $\mathsf{SL}_3(\mathbb{Z})$. The proof is based on 
a construction of Ben-Or and Cleve  \cite{Ben-OrC92}.  This result indicates that derandomizing the compressed word problem
for a f.g. linear group will be in general very difficult.

In this paper, we further investigate the tight correspondence between commutative circuits over rings and 
non-commutative circuits over linear groups. In Section~ \ref{cwp-nilpotent} we study the complexity of 
the compressed word problem for f.g. nilpotent groups. For these groups, the compressed
word problem can be solved in polynomial time \cite{Loh14}. Here, we show that for every f.g.
nilpotent group the compressed word problem belongs to the parallel complexity class $\DET \subseteq \NC^2$,
which is the class of all problems that are $\NC^1$-reducible to the computation of the determinant of an 
integer matrix, see \cite{Coo85}. To the knowledge of the authors, 
f.g. nilpotent groups are the only examples of infinite groups for which the compressed word problem belongs to $\NC$.
Even for free groups, the compressed word problem is {\sf P}-complete \cite{Loh06siam}. The main step 
of our proof for f.g. nilpotent groups is to show that 
for a torsion-free f.g. nilpotent group $G$ the compressed word problem belongs to the logspace
counting class  $\CL$ (and is in fact $\CL$-complete if $G$ is nontrivial). To show this, we use the well-known
fact that a f.g.  torsion-free nilpotent group can be embedded into the group $\UT_d(\mathbb{Z})$ of $d$-dimensional
unitriangular matrices over $\mathbb{Z}$ for some fixed $d$. Then, the compressed word problem for 
$\UT_d(\mathbb{Z})$ is reduced to the question whether two additive circuits over the natural numbers
evaluate to the same number, which is $\CL$-complete. Let us mention that there are several $\CL$-complete
problems related to linear algebra \cite{AllenderBO99}.

We also study  the compressed word problem for the matrix group $\UT_d(\mathbb{Z})$ for the case that the 
dimension $d$ is not fixed, i.e., part of the input (Section~\ref{sec-uniform}). In this case, the compressed word problem turns out to be complete
for the counting class  $\CLOGCFL$, which is the $\mathsf{LogCFL}$-analogue of  $\CL$.

Finally, in Section~\ref{sec-polycyclic} we move from nilpotent groups to polycyclic groups. These are solvable groups, where every
subgroup is finitely generated. By results of Maltsev, Auslander, and Swan these are exactly the solvable subgroups
of $\mathsf{GL}_d(\mathbb{Z})$ for some $d$. We prove that polynomial identity testing for skew arithmetic circuits reduces to 
the compressed word problem for a specific 2-generator polycyclic group of Hirsch length 3. 
A skew arithmetic circuit is an arithmetic circuit (as defined in the first paragraph of the introduction) such that for every
multiplication gate, one of its input gates is an input gate of the circuit, i.e., a variable or a constant. 
These circuits exactly correspond to algebraic branching programs.
Even for skew arithmetic circuits, no polynomial time algorithm is currently known (although the problem belongs to 
$\mathsf{coRNC}$).

\section{arithmetic circuits}

We use the standard notion of (division-free) arithmetic 
circuits. Let us fix a set $X = \{ x_1, x_2, \ldots \}$ of 
variables.
%An {\em arithmetic formula} is just a labelled binary tree where internal nodes are labelled with the semiring operations $+$ and $\cdot$,
%and leaf nodes are labelled with variables $y_1, y_2, \ldots$ or the constants $0$ and $1$.
An \emph{arithmetic circuit} is a triple $C = (V,S,\rhs)$ with the following properties:
\begin{itemize}
\item $V$ is a finite set of \emph{gates}.
\item $S \in V$ is the {\em output gate}.
\item For every gate $A$, $\rhs(A)$ (the \emph{right-hand side of $A$ }) is either a variable from $X$, one of the constants $-1$, $0$, $1$, or an expression of 
the form $B+C$ or $B \cdot C$, where $B$ and $C$ are gates.
\item  There is a linear order $<$ on $V$ such that $B < A$ whenever $B$ occurs in $\rhs(A)$.
\end{itemize}
A gate $A$, where $\rhs(A)$ has the form $B+C$ (resp., $B \cdot C$)  
is called an addition gate (resp., multiplication gate). A gate that is labelled with a variable
or a constant is an \emph{input gate}.

Fix a ring $(R,+,\cdot)$ (which will be $(\mathbb{Z},+,\cdot)$ in most cases) 
and assume that $\mathcal{C} = (V,S,\rhs)$ is an arithmetic circuit 
in which the variables $x_1, \ldots, x_n$ occur. Then we can evaluate every gate $A \in V$ to a polynomial
$\val_{\mathcal{C}}(A) \in R[x_1,\ldots, x_n]$ in the obvious way (here,``$\val$'' stands for ``value'').
Moreover let $\val(\mathcal{C}) = \val_{\mathcal{C}}(S)$ be the polynomial to which $\mathcal{C}$ evaluates.
Two arithmetic circuits $\mathcal{C}_1$ and $\mathcal{C}_2$ 
are equivalent if they evaluate to the same polynomial.

Fix an arithmetic circuit $\mathcal{C} = (V,S,\rhs)$. 
We can view $\mathcal{C}$ as a directed acyclic graph (dag), where every node is labelled with 
a variable or a constant or an operator $+$, $\cdot$. If $\rhs(A) = B \circ C$ (for $\circ$ one of the operators),
then there is an edge from $B$ to $A$ and $C$ to $A$. 
The \emph{depth} $\mathsf{depth}(A)$ (resp., \emph{multiplication depth} $\mathsf{mdepth}(A)$)
of the gate $A$ is the maximal number of gates (resp., multiplication gates) along a path from 
an input gate to $A$. So, input gates have depth one and multiplication depth zero.
The \emph{depth} (resp., \emph{multiplication depth})  of
$\mathcal{C}$ is $\depth(\mathcal{C}) = \depth(S)$ (resp., $\mdepth(\mathcal{C}) = \mdepth(S)$).
The \emph{formal degree} $\mathsf{deg}(A)$ of a gate $A$ is $1$ if $A$ is an input gate, 
$\max\{ \mathsf{deg}(B),\mathsf{deg}(C) \}$ if $\rhs(A) = B+C$, and  
$\mathsf{deg}(B) + \mathsf{deg}(C)$ if $\rhs(A) = B \cdot C$. The formal degree 
of $\mathcal{C}$ is $\mathsf{deg}(\mathcal{C}) = \deg(S)$.
A \emph{positive circuit} is an arithmetic circuit without input gates labelled by the constant $-1$.
An \emph{addition circuit} is  a positive circuit without multiplication gates. A \emph{variable-free circuit} is a circuit without variables. 
It evaluates to an element of the underlying ring. 
A \emph{skew circuit} is an arithmetic circuit such that for every multiplication gate $A$ with $\rhs(A) = B \cdot C$,
one of the gates $B,C$ is an input gate.

In the rest of the paper we will also allow more complicated expressions in 
right-hand sides for gates. For instance, we may have a gate with
$\rhs(A) = (B+C) \cdot (D+E)$.  When writing down such a right-hand side, we implicitly assume
that there are additional gates in the circuit, with (in our example) right hand sides $B+C$ and 
$D+E$, respectively. The following lemma is folklore. We give a proof for completeness.

\begin{lemma} \label{lemma-eliminate-sub}
Given an arithmetic circuit $\mathcal{C}$ one can compute in logarithmic space two
positive circuits $\mathcal{C}_1$ and $\mathcal{C}_2$ such that $\val(\mathcal{C}) = 
\val(\mathcal{C}_1)-\val(\mathcal{C}_2)$ for every ring. Moreover, for $i \in \{1,2\}$
we have $\deg(\mathcal{C}_i) \leq \deg(\mathcal{C})$,  $\depth(\mathcal{C}_i) \leq 2 \cdot \depth(\mathcal{C})$,
and $\mdepth(\mathcal{C}_i) \leq \mdepth(\mathcal{C})$.
\end{lemma}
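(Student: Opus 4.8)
The plan is to eliminate the constant $-1$ and all subtractions by the standard ``doubling'' trick: to each gate $A$ we associate a pair of gates $(A^+, A^-)$ in a new positive circuit, with the invariant $\val(A) = \val(A^+) - \val(A^-)$. The construction is purely local on right-hand sides, so it can clearly be carried out in logarithmic space (each new gate's right-hand side mentions only a bounded number of the $O(|V|)$ new gates, and the ordering on new gates is inherited from $<$).

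First I would set up the recursion on $\rhs(A)$. If $\rhs(A) = 1$ or $\rhs(A) = 0$, set $\rhs(A^+) = \rhs(A)$ and $\rhs(A^-) = 0$; if $\rhs(A) = -1$, set $\rhs(A^+) = 0$ and $\rhs(A^-) = 1$; if $\rhs(A) = x_i$, set $\rhs(A^+) = x_i$, $\rhs(A^-) = 0$. For an addition gate $\rhs(A) = B + C$, set $\rhs(A^+) = B^+ + C^+$ and $\rhs(A^-) = B^- + C^-$; the invariant is immediate. For a multiplication gate $\rhs(A) = B \cdot C$, expand $(B^+ - B^-)(C^+ - C^-)$ and group the positive and negative contributions: $\rhs(A^+) = B^+ \cdot C^+ + B^- \cdot C^-$ and $\rhs(A^-) = B^+ \cdot C^- + B^- \cdot C^+$ (using the convention from the paragraph before the lemma that the nested sums and products correspond to implicit auxiliary gates). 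An easy induction along the order $<$ then gives $\val_{\mathcal{C}}(A) = \val(A^+) - \val(A^-)$ in every ring, and taking $\mathcal{C}_1$ to be the circuit with output $S^+$ and $\mathcal{C}_2$ the one with output $S^-$ (sharing all gates) yields $\val(\mathcal{C}) = \val(\mathcal{C}_1) - \val(\mathcal{C}_2)$. Both circuits are positive since no right-hand side uses $-1$ or subtraction.

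It remains to track the three size parameters, which is the only place requiring a little care. For the formal degree I would prove $\deg(A^+), \deg(A^-) \le \deg(A)$ by induction: at input gates both sides are $1$; at an addition gate, $\deg(A^{\pm}) = \max(\deg(B^{\pm}), \deg(C^{\pm})) \le \max(\deg(B), \deg(C)) = \deg(A)$; at a multiplication gate, each summand of $\rhs(A^{\pm})$ has degree $\deg(B^{s}) + \deg(C^{t}) \le \deg(B) + \deg(C) = \deg(A)$, and the addition combining the two summands does not increase the degree. For the multiplication depth, note that each new multiplication gate sits ``in place of'' an original multiplication gate, and an original addition gate produces only new addition gates, so $\mdepth(A^{\pm}) \le \mdepth(A)$ by the same induction. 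For the ordinary depth, the worst case is a multiplication gate, whose right-hand side $B^{s} \cdot C^{t} + B^{s'} \cdot C^{t'}$ introduces one extra addition layer on top of the products, so depth increases by at most one per original gate along any path; since an original path of length $\depth(\mathcal{C})$ contains at most $\depth(\mathcal{C})$ gates, the blow-up is bounded by $2\cdot\depth(\mathcal{C})$.

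I do not expect a genuine obstacle here — the result is folklore and the argument is a routine structural induction. The only thing to be careful about is the bookkeeping for the implicit auxiliary gates hidden inside right-hand sides like $B^+ \cdot C^+ + B^- \cdot C^-$: one must check that counting these does not spoil the stated bounds on $\deg$, $\mdepth$, and in particular $\depth$ (where they are exactly responsible for the factor $2$), and that the whole transformation is still computable by a logspace machine, which it is because the new gate set is $\{A^+, A^-, \text{a constant number of auxiliaries per gate}\}$ and each new right-hand side is determined by $\rhs(A)$ alone.
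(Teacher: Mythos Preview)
Your proposal is correct and essentially identical to the paper's proof: the paper uses the same doubling construction (with gates $A_1,A_2$ playing the role of your $A^+,A^-$ and exactly the same right-hand sides for input, addition, and multiplication gates), verifies $\val(A)=\val(A_1)-\val(A_2)$ by the same induction, and then states the bounds on $\deg$, $\depth$, and $\mdepth$ by induction.
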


\begin{proof}
Let $\mathcal{C}=(V,S,\rhs)$ be an arithmetic circuit. We define the positive circuits $\mathcal{C}_1=(V',S_1,\rhs')$ and $\mathcal{C}_2=(V',S_2,\rhs')$ 
as follows:
\begin{gather*} 
 V'= \left\{ A_{i} \mid A \in V, i \in \left\{ 1,2 \right\} \right\}, \\
\rhs'(A_{i})= B_{i} + C_{i}   \text { if } \rhs(A)=B+C \text{ for } i \in \{1,2 \}, \\
\rhs'(A_1)=B_{1}C_{1}+B_{2}C_{2} \text { if } \rhs(A)=B \cdot C, \\
\rhs'(A_2)=B_{1}C_{2}+B_{2}C_{1}  \text { if } \rhs(A)=B \cdot C, \\ 
\rhs'(A_{1})= \rhs(A) \text{ if } \rhs(A) \in \{0,1\} \cup X, \\
\rhs'(A_{2})=0 \text{ if } \rhs(A) \in \{0,1\} \cup X , \\
\rhs'(A_{1})= 0 \text{ if } \rhs(A)=-1, \\
\rhs'(A_{2})=1 \text{ if } \rhs(A)= -1 .
\end{gather*}
Now we show by induction that for every gate $A \in V$ we have $\val(A)=\val(A_1)-\val(A_2)$:
The case that $A$ is an input gate is trivial. 
Now let $A$ be an addition gate with $\rhs(A)=B+C$ such that the statement is true for $B$ and $C$. Then 
\begin{align*}
\val(A)
 & = \val(B) + \val(C)  \\
&= \val(B_{1}) - \val(B_{2})+ \val(C_{1}) - \val(C_{2}) \\
&= (\val(B_{1}) + \val(C_{1}))   - (\val(B_{2}) + \val(C_{2}))  \\
&=\val(A_{1})-\val(A_{2}) 
\end{align*}
Finally, let $A$ be a multiplication gate with $\rhs(A)=B \cdot C$. We get 
\begin{align*}
 \val(A)&= \val(B)\val(C)\\
 &=(\val(B_{1})-\val(B_{2}))(\val(C_{1})-\val(C_{2}))\\
 &=\val(B_{1})\val(C_{1}) + \val(B_{2})\val(C_{2})
 -\val(B_{1})\val(C_{2})- \val(B_{2})\val(C_{1})\\
 &= \val(A_1)- \val(A_2) .
\end{align*}
So the claim holds.
The construction of $\mathcal{C}_1$ and $\mathcal{C}_2$ 
can be done in logarithmic space. By induction, it can be shown that for every gate $A$ of $\mathcal{C}$ and every $i \in \{1,2\}$,
one has $\deg(A_i) = \deg(A)$,  $\depth(A_i) \leq 2\cdot \depth(A)$,
and $\mdepth(A_i) = \mdepth(A)$.
\qed
\end{proof}
\emph{Polynomial identity testing} for a ring $R$ is the following computational problem: Given an arithmetic
circuit $\mathcal{C}$ (with variables $x_1, \ldots, x_n$), does $\val(\mathcal{C}) = 0$ hold, i.e., does $\mathcal{C}$ evaluate
to the zero-polynomial in $R[x_1, \ldots, x_n]$? It is an outstanding open problem in algebraic complexity theory, whether polynomial identity testing for $\mathbb{Z}$
can be solved in polynomial time.

\section{Complexity classes}
\label{sec-complexity}

The counting class $\#\mathsf{L}$ consists of all functions $f : \Sigma^* \to \mathbb{N}$ for which there is a logarithmic space bounded nondeterministic
Turing machine $M$ such that for every $w \in \Sigma^*$, $f(w)$ is the number of accepting computation paths of $M$ on input $x$.
The class $\CL$ contains all languages $A$ for which there are two functions $f_1, f_2 \in \#\mathsf{L}$ such that for every $w \in \Sigma^*$,
$w \in A$ if and only if $f_1(w) = f_2(w)$.
The class $\CL$ is closed under logspace many-one reductions.
The canonical $\CL$-complete problem is the following: The input consists of two dags $G_1$ and $G_2$ and vertices $s_1, t_1$ (in $G_1$) and $s_2, t_2$ (in $G_2$),
and it is asked whether the number of different paths from $s_1$ to $t_1$ in $G_1$ is equal to the number of different paths from $s_2$ to $t_2$ in $G_2$.
This problem is easily seen to be equivalent to the following problem: Given two variable-free addition circuits $\mathcal{C}_1$ and $\mathcal{C}_2$, does $\val(\mathcal{C}_1) = 
\val(\mathcal{C}_2)$ hold? Several $\CL$-complete problem is the question whether the determinant of a given integer matrix is zero \cite{Toda91countingproblems,Vinay91}.

We use standard definitions concerning circuit complexity, see e.g. \cite{Vol99} for more details. 
In particular we will consider the class $\TC^0$ of all problems
that can be solved by a polynomial size circuit family of constant depth that uses NOT-gates and unbounded fan-in AND-gates, OR-gates, and majority-gates.
For {\sf DLOGTIME}-uniform $\TC^0$ it is required in addition that for binary coded gate numbers $u$ and $v$, one can (i) compute the type of gate $u$ in time
$O(|u|)$ and (ii) check in  time $O(|u|+|v|)$ whether $u$ is an input gate for $v$.
Note that the circuit for inputs of length $n$ has at most $p(n)$ gates for a polynomial $p(n)$. Hence, the binary codings $u$ and $v$ have
length $O(\log n)$, i.e., the above computations can be done in $\mathsf{DTIME}(\log n)$. This is the reason for using the term 
``{\sf DLOGTIME}-uniform''.  If majority gates are not allowed, we obtain the class ({\sf DLOGTIME}-uniform) $\AC^0$. 
The class ({\sf DLOGTIME}-uniform) $\NC^1$ is defined by ({\sf DLOGTIME}-uniform) polynomial size circuit families
of logarithmic depth that use NOT-gates and  fan-in-2 AND-gates and OR-gates.
A language $A$ is $\AC^0$-reducible
to languages $B_1, \ldots, B_k$ if $A$ can be solved with a 
{\sf DLOGTIME}-uniform polynomial size circuit family of constant depth that uses NOT-gates and unbounded fan-in AND-gates, OR-gates, and 
$B_i$-gates ($1 \leq i \leq k$). Here, a $B_i$-gate (it is also called an oracle gate) receives an ordered tuple of inputs $x_1, x_2, \ldots, x_n$ and outputs $1$ if and only if 
$x_1  x_2 \cdots x_n \in B_i$. Sometimes, also the term ``uniform constant depth reducibility'' is used for this type of reductions.
In the same way, the weaker $\NC^1$-reducibility can be defined. Here, one counts the depth of a $B_i$-gate with inputs
$x_1, x_2, \ldots, x_n$ as $\log n$.  The class $\DET$ contains all problems that are $\NC^1$-reducible to the computation 
of the determinant of an integer matrix, see \cite{Coo85}.
It is known that $\CL \subseteq \DET \subseteq \NC^2$, see e.g. \cite[Section 4]{AlJe93}.

An \emph{NAuxPDA} is a nondeterministic Turing machine with an additional pushdown store.
The class $\mathsf{LogCFL} \subseteq \NC^2$ is the class of all languages that can be accepted by a polynomial time bounded NAuxPDA whose 
work tape is logarithmically bounded (but the pushdown store is unbounded). If we assign to the input the number of accepting
computation paths of such an NAuxPDA, we obtain the counting class $\#\mathsf{LogCFL}$.
In \cite{Vinay91} it is shown that $\#\mathsf{LogCFL}$ is the class of all 
functions $f : \{0,1\}^* \to \mathbb{N}$ (a non-binary input alphabet $\Sigma$ has to be encoded into $\{0,1\}^*$) 
for which there exists a logspace-uniform family $(\mathcal{C}_n)_{n \geq 1}$ of positive arithmetic circuits such that 
$\mathcal{C}_n$ computes the mapping $f$ restricted to $\{0,1\}^n$ and there is a polynomial $p(n)$ such that 
the formal degree of $\mathcal{C}_n$ is bounded by $p(n)$.
The class $\CLOGCFL$ contains all languages $A$ for which there are two functions $f_1, f_2 \in \#\mathsf{LogCFL}$ such that for every $w \in \Sigma^*$,
$w \in A$ if and only if $f_1(w) = f_2(w)$.
We need the following simple lemma, whose proof is based on folklore ideas:

\begin{lemma} \label{lemma-circuit-NAuxPDA}
There is an NAuxPDA $\mathcal{P}$ that gets as input a positive variable-free arithmetic circuit $\mathcal{C} = (V,S,\rhs)$ and such that
the number of accepting computations of $\mathcal{P}$ on input $\mathcal{C}$ is
$\val(\mathcal{C})$. 
Moreover, the running time is bounded polynomially in $\mathsf{depth}(\mathcal{C}) \cdot \mathsf{deg}(\mathcal{C})$.
\end{lemma}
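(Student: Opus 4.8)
The plan is to describe a recursive NAuxPDA procedure $\mathcal{P}$ that, given a positive variable-free circuit $\mathcal{C}=(V,S,\rhs)$, verifies membership in a ``path'' of the computation tree in such a way that the number of accepting computations starting from gate $A$ equals $\val_{\mathcal{C}}(A)$. The pushdown store is used exactly to implement the recursion stack, so that the work tape only ever holds $O(\log |V|)$ bits (a constant number of gate names plus a small counter). We process a gate $A$ as follows. If $\rhs(A)=0$, then $\mathcal{P}$ rejects immediately (contributing $0$ accepting paths). If $\rhs(A)=1$, then $\mathcal{P}$ accepts immediately on this branch (contributing $1$). If $\rhs(A)=B+C$, then $\mathcal{P}$ nondeterministically branches into two options, pushing a return marker, and recursively processes $B$ in the first branch and $C$ in the second; the total number of accepting continuations is $\val(B)+\val(C)$. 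If $\rhs(A)=B\cdot C$, then $\mathcal{P}$ first recursively processes $B$; whenever that subcomputation would ``accept'', instead of halting it pops back and then recursively processes $C$, and only a genuine acceptance in the $C$-subcomputation counts. Because the $B$-phase has exactly $\val(B)$ accepting continuations and each of them spawns exactly $\val(C)$ accepting continuations in the $C$-phase, gate $A$ yields $\val(B)\cdot\val(C)=\val(A)$ accepting computations. Starting $\mathcal{P}$ at $S$ then gives exactly $\val(\mathcal{C})$ accepting computations.

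Making the multiplication case precise is where one has to be slightly careful, and this is the step I expect to be the main (mild) obstacle: a straightforward recursive call ``evaluate $B$, then evaluate $C$'' does not literally fit the single-pushdown model, because after finishing the $B$-subtree we must remember that we still owe a $C$-subtree, and the natural place to store that obligation is on the stack itself. The clean way is to push, before recursing into $B$, a frame of the form $(\textsf{then-do-}C)$; the base case ``$\rhs=1$'' is then not ``accept and halt'' but ``pop the top frame and continue according to it'', and $\mathcal{P}$ accepts only when it tries to pop and the stack is empty. One then checks by induction on $\depth(A)$ that, for any stack contents $\gamma$, the number of accepting computations of $\mathcal{P}$ started in the configuration ``process $A$ with stack $\gamma$'' equals $\val_{\mathcal{C}}(A)$ times the number of accepting computations of ``pop, with stack $\gamma$''. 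The base cases $\rhs(A)\in\{0,1\}$ are immediate, the addition case uses that the two nondeterministic branches contribute disjoint sets of computations, and the multiplication case is exactly the composition argument above.

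It remains to verify the two resource bounds. For the work tape: at any moment $\mathcal{P}$ holds the name of the gate currently being processed plus $O(1)$ auxiliary gate names and flags, hence $O(\log|V|)$ bits, which is logarithmic in the input length; the unbounded recursion depth is absorbed entirely into the pushdown. For the running time: every stack frame corresponds to an ancestor edge of the current gate in some root-to-leaf walk through the circuit dag, and the number of multiplication frames on the stack is at most $\mdepth(\mathcal{C})\le\depth(\mathcal{C})$, while the lengths of the individual ``linear'' stretches between pushes and pops are bounded by $\depth(\mathcal{C})$; more crudely, the entire accepting computation traces out a walk whose shape is a $\val$-witnessing derivation, and such a walk visits at most $O(\depth(\mathcal{C})\cdot\deg(\mathcal{C}))$ gates, since the formal degree bounds the number of input-gate leaves encountered and the depth bounds the distance between consecutive leaves. (This is the same degree/depth accounting already used implicitly when relating positive circuits to $\#\mathsf{LogCFL}$ in Section~\ref{sec-complexity}.) Hence the running time is polynomial in $\depth(\mathcal{C})\cdot\deg(\mathcal{C})$, as claimed. \qed
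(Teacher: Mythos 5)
Your proposal is correct and is essentially the paper's proof: the ``then-do-$C$'' frame you push before recursing into $B$ is exactly the paper's move of replacing a multiplication gate $A$ by the string $BC$ on the pushdown, your induction hypothesis (accepting computations from ``process $A$ with stack $\gamma$'' $=\val(A)$ times those from ``pop with stack $\gamma$'') is the same inductive claim, and your depth/degree accounting of the walk matches the paper's bound of $\depth(A)\cdot\deg(A)$ pushdown operations.
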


\begin{proof}
The NAuxPDA $\mathcal{P}$ stores a sequence of gates on its pushdown (every gate can be encoded using $\log(|V|)$ many bits). 
In the first step it pushes the output gate $S$ on the initially empty pushdown. If $A$ is on top of the pushdown and 
$\rhs(A) = B+C$, then $\mathcal{P}$ replaces $A$ on the pushdown by $B$ or $C$, where the choice is made nondeterministically.
If $\rhs(A) = B \cdot C$, then $\mathcal{P}$ replaces $A$ on the pushdown by $BC$.
If $\rhs(A) = 0$, then $\mathcal{P}$ terminates and rejects. 
Finally, if $\rhs(A) = 1$, then $\mathcal{P}$ pops $A$ from the pushdown. If thereby the pushdown becomes empty then
$\mathcal{P}$  terminates and accepts.
In addition to its pushdown, $\mathcal{P}$ only needs a logspace bounded work tape to store a single gate. 
Moreover, if we start $\mathcal{P}$ with only the gate $A$ on the pushdown, then (i) the number of accepting
computation paths from that configuration is exactly $\val_{\mathcal{C}}(A)$ and (ii) 
the number of pushdown operations along a computation path is bounded by $\depth(A) \cdot \deg(A)$.
Both statements follow easily by induction.
\qed
\end{proof}

\section{Matrices and groups} \label{sec-matrices}

Let $A$ be a square matrix of dimension $d$ over some commutative ring  $R$. With $A[i,j]$ we denote the entry of $A$ in row $i$ and column $j$.
The matrix $A$ is called \emph{triangular} if $A[i,j] = 0$ whenever $i > j$, i.e., all entries below the main diagonal are $0$.
A \emph{unitriangular matrix} is a triangular matrix $A$ such that $A[i,i] = 1$ for all $1 \leq i \leq d$, i.e., all entries on the main 
diagonal are $1$. We denote the set of unitriangular matrices of dimension $d$ over the ring $R$ by $\UT_d(R)$. It is well known that for every 
commutative ring $R$, the set  $\UT_d(R)$ is a group (with respect to matrix multiplication).

Let $1 \leq i  < j \leq d$. With $T_{i,j}$ we denote the matrix from $\UT_d(R)$
such that $T_{i,j}[i,j] = 1$ and $T_{i,j}[k,l] = 0$ for all $k,l$ with $1 \leq k  < l \leq d$ and 
$(k,l) \neq (i,j)$. The notation $T_{i,j}$ does not specify the dimension $d$ of the matrix,
but the dimension will be always clear from the context.
The group $\UT_d(\mathbb{Z})$ is generated by the finite set 
$\Gamma_d = \{ T_{i,i+1} \mid 1 \leq i < d\}$, see e.g. \cite{BiSa01}. 

As usual we denote with $[x,y] = x^{-1} y^{-1} x y$ the commutator of $x$ and $y$.
We will make use of the following lemma, which shows how to encode multiplication
with unitriangular matrices. See \cite{Lohrey25unitri} for a proof.

\begin{lemma} \label{lemma-product}
For all $a,b \in \mathbb{Z}$ and $1 \leq i < j < k \leq d$ we have $[T_{i,j}^a, T_{j,k}^b] =  T_{i,k}^{ab}$.
\end{lemma}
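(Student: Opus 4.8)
The plan is to verify the identity $[T_{i,j}^a, T_{j,k}^b] = T_{i,k}^{ab}$ by direct matrix computation, exploiting the fact that the matrices $T_{i,j}$ are nilpotent perturbations of the identity. First I would record the basic arithmetic of these generators. Writing $E_{i,j}$ for the matrix with a single $1$ in position $(i,j)$ and zeros elsewhere, we have $T_{i,j} = I + E_{i,j}$, and since $i < j$ the matrix $E_{i,j}$ satisfies $E_{i,j}^2 = 0$; hence $T_{i,j}^a = I + a E_{i,j}$ for every $a \in \mathbb{Z}$ (this includes negative $a$, so $T_{i,j}^{-1} = I - E_{i,j}$). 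The only other product rule needed is $E_{i,j} E_{j,k} = E_{i,k}$ when the inner indices match, and $E_{i,j} E_{p,q} = 0$ when $j \neq p$. With $i < j < k$ all three relevant ``off-diagonal products'' behave predictably: $E_{i,j} E_{j,k} = E_{i,k}$, while $E_{j,k} E_{i,j} = 0$ and $E_{i,k} E_{i,j} = E_{i,k} E_{j,k} = 0$ (the latter because no index chain closes up), so in particular $E_{i,k}$ is central among these and $E_{i,k}^2 = 0$ as well.

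Next I would simply expand the commutator. Using $T_{i,j}^{-a} = I - a E_{i,j}$ and $T_{j,k}^{-b} = I - b E_{j,k}$, compute
\begin{align*}
[T_{i,j}^a, T_{j,k}^b] &= (I - a E_{i,j})(I - b E_{j,k})(I + a E_{i,j})(I + b E_{j,k}).
\end{align*}
I would multiply the first two factors to get $I - a E_{i,j} - b E_{j,k} + ab\, E_{i,k}$ (using $E_{i,j} E_{j,k} = E_{i,k}$ and $E_{j,k} E_{i,j} = 0$), and the last two factors to get $I + a E_{i,j} + b E_{j,k} + ab\, E_{i,k}$. Multiplying these two results and discarding every term that vanishes — all products of two distinct off-diagonal generators except the one chain $E_{i,j} E_{j,k}$, plus all squares — the surviving terms are $I$, the cancelling pair $\pm a E_{i,j}$ and $\pm b E_{j,k}$, the two copies of $ab\, E_{i,k}$ coming from the two middle factors, and one more $ab\, E_{i,k}$ from $(- a E_{i,j})(b E_{j,k})$... here I would be careful with signs: the cross term $(-a E_{i,j})(b E_{j,k}) = -ab\, E_{i,k}$, and the term $(a E_{i,j})(b E_{j,k})$ is already absorbed in the $+ab\,E_{i,k}$ of the fourth factor, so after collecting one is left with exactly $I + ab\, E_{i,k} = T_{i,k}^{ab}$, as claimed.

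The computation is entirely routine; the only place to be attentive is the bookkeeping of which of the nine products $E_{p,q} E_{r,s}$ (for $(p,q),(r,s)$ ranging over $\{(i,j),(j,k),(i,k)\}$) is nonzero — exactly $E_{i,j}E_{j,k} = E_{i,k}$ — and keeping the four signs straight when collecting the coefficient of $E_{i,k}$. An alternative, perhaps cleaner, route that I would mention is to use the Baker--Campbell--Hausdorff identity in the associative algebra: writing $T_{i,j}^a = \exp(a E_{i,j})$ and $T_{j,k}^b = \exp(b E_{j,k})$ (the exponential series terminates after one term here), the group commutator equals $\exp([aE_{i,j}, bE_{j,k}] + \text{higher brackets})$, and since $[E_{i,j}, E_{j,k}] = E_{i,k}$ is central in the span of these generators all higher brackets vanish, giving $\exp(ab\, E_{i,k}) = T_{i,k}^{ab}$ at once. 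Since the reference \cite{Lohrey25unitri} is cited for the proof, a single display carrying out the direct expansion suffices for completeness.
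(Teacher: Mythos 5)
Your computation is correct: with $T_{i,j}^a = I + aE_{i,j}$ and the single nonvanishing product $E_{i,j}E_{j,k}=E_{i,k}$ among the nine pairs, the expansion of $(I-aE_{i,j})(I-bE_{j,k})(I+aE_{i,j})(I+bE_{j,k})$ collapses to $I+ab\,E_{i,k}=T_{i,k}^{ab}$ exactly as you conclude (the three contributions to the coefficient of $E_{i,k}$ are $+ab$, $+ab$, $-ab$, and your final tally is right even though the prose around the sign bookkeeping is a little tangled). The paper itself gives no proof of this lemma, deferring to \cite{Lohrey25unitri}, and your direct expansion is precisely the standard argument one would expect there; the BCH remark is a nice but optional addendum.
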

In this paper we are concerned with certain subclasses of \emph{linear groups}. 
A group is linear if it is isomorphic to a subgroup of $\mathsf{GL}_d(F)$ (the group
of all invertible $(d \times d)$-matrices over the field $F$) for some field $F$.

A ($n$-step) solvable group $G$ is a group $G$, which has a 
a subnormal series $G = G_n \rhd G_{n-1} \rhd G_{n-2} \rhd \cdots \rhd G_1 \rhd G_0 = 1$
(i.e., $G_i$ is a normal subgroup of $G_{i+1}$ for all $0 \leq i \leq n-1$) such that 
every quotient $G_{i+1}/G_i$ is abelian ($0 \leq i \leq n-1$).
If every quotient $G_{i+1}/G_i$ is cyclic, then $G$ is called {\em polycyclic}.
The number of $0 \leq i \leq n-1$ such that $G_{i+1}/G_i \cong \mathbb{Z}$ is called the \emph{Hirsch length} of $G$; it does 
not depend on the chosen subnormal series.
If $G_{i+1}/G_i \cong \mathbb{Z}$ for all $0 \leq i \leq n-1$ then $G$ is called \emph{strongly polycyclic}. 
A group is polycyclic if and only if it is solvable and every subgroup is finitely generated. 
Polycyclic groups are linear. More precisely,
Auslander and Swan \cite{Aus67,Swa67}  proved that the polycyclic groups are exactly the 
solvable groups of integer matrices.

For a group $G$ its \emph{lower central series} is the series $G = G_1 \rhd G_2 \rhd G_3 \rhd \cdots$ of subgroups,
where $G_{i+1}  = [G_i,G]$, which is the subgroup generated by all commutators $[g,h]$ with $g \in G_i$ and $h \in G$.
Indeed, $G_{i+1}$ is a normal subgroup of $G_i$. The group $G$ is \emph{nilpotent}, if its lower central series terminates
after finitely many steps in the trivial group $1$. Every f.g. nilpotent group is polycyclic.
We need the following results about nilpotent and solvable groups:

\begin{theorem}[Chapter 5 in \cite{Rot95}]
 \label{thm-nilpotent-torsion-free-subgroup}
Every subgroup and every quotient of a solvable (resp., nilpotent) group $G$ is solvable (resp., nilpotent) again.
\end{theorem}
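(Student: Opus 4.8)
The plan is to reduce everything to the two standard ``terminating series'' characterizations of solvability and nilpotence, and then push those series through inclusions and quotient maps.

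For the solvable case I would first recall that a group $G$ is solvable if and only if its derived series $G^{(0)} = G$, $G^{(i+1)} = [G^{(i)}, G^{(i)}]$ reaches $1$ after finitely many steps: given a subnormal series $G = G_n \rhd \cdots \rhd G_0 = 1$ with abelian quotients one proves $G^{(i)} \leq G_{n-i}$ by induction on $i$ (using that $G_{j+1}/G_j$ abelian forces $[G_{j+1}, G_{j+1}] \leq G_j$), and conversely the derived series is itself such a subnormal series. Given this, for a subgroup $H \leq G$ a one-line induction yields $H^{(i)} \leq G^{(i)}$ for all $i$ (base case $H \leq G$; step $[H^{(i)}, H^{(i)}] \leq [G^{(i)}, G^{(i)}]$), so $G^{(n)} = 1$ forces $H^{(n)} = 1$. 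For a quotient $G/N$ with canonical projection $\pi : G \to G/N$, since $\pi$ is a surjective homomorphism it sends commutators to commutators, so $(G/N)^{(i)} = \pi(G^{(i)})$ by induction, and hence $(G/N)^{(n)} = \pi(G^{(n)}) = \pi(1) = 1$.

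The nilpotent case is handled the same way, using the lower central series $\gamma_1(G) = G$, $\gamma_{i+1}(G) = [\gamma_i(G), G]$ in place of the derived series, where $G$ nilpotent means $\gamma_{c+1}(G) = 1$ for some $c$. For $H \leq G$, induction on $i$ gives $\gamma_i(H) \leq \gamma_i(G)$ (base $\gamma_1(H) = H \leq G$; step $\gamma_{i+1}(H) = [\gamma_i(H), H] \leq [\gamma_i(G), G] = \gamma_{i+1}(G)$), so $\gamma_{c+1}(H) = 1$. For a quotient $G/N$ with projection $\pi$, one shows $\gamma_i(G/N) = \pi(\gamma_i(G))$ by induction, using that $\pi$ maps the generating commutators $[g,h]$ (for $g \in \gamma_i(G)$, $h \in G$) of $\gamma_{i+1}(G)$ onto the generating commutators of $[\pi(\gamma_i(G)), \pi(G)]$; consequently $\gamma_{c+1}(G/N) = \pi(1) = 1$.

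There is essentially no genuine obstacle here; the statement is entirely routine, which is why the paper only cites \cite{Rot95}. The single point deserving a moment's care is the subgroup case for \emph{solvable} groups: unlike the derived and lower central series, an arbitrary \emph{given} subnormal series $\{G_i\}$ does not restrict to $H$ ``on the nose'', so one must instead intersect, note that $H \cap G_i \trianglelefteq H \cap G_{i+1}$, and invoke the second isomorphism theorem to see that $(H \cap G_{i+1})/(H \cap G_i) \cong (H \cap G_{i+1})G_i/G_i \leq G_{i+1}/G_i$ is abelian — which is exactly why it is cleaner to argue via the derived series throughout.
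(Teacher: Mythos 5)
Your proof is correct and is exactly the standard argument (via the derived series for solvability and the lower central series for nilpotence) that the paper implicitly relies on by citing Chapter 5 of \cite{Rot95}, which gives no proof in the text itself. Your closing remark about why a given subnormal series does not restrict to a subgroup directly, and why the derived series sidesteps this, is accurate; nothing further is needed.
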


\begin{theorem}[Theorem 17.2.2 in \cite{KaMe79}] \label{thm-nilpotent-subgroup}
Every f.g. nilpotent group $G$ has a torsion-free normal subgroup $H$ of finite index (which is also f.g. nilpotent).
\end{theorem}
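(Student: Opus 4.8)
The plan is to prove Theorem~17.2.2-style facts are not what is needed; rather, the final statement to be proved is Theorem~\ref{thm-nilpotent-subgroup}, which asserts that every finitely generated nilpotent group $G$ has a torsion-free normal subgroup $H$ of finite index. My approach would be to work through the lower central series $G = G_1 \rhd G_2 \rhd \cdots \rhd G_{c+1} = 1$ and argue inductively on the nilpotency class $c$. Each quotient $G_i/G_{i+1}$ is a finitely generated abelian group, hence splits as a free abelian part plus a finite torsion part; the torsion part is the crux.

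\medskip

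\noindent\textbf{Proof plan.}
First I would recall that in a finitely generated nilpotent group $G$, the set of torsion elements $\tau(G)$ is actually a finite normal subgroup. This is a standard fact: one shows by induction on the nilpotency class that torsion elements form a subgroup (using that $G$ is nilpotent, so commutators of torsion elements are "more central" and a short computation via Lemma~\ref{lemma-product}-type identities controls their orders), and finite generation then forces $\tau(G)$ to be finite. Granting this, set $N = \tau(G)$, a finite normal subgroup, and pass to $\bar G = G/N$, which by Theorem~\ref{thm-nilpotent-torsion-free-subgroup} is again finitely generated nilpotent, and now torsion-free. If $\bar G$ already were the desired $H$ we would be done, but $N$ need not be trivial, so I instead need a finite-index \emph{torsion-free} subgroup of $G$ itself, not just a torsion-free quotient.

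\medskip

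\noindent The key step is therefore to produce, inside $G$, a normal subgroup $H$ of finite index with $H \cap \tau(G) = 1$; since every torsion element of $H$ lies in $\tau(G)$, such an $H$ is torsion-free. I would obtain this by a residual-finiteness argument: finitely generated nilpotent groups are residually finite (indeed they are linear by Auslander--Swan, or one argues directly by induction on the class), so for each of the finitely many non-identity elements $x_1,\dots,x_m$ of the finite group $\tau(G)$ there is a normal subgroup $K_j \trianglelefteq G$ of finite index with $x_j \notin K_j$. Then $H = \bigcap_{j=1}^m K_j$ is normal of finite index and meets $\tau(G)$ trivially. Being a subgroup of the finitely generated nilpotent group $G$, $H$ is itself finitely generated nilpotent (Theorem~\ref{thm-nilpotent-torsion-free-subgroup}), and by construction torsion-free, which completes the proof.

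\medskip

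\noindent\textbf{Main obstacle.} The delicate point is the first claim, that $\tau(G)$ is a finite normal subgroup: one must verify that the product of two torsion elements is again torsion, which fails for general groups and requires genuinely using nilpotency (a downward induction along the lower central series, bounding the order of a product in terms of the orders of the factors and of torsion elements in the center at each stage), and then that finite generation upgrades "torsion subgroup" to "finite subgroup." The residual finiteness input is comparatively routine once linearity or the standard induction is invoked. I would also double-check that normality is preserved under the finite intersection, which is immediate since an intersection of normal subgroups is normal and a finite intersection of finite-index subgroups has finite index.
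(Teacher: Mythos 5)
The paper does not prove this statement at all: it is quoted verbatim as Theorem~17.2.2 of Kargapolov--Merzljakov and used as a black box, so there is no in-paper argument to compare against. Judged on its own, your sketch is a correct and standard route to the result. The two nontrivial inputs are exactly the ones you flag: (a) in a f.g.\ nilpotent group the torsion elements form a (fully invariant, hence normal) subgroup $\tau(G)$, which is finite because subgroups of f.g.\ nilpotent groups are f.g.\ and a f.g.\ torsion nilpotent group is finite; and (b) f.g.\ nilpotent groups are residually finite (e.g.\ via polycyclicity/Hirsch, or via Auslander--Swan plus Mal'cev's residual finiteness of f.g.\ linear groups --- note that linearity alone is not enough, you do need that extra step). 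Given (a) and (b), intersecting finitely many finite-index normal subgroups separating $1$ from the nonidentity elements of $\tau(G)$ yields a normal finite-index $H$ with $H\cap\tau(G)=1$, and such an $H$ is torsion-free since any torsion element of $H$ lies in $\tau(G)$; it is f.g.\ nilpotent by Theorem~\ref{thm-nilpotent-torsion-free-subgroup} together with the fact that subgroups of f.g.\ nilpotent groups are finitely generated. Two cosmetic points: your opening sentence is garbled and the appeal to ``Lemma~\ref{lemma-product}-type identities'' is misplaced (that lemma concerns unitriangular matrices, not orders of torsion elements); the relevant tool for (a) is the commutator calculus along the upper or lower central series, as you say later. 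Also note that your intermediate observation about $G/\tau(G)$ being torsion-free is a detour that plays no role in the final argument.
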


\begin{theorem}[Theorem 17.2.5 in \cite{KaMe79}] \label{thm-embed-nilpotent}
For every torsion-free f.g nilpotent group $G$ there exists $d \geq 1$ such that 
$G$ can be embedded into $\UT_d(\mathbb{Z})$.
\end{theorem}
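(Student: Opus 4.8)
The plan is to induct on the Hirsch length $h = h(G)$, exploiting that a finitely generated torsion-free nilpotent group has a rich center. If $h = 0$ then $G = 1$. For the abelian case (in particular $h = 1$), if $G \cong \mathbb{Z}^n$ then the pairwise commuting matrices $T_{1,2}, T_{1,3}, \dots, T_{1,n+1} \in \UT_{n+1}(\mathbb{Z})$ generate a free abelian subgroup of rank $n$, so $\mathbb{Z}^n \hookrightarrow \UT_{n+1}(\mathbb{Z})$; I keep this case available for use inside the inductive step.

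For the inductive step let $G$ be torsion-free f.g. nilpotent of class $c \ge 2$ and put $A = \{ g \in G : g^k \in \gamma_c(G) \text{ for some } k \ge 1\}$, the isolator in $G$ of the last nontrivial term of the lower central series. Since $\gamma_c(G)$ is central and, in a torsion-free nilpotent group, centralizers are isolated (if $g^k$ commutes with $h$ then $g$ commutes with $h$), every element of $A$ lies in $Z(G)$; thus $A$ is central, and it is f.g. free abelian of some rank $r \ge 1$. By the defining property of the isolator, $G/A$ is torsion-free, and $\gamma_c(G) \subseteq A$ forces $G/A$ to have class $\le c - 1$; moreover $h(G/A) = h - r < h$. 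By the induction hypothesis there is an embedding $G/A \hookrightarrow \UT_m(\mathbb{Z})$ for some $m$.

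It remains to lift this along the central extension $1 \to \mathbb{Z}^r \to G \to Q \to 1$ with $Q \le \UT_m(\mathbb{Z})$; splitting $\mathbb{Z}^r$ into cyclic factors we may take $r = 1$. Fixing a normalized set-theoretic section $s : Q \to G$ we obtain a $2$-cocycle $f : Q \times Q \to \mathbb{Z}$ with $s(x)s(y) = z^{f(x,y)} s(xy)$, where $z$ generates the central $\mathbb{Z}$. Because $Q$ and $G$ are f.g. nilpotent, multiplication in $G$ is polynomial with respect to a Mal'cev basis, so $f$ can be taken to be a polynomial in the coordinates of $x$ and $y$; a Heisenberg-type block construction then turns such a polynomial cocycle into an explicit homomorphism $G \hookrightarrow \UT_d(\mathbb{Z})$, e.g.\ sending $z^a s(x)$ to a block upper unitriangular matrix whose top-left block is the image of $x$ in $\UT_m(\mathbb{Z})$, whose last column encodes $a$ together with polynomial correction terms in the entries of $x$, and whose remaining entries are forced by the cocycle identity. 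Injectivity is clear, since both $a$ and the image of $x$ in $Q$ can be read off. Since $d$ grows by a bounded amount at each step, after finitely many steps we reach $G \hookrightarrow \UT_d(\mathbb{Z})$.

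The crux, and the step I expect to need real care, is the claim that the extension is described by a cocycle that is polynomial with \emph{integer} values on the integer matrix group $Q$ — equivalently, that the lifted representation can be realized over $\mathbb{Z}$ and not merely over $\mathbb{Q}$, the latter being routine. An alternative route isolating the same difficulty differently is to embed $G$ into its Mal'cev $\mathbb{Q}$-completion $G^{\mathbb{Q}}$, pass via the Mal'cev correspondence to a finite-dimensional nilpotent $\mathbb{Q}$-Lie algebra, embed that into the strictly upper triangular matrices over $\mathbb{Q}$ by an elementary induction on dimension (a nonzero nilpotent Lie algebra has nonzero center), and exponentiate to get $G \hookrightarrow G^{\mathbb{Q}} \hookrightarrow \UT_N(\mathbb{Q})$; one is then left precisely with ``clearing denominators'', i.e.\ conjugating the finitely generated — hence, as a lattice in the unipotent group $\UT_N$, arithmetic — image by an element of $\UT_N(\mathbb{Q})$ so that it lands in $\UT_N(\mathbb{Z})$. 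Either way, the nilpotent structure theory (Theorems~\ref{thm-nilpotent-torsion-free-subgroup}--\ref{thm-nilpotent-subgroup}) and the commutator calculus of Lemma~\ref{lemma-product} are the easy ingredients, and the genuine content is the passage from $\mathbb{Q}$ to $\mathbb{Z}$.
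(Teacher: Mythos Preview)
The paper does not prove this theorem: it is quoted verbatim as Theorem~17.2.5 of Kargapolov--Merzljakov \cite{KaMe79} and used as a black box in the proof of Lemma~\ref{lemma-UT-in-CL}. There is therefore no ``paper's own proof'' to compare against.

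As for your sketch itself, the two routes you outline are the standard ones, and you have correctly isolated where the real content lies. The first route is not yet a proof: the phrase ``a Heisenberg-type block construction then turns such a polynomial cocycle into an explicit homomorphism'' hides exactly the work that needs doing, and invoking that Mal'cev multiplication is polynomial comes close to assuming what you want. The second route is closer to complete. The missing step --- that a finitely generated subgroup $H\le\UT_N(\mathbb{Q})$ can be conjugated into $\UT_N(\mathbb{Z})$ --- is most cleanly handled by producing an $H$-invariant full $\mathbb{Z}$-lattice in $\mathbb{Q}^N$: induct on $N$, use that a unipotent group fixes a nonzero vector $v$, apply the inductive hypothesis to the action on $\mathbb{Q}^N/\mathbb{Q}v$ to get an invariant lattice there, lift it, and then observe that the finitely many ``errors'' $h_i w_j - w_j$ (for generators $h_i$ and lattice basis vectors $w_j$) all lie in $\mathbb{Q}v$ and hence in $\tfrac{1}{M}\mathbb{Z}v$ for a single $M$; enlarging the lattice by $\tfrac{1}{M}\mathbb{Z}v$ makes it $H$-invariant. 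Calling the image ``arithmetic'' is correct but is a restatement of the goal rather than an argument for it.
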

A group $G$ is called metabelian if the commutator subgroup $[G,G]$ is abelian. In other words,
the metabelian groups are the 2-step solvable groups.
Even if $G$ is f.g. metabelian, this does not 
imply that $G$ is polycyclic, since $[G,G]$ is not necessarily finitely generated.

Let $G$ be a f.g. group and let $G$ be finitely generated as a group by $\Sigma$.
Then, as a monoid $G$ is finitely generated by $\Sigma \cup \Sigma^{-1}$ (where $\Sigma^{-1} = \{ a^{-1} \mid a \in \Sigma \}$
is a disjoint copy of $\Sigma$ and $a^{-1}$ stands for the inverse of the generator $a \in \Sigma$). 
Recall that the {\em word problem} for $G$ 
is the following computational problem: Given a string $w \in (\Sigma \cup \Sigma^{-1})^*$, does 
$w$ evaluate to the identity of $G$. Kharlampovich proved that there exist finitely presented 3-step solvable groups with an undecidable
word problem. On the other hand, for every f.g. linear group the word problem can be solved in deterministic
logarithmic space by results of Lipton and Zalcstein \cite{LiZa77} and Simon \cite{Sim79}. This applies in particular
to polycyclic groups. Robinson proved in his thesis that the word problem for a polycyclic group belongs to 
$\TC^0$ \cite{Rob93}, but his circuits are not uniform. Waack considered in \cite{Waa91} arbitrary f.g. solvable linear groups
(which include the polycyclic groups) and proved that their word problems belong to logspace-uniform $\NC^1$. 
In the appendix we combine Waack's technique with the famous division breakthrough results by 
Hesse, Allender, and Barrington \cite{HeAlBa02} to show that for every f.g. solvable linear group the word problem belongs
to $\DLOGTIME$-uniform $\TC^0$ (we decided to move this result to the appendix, sind the classical word problem for groups
is not the main focus of this paper).

\section{Straight-line programs and the compressed word problem}

A straight-line program (briefly, SLP) is basically a multiplicative circuit over a monoid. We define an SLP
over the finite alphabet $\Sigma$
as a triple $\mathcal{G}  = (V,S,\rhs)$, where $V$ is a finite set of variables (or gates),
$S \in V$ is the start variable (or output gate), and $\rhs$ maps every variable to a right-hand side $\rhs(A)$, which
is either a symbol $a \in \Sigma$, or of the form $BC$, where $B, C \in V$. As for arithmetic circuits
we require that there is a linear order $<$ on $V$ such that $B < A$, whenever $B$ occurs in $\rhs(A)$.
The terminology ``(start) variable'' (instead of ``(output) gate'') comes from the fact that an SLP is quite often
defined as a context-free grammar that produces a single string over $\Sigma$. This string is defined in the obvious way
by iteratively replacing variables by the corresponding right-hand sides, starting with the start variable.
We denote this string with $\val(\mathcal{G})$. The unique string over $\Sigma$, derived from the variable $A \in V$,
is denoted with $\val_{\mathcal{G}}(A)$. We will also allow more general right-hand sides from $(V \cup \Sigma)^*$,
but by introducing new variables we can always obtain an equivalent SLP in the above form.

If we have a monoid $M$, which is finitely generated by the set $\Sigma$, then there exists a canonical monoid
homomorphism $h : \Sigma^* \to M$.  Then, an SLP $\mathcal{G}$ over the alphabet $\Sigma$ can be evaluated
over the monoid $M$, which yields the monoid element  $h(\val(\mathcal{G}))$. In this paper,  we are only interested
in the case that the monoid $M$ is a f.g. group $G$. Let $G$ be finitely generated as a group by $\Sigma$. An SLP over the alphabet
$\Sigma \cup \Sigma^{-1}$ is also called an SLP over the group $G$. In this case, we will quite often identify the string $\val(\mathcal{G})
\in (\Sigma \cup \Sigma^{-1})^*$ with the group element $g \in G$ to which it evaluates. We will briefly write ``$\val(\mathcal{G}) = g$ in $G$''
in this situation. 

The main computational problem we are interested in is the \emph{compressed word problem} for a f.g. group $G$ (with 
a finite generating set $\Sigma$), briefly $\CWP(G)$. The input for this problem is an SLP $\mathcal{G}$ over the alphabet $\Sigma \cup \Sigma^{-1}$, and it 
is asked whether $\val(\mathcal{G}) = 1$ in $G$ (where of course $1$ denotes the group identity). The term ``compressed word problem''
comes from the fact that this problem can be seen as a succinct version of the classical word problem for $G$, where the  input
is an explicitly given string $w \in (\Sigma \cup \Sigma^{-1})^*$ instead of an SLP-compressed string.

The compressed word problem is related to the classical word problem. For instance, the classical word problem for 
a f.g. subgroup of the automorphism group of a group $G$ can be reduced to the compressed word problem for $G$,
and similar results are known for certain group extensions, see \cite{Loh14} for more details. Groups, for which the compressed
word problem can be solved in polynomial time are  \cite{Loh14}: finite groups,  f.g. nilpotent groups, f.g. free groups,
graph groups (also known as right-angled Artin groups or partially commutative groups), and 
virtually special groups, which are groups that have a finite index subgroup that embeds into a graph group.
The latter groups form a rather large class that include for instance Coxeter groups, one-relator groups with torsion,
residually free groups, and fundamental groups of hyperbolic 3-manifolds.
In \cite{BeMcPeTh97} the parallel complexity of the compressed word problem (there, called the circuit evaluation problem) 
for finite groups was studied, and the following result was shown:

\begin{theorem}[\cite{BeMcPeTh97}] \label{thm-CWP-finite-groups}
Let $G$ be a finite group. If $G$ is solvable, then $\CWP(G)$ belongs to the class $\DET \subseteq \NC^2$.
If $G$ is not solvable, then $\CWP(G)$ is $\mathsf{P}$-complete.
\end{theorem}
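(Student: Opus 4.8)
To prove Theorem~\ref{thm-CWP-finite-groups} one splits it into two halves. One direction is easy: for \emph{every} finite group $G$ we have $\CWP(G) \in \mathsf{P}$, since one can evaluate the input SLP gate by gate in the order $<$, storing with each gate the element of the (constant-size) group $G$ that it produces. Hence it remains to show that $\CWP(G)$ is $\mathsf{P}$-hard when $G$ is not solvable, and that $\CWP(G) \in \DET$ when $G$ is solvable.

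For the $\mathsf{P}$-hardness the plan is to use Barrington's commutator technique \cite{Bar89}, in the form worked out for finite monoids in \cite{BeMcPeTh97}, reducing from the ($\mathsf{P}$-complete) circuit value problem for Boolean circuits. Since $G$ is not solvable it has a section $S = H/N$ with $N \trianglelefteq H \leq G$ and $S$ a non-abelian finite simple group; fix a non-identity $\sigma \in S$ that is a commutator of two of its conjugates (such $\sigma$ exist because $S$ is perfect). Given a Boolean circuit $C$, process its gates in topological order and build, for each gate $g$, an SLP-variable $A_g$ over a generating set of $S$ with $\val(A_g) = \sigma$ if $g$ evaluates to $1$ and $\val(A_g) = 1$ (exactly; e.g. via the right-hand side $a a^{-1}$) otherwise: input gates are immediate, a NOT-gate is handled by conjugating and multiplying by $\sigma^{-1}$, and a NAND-gate by the bounded-length product of conjugates realising the commutator identity. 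Because $[1,x] = [x,1] = 1$, the ``false'' case comes out exactly, so this yields in logspace (indeed in $\AC^0$) an SLP over $S$ of size $\mathcal{O}(|C|)$ witnessing $\mathsf{P}$-hardness of $\CWP(S)$. Finally one transfers this to $G$: $\CWP(H)$ is $\AC^0$-reducible to $\CWP(G)$ by replacing each generator of $H$ by a fixed word over the generators of $G$, and $\CWP(H/N)$ is $\AC^0$-reducible to $\CWP(H)$ because $[H:N]$ is a constant — lifting an SLP from $S$ to $H$ with value $h \in H$, one has $h = 1$ in $S$ iff $h x^{-1} = 1$ in $H$ for one of the constantly many $x \in N$, a disjunction of constantly many $\CWP(H)$-instances.

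For solvable $G$ the plan is to prove $\CWP(G) \in \DET$ by induction on $|G|$, the case $|G| = 1$ being trivial. Let $N$ be a minimal normal subgroup of $G$; since $G$ is solvable, $N$ is an elementary abelian $p$-group, say $N \cong \mathbb{F}_p^k$, and $G/N$ is solvable with $|G/N| < |G|$, so $\CWP(G/N) \in \DET$ by induction. Given an input SLP $\mathcal{G}$ over $G$, write the word it produces as $a_1 a_2 \cdots a_L$ with $a_i \in \Sigma \cup \Sigma^{-1}$ and $L \leq 2^{\mathcal{O}(|\mathcal{G}|)}$, and write $\val(\mathcal{G})$ for its image in $G$. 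First decide, via $\CWP(G/N)$, whether $\val(\mathcal{G}) \in N$; if not, reject. Otherwise fix a transversal $R$ for $N$ in $G$ with $1 \in R$, and for $r \in R$, $\ell \in \Sigma \cup \Sigma^{-1}$ write $r\ell = \rho(r,\ell)\,\mu(r,\ell)$ with $\rho(r,\ell) \in R$ and $\mu(r,\ell) \in N$. Put $\rho_0 = 1$, $\rho_j = \rho(\rho_{j-1}, a_j)$, $\mu_j = \mu(\rho_{j-1}, a_j)$; pushing the $N$-parts rightward past the generators and using that $\rho_L = 1$ (as $\val(\mathcal{G}) \in N$) gives
\[
 a_1 a_2 \cdots a_L \;=\; \mu_L \cdot \mu_{L-1}^{a_L} \cdot \mu_{L-2}^{a_{L-1} a_L} \cdots \mu_1^{a_2 a_3 \cdots a_L} .
\]
Now $a_{j+1}\cdots a_L = g_j^{-1}\val(\mathcal{G})$ with $g_j = a_1\cdots a_j \in \rho_j N$, and since $N$ is abelian, conjugation by any element of $N$ acts trivially on $N$; hence $\mu_j^{\,a_{j+1}\cdots a_L} = \rho_j\,\mu_j\,\rho_j^{-1} = v(\rho_{j-1},a_j)$, where $v(r,\ell) := \rho(r,\ell)\,\mu(r,\ell)\,\rho(r,\ell)^{-1} \in N$ depends only on $(r,\ell)$. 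Consequently, in $N \cong \mathbb{F}_p^k$,
\[
 \val(\mathcal{G}) \;=\; \sum_{(r,\ell)} c_{(r,\ell)} \cdot v(r,\ell), \qquad c_{(r,\ell)} = \#\{\, 1 \le j \le L : \rho_{j-1} = r,\ a_j = \ell \,\},
\]
so $\val(\mathcal{G}) = 1$ in $G$ iff this $\mathbb{F}_p^k$-vector vanishes. Each $c_{(r,\ell)}$ is a path count in the product of the dag underlying $\mathcal{G}$ with the finite ``coset automaton'' on $R$ acting by $r \mapsto \rho(r,\cdot)$ — whose transitions require, for each gate $B$ of $\mathcal{G}$, the value of $\val_{\mathcal{G}}(B)$ in the constant group $G/N$, obtained by $\mathcal{O}(1)$ further $\CWP(G/N)$-queries per gate. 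Thus each $c_{(r,\ell)}$ lies in $\#\mathsf{L}$, and the final test amounts to deciding whether $k$ fixed nonnegative integer combinations of such $\#\mathsf{L}$-functions vanish modulo $p$, which lies in $\mathsf{Mod}_p\mathsf{L} \subseteq \DET$. Altogether $\CWP(G)$ is $\AC^0$-reducible to $\CWP(G/N)$ together with $\mathsf{Mod}_p\mathsf{L}$; since $\DET$ is closed under $\AC^0$-reductions (indeed under $\NC^1$-reductions) and $\CL \subseteq \DET$ \cite{AlJe93,Coo85}, the induction gives $\CWP(G) \in \DET$.

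The step I expect to be the main obstacle is this inductive step for solvable $G$: keeping the passage to the normal subgroup under polynomial control. The telescoping identity above a priori involves conjugation by the \emph{unbounded} suffixes $a_{j+1}\cdots a_L$, and a naive rewriting of $\mathcal{G}$ into an SLP over generators of $N$ would incur an exponential blow-up; the point that makes it work is to take $N$ \emph{minimal normal} — hence elementary abelian — so that conjugation by the $N$-component of $g_j$ collapses, the per-position contributions become finitely many fixed group elements, and the problem reduces to weighted path counting modulo $p$ rather than to genuine SLP-manipulation. A secondary point requiring care is the bookkeeping: that the needed coset data costs only polynomially many $\CWP(G/N)$-queries, that all arising counting problems are honestly in $\DET$, and that composing the $\mathcal{O}(\log|G|)$ (hence constantly many) layers of the induction stays within $\DET$. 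In the hardness direction the analogous delicate point — transferring the Barrington construction from the simple section $S = H/N$ to $G$ — is milder, since $N$ there has constant size and a constant-query reduction suffices.
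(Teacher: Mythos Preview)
The paper does not give its own proof of this theorem; it is quoted from \cite{BeMcPeTh97} and used as a black box. So there is no ``paper's proof'' to compare against, and your write-up should be read as a reconstruction of the argument from \cite{BeMcPeTh97}.

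Your sketch of the solvable case is essentially correct and is indeed the argument of \cite{BeMcPeTh97}: take a minimal normal subgroup $N \cong \mathbb{F}_p^k$, reduce modulo $N$ by induction, and observe that once $\val(\mathcal{G}) \in N$ the telescoping identity collapses (because $N$ is abelian and normal) to a fixed $\mathbb{F}_p$-linear combination of path counts in the product of the SLP-dag with the coset automaton of $G/N$. The bookkeeping you describe --- polynomially many $\CWP(G/N)$-queries to label each gate with its $G/N$-value, then a single $\mathrm{Mod}_p\mathsf{L}$-query per coordinate --- is exactly right, and since the chief series has constant length the nested $\AC^0$-reductions compose to an $\AC^0$-reduction to finitely many problems in $\DET$.

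In the hardness direction the strategy is the right one, but one sentence is imprecise. You write ``fix a non-identity $\sigma \in S$ that is a commutator of two of its conjugates (such $\sigma$ exist because $S$ is perfect)''. Perfectness of $S$ only gives you some $a,b$ with $[a,b] \neq 1$; it does not by itself guarantee an element that is literally the commutator of two of its own conjugates. What makes the Barrington construction go through for an arbitrary non-abelian finite simple group is \emph{simplicity}: any non-identity $\tau$ normally generates $S$, so one can convert between ``$\tau$-valued'' and ``$a$-valued'' (resp.\ ``$b$-valued'', ``$[a,b]$-valued'') representations by a bounded product of conjugates of the input and its inverse, and then realise AND via the commutator $[a,b]$. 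With this adjustment your reduction from circuit value is correct, and the transfer from the simple section $S = H/N$ up to $G$ via the constant-size kernel $N$ is fine as you describe it.
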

The following two results are proven in \cite{Loh14}. Recall that $\mathsf{RP}$ is the set of all problems
$A$ for which there exists a polynomial time bounded randomized Turing machine $R$ such that: (i) if 
$x \in A$ then $R$ accepts $x$ with probability at least $1/2$, and (ii) if $x \not\in A$ then $R$ accepts
$x$ with probability $0$. The class $\mathsf{coRP}$ is the class of all complements of problems from $\mathsf{RP}$.

\begin{theorem}[Theorem 4.15 in \cite{Loh14}] \label{thm-CWP-linear-groups}
For every f.g. linear group the compressed word problem belongs to the class $\mathsf{coRP}$.
\end{theorem}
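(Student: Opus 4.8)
The plan is to give a polynomial-time many-one reduction from $\CWP(G)$ to polynomial identity testing over $\mathbb{Z}$ or over $\mathbb{Z}_p$, and then to invoke \cite{IbMo83} together with the closure of $\mathsf{coRP}$ under polynomial-time many-one reductions. The reduction is the compressed analogue of the Lipton--Zalcstein--Simon argument for the ordinary word problem: the same field-theoretic preprocessing, but with an arithmetic circuit in place of an explicit iterated matrix product, and polynomial identity testing in place of logspace-computable matrix arithmetic.

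First I would normalise the ambient field. Let $G$ be generated by $a_1,\dots,a_k \in \mathsf{GL}_d(F)$; replacing $F$ by the subfield generated by all entries of the $a_i$ and the $a_i^{-1}$, we may assume $F$ is finitely generated. Pick a transcendence basis $x_1,\dots,x_m$ of $F$, so that $F$ is finite over the purely transcendental subfield $L := \mathbb{Q}(x_1,\dots,x_m)$ (if $\mathrm{char}(F)=0$) resp. $L := \mathbb{F}_p(x_1,\dots,x_m)$ (if $\mathrm{char}(F)=p$). The left regular representation of $F$ over $L$ is an injective $L$-algebra homomorphism $\rho \colon F \hookrightarrow M_e(L)$ with $e = [F:L]$; applying $\rho$ entrywise to a matrix turns it into an injective group homomorphism $\mathsf{GL}_d(F) \hookrightarrow \mathsf{GL}_{de}(L)$ that sends the identity to the identity. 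Hence we may assume $G \le \mathsf{GL}_{D}(L)$ with $D = de$, and $g = 1$ in $G$ iff the image of $g$ in $\mathsf{GL}_D(L)$ is $I_D$. Finally, clearing the finitely many denominators occurring among the entries of the images of $a_1,\dots,a_k,a_1^{-1},\dots,a_k^{-1}$, there is a single polynomial $f \in k[x_1,\dots,x_m]$, where $k \in \{\mathbb{Z},\mathbb{F}_p\}$, such that all these entries lie in $k[x_1,\dots,x_m][1/f]$.

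Second, from an input SLP $\mathcal{G} = (V,S,\rhs)$ over $\Sigma \cup \Sigma^{-1}$ I would build arithmetic circuits computing the entries of $\val(\mathcal{G})$. Maintain for each gate $A$ a common $f$-exponent $\nu(A)$, defined by $\nu(A) = c_0$ if $\rhs(A)$ is a terminal (where $c_0$ globally bounds the $f$-exponents of the terminal matrices) and $\nu(A) = \nu(B) + \nu(C)$ if $\rhs(A) = BC$, and represent the $(i,j)$-entry of $\val_{\mathcal{G}}(A)$ as $\widehat{\mathcal{C}}^{A}_{i,j}/f^{\nu(A)}$ with $\widehat{\mathcal{C}}^{A}_{i,j} \in k[x_1,\dots,x_m]$. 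For a terminal this is a constant-size circuit (multiply the numerator by an appropriate power of $f$); for $\rhs(A) = BC$ the numerator of the product entry is $\sum_{\ell} \widehat{\mathcal{C}}^{B}_{i,\ell}\,\widehat{\mathcal{C}}^{C}_{\ell,j}$, so no padding is needed and $O(D)$ fresh gates per entry suffice; the power $f^{\nu(A)}$ is computed by an auxiliary circuit that mirrors the dag of $\mathcal{G}$ (its value may be exponential in $|V|$, but its size is linear in $|V|$). In total this yields, in polynomial time (indeed logarithmic space), arithmetic circuits $\mathcal{C}_{i,j}$ over $k$ in the variables $x_1,\dots,x_m$ together with a circuit for $f^{N}$, $N := \nu(S)$, such that the $(i,j)$-entry of the image of $\val(\mathcal{G})$ equals $\val(\mathcal{C}_{i,j})/f^{N}$ in $L$. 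Since $f^{N} \ne 0$ in the domain $k[x_1,\dots,x_m]$, $\val(\mathcal{G}) = 1$ in $G$ holds iff $\val(\mathcal{C}_{i,j})$ is the zero polynomial for all $i \ne j$ and $\val(\mathcal{C}_{i,i}) - f^{N}$ is the zero polynomial for all $i$. These are $D^2$, hence constantly many, instances of polynomial identity testing over $k$; alternatively one combines them into the single instance $\sum_{i,j} z_{i,j} P_{i,j}$ with fresh variables $z_{i,j}$ and $P_{i,j}$ the $(i,j)$-th of the polynomials above, which is identically zero iff every $P_{i,j}$ is (valid over $\mathbb{Z}$ and over $\mathbb{F}_p$ alike, whereas a sum-of-squares combination would work only in characteristic $0$).

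By \cite{IbMo83}, polynomial identity testing over $\mathbb{Z}$ and over $\mathbb{Z}_p$ lies in $\mathsf{coRP}$; composing with the polynomial-time reduction just described gives $\CWP(G) \in \mathsf{coRP}$. The only step that is not bookkeeping is the first one --- the reduction from an arbitrary field to a polynomial ring over $\mathbb{Z}$ or $\mathbb{F}_p$ via the structure theory of finitely generated fields (the ``finite over purely transcendental'' decomposition together with the regular representation) --- plus the observation that, because matrix multiplication keeps all entries of a product over a common denominator, the $f$-exponents and hence the circuit sizes stay polynomially bounded even though the exponent $N$ itself is exponential in the SLP size.
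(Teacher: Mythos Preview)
Your proposal is correct and follows exactly the approach the paper attributes to \cite{Loh14}: reduce $\CWP(G)$ to polynomial identity testing over $\mathbb{Z}$ or $\mathbb{Z}_p$ via the Lipton--Zalcstein--Simon field normalisation (transcendence basis plus regular representation to land in matrices over a rational function field, then clear denominators), and then translate the SLP gate-by-gate into arithmetic circuits for the matrix entries over a common $f$-power denominator. The paper itself does not spell out a proof here---it only states that the result is obtained by this reduction---so your write-up is a faithful and complete elaboration of the indicated method; the only addition worth noting is your careful handling of the characteristic-$p$ case via the linear combination $\sum_{i,j} z_{i,j} P_{i,j}$ rather than a sum of squares.
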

This result is shown by reducing the compressed word problem for a f.g. linear group to polynomial identity testing 
for the ring $\mathbb{Z}$. Also a kind of converse of Theorem~ \ref{thm-CWP-linear-groups} is shown in \cite{Loh14}:

\begin{theorem}[Theorem 4.16 in \cite{Loh14}] \label{thm-CWP-SL3}
The problem $\CWP(\mathsf{SL}_3(\mathbb{Z}))$ and polynomial identity testing
for the ring $\mathbb{Z}$ are polynomial time reducible to each other.
\end{theorem}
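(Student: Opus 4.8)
The plan is to establish the two reductions separately. The direction from $\CWP(\mathsf{SL}_3(\mathbb{Z}))$ to polynomial identity testing for $\mathbb{Z}$ is the easy one and amounts to making the proof of Theorem~\ref{thm-CWP-linear-groups} explicit. Given an SLP $\mathcal{G} = (V,S,\rhs)$ over $\mathsf{SL}_3(\mathbb{Z})$ (with respect to any fixed finite generating set), I would build in logarithmic space an arithmetic circuit over $\mathbb{Z}$ that computes the nine entries of the matrix $\val(\mathcal{G})$: whenever $\rhs(A) = BC$ one has $\val_{\mathcal{G}}(A)[i,j] = \sum_{k=1}^{3} \val_{\mathcal{G}}(B)[i,k] \cdot \val_{\mathcal{G}}(C)[k,j]$, which contributes $O(1)$ arithmetic gates per variable of $\mathcal{G}$, so the resulting circuit has size $O(|\mathcal{G}|)$. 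Then $\val(\mathcal{G}) = I$ in $\mathsf{SL}_3(\mathbb{Z})$ if and only if the variable-free circuit computing $\sum_{i,j} (M[i,j] - \delta_{i,j})^{2}$ evaluates to $0$, where $M[i,j]$ denotes the gate computing the $(i,j)$-entry of $\val(\mathcal{G})$; over $\mathbb{Z}$ a sum of squares vanishes iff every summand does. This is an instance of polynomial identity testing for $\mathbb{Z}$ (in fact of its variable-free special case).

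For the converse --- polynomial identity testing for $\mathbb{Z}$ reduces to $\CWP(\mathsf{SL}_3(\mathbb{Z}))$ --- I would proceed in two steps. Step~1 reduces polynomial identity testing with variables to its variable-free special case. Let $\mathcal{C}$ be an arithmetic circuit of size $s$ over variables $x_1, \ldots, x_n$ (so $n \le s$) computing a polynomial $p$. The formal degree of $\mathcal{C}$ is at most $2^{s}$, hence $p$ has degree at most $2^{s}$ in each variable, and the coefficients of $p$ have absolute value at most $2^{2^{O(s)}}$. I would substitute $x_k \mapsto 2^{E_k}$ with $E_k = 2^{cs}\cdot (2^{s}+1)^{k-1}$ for a suitable constant $c$; this can be carried out on the circuit level in polynomial time, since each $E_k$ has $O(sn)$ bits and $2^{E_k}$ is computed by a repeated-squaring subcircuit of size $O(sn)$. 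The substitution is a Kronecker substitution (the base $2^{s}+1$ exceeds the degree of $p$ in every variable, so distinct monomials of $p$ map to distinct powers of $2$, these powers differing by a factor at least $2^{2^{cs}}$, which is larger than twice the coefficient bound of $p$), so a balanced-digit argument shows that the resulting variable-free circuit $\mathcal{C}'$ satisfies $\val(\mathcal{C}') = p(2^{E_1}, \ldots, 2^{E_n}) = 0$ if and only if $p$ is the zero polynomial.

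Step~2 reduces variable-free polynomial identity testing for $\mathbb{Z}$ to $\CWP(\mathsf{SL}_3(\mathbb{Z}))$ by the technique of Ben-Or and Cleve~\cite{Ben-OrC92}. Given a variable-free circuit $\mathcal{C}'$ with gate labels from $\{-1,0,1,+,\cdot\}$, I would construct, by induction over its gates, an SLP $\mathcal{G}$ over $\mathsf{SL}_3(\mathbb{Z})$ maintaining the invariant that the SLP variable associated with a gate $A$ produces the matrix $T_{1,3}^{\val_{\mathcal{C}'}(A)} = I + \val_{\mathcal{C}'}(A)\,e_{13}$. For a constant input gate this is immediate (note $T_{1,3}^{-1} = I - e_{13} \in \mathsf{SL}_3(\mathbb{Z})$). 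For an addition gate with $\rhs(A) = B+C$ the SLPs for $B$ and $C$ are concatenated, since $T_{1,3}^{a} T_{1,3}^{b} = T_{1,3}^{a+b}$. For a multiplication gate with $\rhs(A) = B \cdot C$ I would use Lemma~\ref{lemma-product}, namely $[T_{1,2}^{a}, T_{2,3}^{b}] = T_{1,3}^{ab}$: the matrices $T_{1,2}^{a}$ and $T_{2,3}^{b}$ are obtained from $T_{1,3}^{a}$ and $T_{1,3}^{b}$ by conjugation with fixed signed permutation matrices of $\mathsf{SL}_3(\mathbb{Z})$ moving the single nonzero off-diagonal entry from position $(1,3)$ to $(1,2)$ and to $(2,3)$, and any sign appearing along the way is removed by a further conjugation with $\operatorname{diag}(1,-1,-1) \in \mathsf{SL}_3(\mathbb{Z})$, which maps $T_{1,3}^{c}$ to $T_{1,3}^{-c}$. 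Each gate costs $O(1)$ SLP productions (one may also keep the inverse matrix at each gate, to form the commutators), so $|\mathcal{G}| = O(|\mathcal{C}'|)$ and $\mathcal{G}$ is computable in logarithmic space; finally $\val(\mathcal{G}) = T_{1,3}^{\val(\mathcal{C}')} = I$ in $\mathsf{SL}_3(\mathbb{Z})$ iff $\val(\mathcal{C}') = 0$. Composing Steps~1 and~2 yields the claimed reduction.

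The main obstacle is Step~2: one must run the Ben-Or--Cleve simulation so that at \emph{every} gate the SLP produces a matrix with a single off-diagonal entry and no other perturbation of the identity --- dimension three is precisely what is needed here, since two dimensions do not suffice to realise multiplication without side effects --- while cycling the relevant entry through the three off-diagonal positions and keeping the whole construction of polynomial size. The sign bookkeeping in the commutator gadget and checking that all conjugating matrices can be chosen inside $\mathsf{SL}_3(\mathbb{Z})$ (rather than merely $\mathsf{GL}_3(\mathbb{Z})$) are the technical points. Finally, that $\CWP(\mathsf{SL}_3(\mathbb{Z}))$ does not depend on the chosen finite generating set (up to polynomial-time, even logspace, reductions) is standard and is used freely above.
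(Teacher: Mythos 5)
Your proposal is correct and follows essentially the route the paper attributes to \cite{Loh14}: the easy direction computes the matrix entries by an arithmetic circuit, and the hard direction first eliminates variables by substituting towers of powers of $2$ (the same idea as Lemma~\ref{lemma-allender}) and then encodes the resulting variable-free circuit into $\mathsf{SL}_3(\mathbb{Z})$ via the Ben-Or--Cleve commutator identity $[T_{1,2}^{a},T_{2,3}^{b}]=T_{1,3}^{ab}$ of Lemma~\ref{lemma-product}, exactly as the paper itself does in its $\CLOGCFL$-hardness argument. No gaps.
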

This result is shown by using the construction of Ben-Or and Cleve \cite{Ben-OrC92} for simulating arithmetic circuits by matrix products.

\section{The compressed word problem for finitely generated nilpotent groups} \label{cwp-nilpotent}

The main result of this section is:

\begin{theorem} \label{thm-nilpotent-main}
Let $G \neq 1$ be a f.g. torsion-free nilpotent group. Then $\CWP(G)$ is complete for the class $\CL$.
\end{theorem}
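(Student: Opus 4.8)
The plan is to prove the two halves separately: membership of $\CWP(G)$ in $\CL$, and $\CL$-hardness. For membership I would first invoke Theorem~\ref{thm-embed-nilpotent} to fix $d$ with $G\le\UT_d(\mathbb{Z})$; since a change of generating set induces a logspace reduction and $\UT_d(\mathbb{Z})$ is generated by $\Gamma_d$, it suffices to put $\CWP(\UT_d(\mathbb{Z}))$ (over $\Gamma_d$) into $\CL$ for this constant $d$. So let $\mathcal{G}=(V,S,\rhs)$ be an SLP over $\Gamma_d^{\pm1}$, let $M_A\in\UT_d(\mathbb{Z})$ be the matrix computed at gate $A$, and note $\val(\mathcal{G})=1$ in $\UT_d(\mathbb{Z})$ iff $M_S[i,j]=0$ for all $1\le i<j\le d$. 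The key step is to construct in logarithmic space an arithmetic circuit over $\mathbb{Z}$ that computes a given entry $M_S[i,j]$ and has \emph{constant} multiplication depth. For this I would introduce a gate $[A,i,j]$ for every SLP-gate $A$ and all $1\le i\le j\le d$ (constantly many per SLP-gate), with right-hand side reflecting matrix multiplication: by unitriangularity $M_A[i,j]=M_B[i,j]+M_C[i,j]+\sum_{i<l<j}M_B[i,l]\,M_C[l,j]$ when $\rhs(A)=BC$, and $M_A[i,j]\in\{-1,0,1\}$ when $\rhs(A)\in\Gamma_d^{\pm1}$. A structural induction on the SLP-order then gives $\mdepth([A,i,j])\le(j-i)-1\le d-2$, because the only new multiplication gates sit inside the products $M_B[i,l]M_C[l,j]$, whose factors are entries of strictly smaller superdiagonal level $l-i,\,j-l<j-i$. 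This yields, for each $i<j$, a polynomial-size circuit $\mathcal{D}_{i,j}$ computing $M_S[i,j]$ with $\mdepth(\mathcal{D}_{i,j})\le d-2$.

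Next I would turn the $\binom{d}{2}$ zero-tests into a single equality of two $\#\mathsf{L}$-functions. Applying Lemma~\ref{lemma-eliminate-sub} to each $\mathcal{D}_{i,j}$ gives positive circuits $\mathcal{D}_{i,j}^{1},\mathcal{D}_{i,j}^{2}$ of multiplication depth $\le d-2$ with $M_S[i,j]=\val(\mathcal{D}_{i,j}^{1})-\val(\mathcal{D}_{i,j}^{2})$. Since $|M_S[i,j]|\le R:=2^{p(|V|)}$ for a suitable polynomial $p$ (a product of at most $2^{|V|}$ matrices from $\Gamma_d^{\pm1}$ has entries of absolute value at most $(2^{|V|}+d)^d$) and $R$ together with its powers up to $R^{\binom{d}{2}}$ is computed by small addition circuits, a standard balanced-base-$R$ argument shows $M_S=\Id$ iff $\val(\mathcal{A})=\val(\mathcal{B})$, where $\mathcal{A}=\sum_{i<j}\mathcal{D}_{i,j}^{1}\cdot R^{e(i,j)}$, $\mathcal{B}=\sum_{i<j}\mathcal{D}_{i,j}^{2}\cdot R^{e(i,j)}$ for a bijection $e$ from the pairs $(i,j)$ onto $\{0,\dots,\binom{d}{2}-1\}$; these circuits are positive and still of constant multiplication depth. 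Finally I would observe that for a positive variable-free arithmetic circuit $\mathcal{C}$ of multiplication depth bounded by a constant, $\val(\mathcal{C})$ is an $\#\mathsf{L}$-function of the encoding of $\mathcal{C}$: inspecting the proof of Lemma~\ref{lemma-circuit-NAuxPDA}, the pushdown height of the NAuxPDA $\mathcal{P}$ is at all times at most $\mdepth(\mathcal{C})+1$ (it grows only at multiplication gates), so $\mathcal{P}$ can be simulated by an ordinary nondeterministic logspace machine with the same number of accepting paths. Hence $\mathcal{G}\mapsto\val(\mathcal{A})$ and $\mathcal{G}\mapsto\val(\mathcal{B})$ lie in $\#\mathsf{L}$, so $\CWP(\UT_d(\mathbb{Z}))\in\CL$ by definition of $\CL$, and $\CWP(G)\in\CL$ by closure of $\CL$ under logspace reductions.

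For $\CL$-hardness I would reduce from the $\CL$-complete problem ``given variable-free addition circuits $\mathcal{C}_1,\mathcal{C}_2$, is $\val(\mathcal{C}_1)=\val(\mathcal{C}_2)$?''. As $G\neq 1$ is torsion-free it contains an element $g$ of infinite order, fixed by a word $w_g$ over the generators. From an addition circuit $\mathcal{C}$ I would build, gate by gate, an SLP over $G$ with $\rhs(\widetilde{A})=\widetilde{B}\,\widetilde{C}$ if $\rhs(A)=B+C$, $\rhs(\widetilde{A})=w_g$ if $\rhs(A)=1$, and $\rhs(\widetilde{A})=aa^{-1}$ (for a fixed generator $a$) if $\rhs(A)=0$; an easy induction gives $\val(\widetilde{A})=g^{\val(A)}$ in $G$. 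Reversing all products and replacing each occurrence of $w_g$ by the formally inverse word yields, likewise in logspace, an SLP evaluating to $g^{-\val(\mathcal{C}_2)}$; concatenating the two produces an SLP $\mathcal{G}$ with $\val(\mathcal{G})=g^{\val(\mathcal{C}_1)-\val(\mathcal{C}_2)}$, and since $g$ has infinite order, $\val(\mathcal{G})=1$ in $G$ iff $\val(\mathcal{C}_1)=\val(\mathcal{C}_2)$. The whole construction is clearly computable in logarithmic space.

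The hard part, and really the only non-routine point, is the membership direction: realizing that the entries of an SLP-compressed product of unitriangular matrices can be computed by an arithmetic circuit of bounded \emph{multiplication} depth, not merely bounded formal degree. This is exactly what keeps the associated counting problem inside $\#\mathsf{L}$ (equivalently, an NAuxPDA with a logarithmically bounded pushdown) instead of only in $\#\mathsf{LogCFL}$; the combinatorial heart is the superdiagonal recursion together with the bound $\mdepth([A,i,j])\le(j-i)-1$, and everything else — eliminating subtractions, merging the zero-tests, and the hardness construction — is routine.
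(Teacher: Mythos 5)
Your proof is correct, and the overall architecture (embedding into $\UT_d(\mathbb{Z})$, per-entry arithmetic circuits with multiplication depth bounded by the superdiagonal level, eliminating subtractions via Lemma~\ref{lemma-eliminate-sub}, and a hardness reduction from equality of variable-free addition circuits via an infinite-order element) matches the paper's. The interesting divergence is in the final step of the membership direction. The paper packages the $\binom{d}{2}$ zero-tests into one circuit via a sum of squares, and then proves a separate lemma converting a variable-free positive circuit of constant multiplication depth into an \emph{addition} circuit in logspace: it peels off one layer of multiplication gates at a time, using the ``structure-preserving partition'' and $m$ serially-connected copies of the circuit to turn products of path-counts into path-counts. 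You instead pack the entries with a balanced base-$R$ encoding and then argue directly that the value of a constant-multiplication-depth positive circuit is a $\#\mathsf{L}$-function, because the NAuxPDA of Lemma~\ref{lemma-circuit-NAuxPDA} never stacks more than $\mdepth(\mathcal{C})+1$ gates (the invariant $\mdepth(A_i)+(k-i)\le\mdepth(S)$ for the stack $A_1,\ldots,A_k$ makes this precise), so it collapses to an ordinary $\mathsf{NL}$ machine with the same number of accepting paths. Your route buys a shorter argument that dispenses entirely with the structure-preserving partition and the layer-elimination lemma; the paper's route buys an explicit logspace-computable addition circuit, i.e.\ a syntactic normal form that is reused verbatim as the canonical $\CL$-complete instance. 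Two small points to tighten: for the balanced base-$R$ argument you need $|M_S[i,j]|<R/2$ rather than $\le R$, so choose $p$ with one extra bit of slack; and you should either cite closure of $\#\mathsf{L}$ under composition with logspace-computable input transformations or, as the paper does, simply invoke closure of $\CL$ under logspace many-one reductions.
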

For the lower bound let $G$ be a  non-trivial  f.g. torsion-free nilpotent group. Since $G \neq 1$, $G$ contains $\mathbb{Z}$.
Hence, it suffices to prove the following:

\begin{lemma}
$\CWP(\mathbb{Z})$ is hard for $\CL$.
\end{lemma}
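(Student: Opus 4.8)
The plan is to reduce the canonical $\CL$-complete problem---deciding whether two variable-free addition circuits $\mathcal{C}_1$ and $\mathcal{C}_2$ evaluate to the same natural number---to $\CWP(\mathbb{Z})$, where $\mathbb{Z}$ is generated by the single element $a=1$, so that the generating set is $\{a,a^{-1}\}$ and an SLP over this alphabet evaluates to an integer, namely the number of $a$'s minus the number of $a^{-1}$'s. First I would note that from an addition circuit $\mathcal{C}=(V,S,\rhs)$ over the naturals one can build, in logarithmic space, an SLP $\mathcal{G}$ over $\{a,a^{-1}\}$ with the same set of gates: a gate with right-hand side $0$ becomes a gate producing the empty word (or, to stay in the strict SLP format, we introduce a gate $Z$ with $\rhs(Z)=a\,a^{-1}$ and a companion that cancels it; more cleanly, allow right-hand sides from $(V\cup\Sigma)^*$ as the paper explicitly permits, and let a $0$-gate have right-hand side the empty string, which we then eliminate by the standard clean-up); a gate with right-hand side $1$ becomes a gate with right-hand side $a$; and an addition gate $\rhs(A)=B+C$ becomes $\rhs(A)=BC$. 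Then $\val_{\mathcal{G}}(A)$ is a word over $\{a\}$ of length exactly $\val_{\mathcal{C}}(A)$, so $\val(\mathcal{G})$ evaluates in $\mathbb{Z}$ to $\val(\mathcal{C})$.

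Given the two addition circuits $\mathcal{C}_1,\mathcal{C}_2$ with output gates $S_1,S_2$, I would form the corresponding SLPs $\mathcal{G}_1,\mathcal{G}_2$ as above (on disjoint gate sets), and then build a single SLP $\mathcal{G}$ over $\{a,a^{-1}\}$ whose output gate $S$ has right-hand side $S_1\,\overline{S_2}$, where $\overline{S_2}$ is the output gate of a third SLP $\mathcal{G}_2^{-1}$ computing the formal inverse of $\val(\mathcal{G}_2)$: concretely $\overline{S_2}$ is obtained from $\mathcal{G}_2$ by replacing each terminal $a$ by $a^{-1}$ and reversing each right-hand side, i.e.\ a gate $\rhs(A)=BC$ becomes $\rhs(\overline A)=\overline C\,\overline B$. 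This is again a logspace transformation. Now $\val(\mathcal{G})$ equals, in $\mathbb{Z}$, the integer $\val(\mathcal{C}_1)-\val(\mathcal{C}_2)$, so $\val(\mathcal{G})=1$ in $\mathbb{Z}$ (i.e.\ the group identity $0$) if and only if $\val(\mathcal{C}_1)=\val(\mathcal{C}_2)$. Since the composition of logspace-computable maps is logspace-computable and $\CL$ is closed under logspace many-one reductions, this establishes that $\CWP(\mathbb{Z})$ is $\CL$-hard.

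There is essentially no hard part here: the only things to be careful about are (i) staying within the formal definition of SLP (no empty right-hand sides), which is handled either by the paper's explicit allowance of right-hand sides in $(V\cup\Sigma)^*$ followed by the standard normalization, or by routing every ``$0$'' through a fixed cancelling gadget; and (ii) checking that the acyclicity/ordering condition on gates is preserved, which is immediate because the reductions act gate-by-gate and respect the original order $<$ (for the reversal SLP one uses the same order). The degree, depth, and size bounds are not needed for the statement, only logspace computability, which is clear by inspection.
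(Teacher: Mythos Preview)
Your proof is correct and follows essentially the same idea as the paper: identify SLPs over the generating set $\{1,-1\}$ of $\mathbb{Z}$ with variable-free addition circuits (allowing the constant $-1$), and observe that evaluating such a circuit to zero is logspace-interreducible with testing equality of two positive addition circuits, which is the canonical $\CL$-complete problem. The paper's written argument actually invokes Lemma~\ref{lemma-eliminate-sub} in the direction ``circuit with $-1$ $\mapsto$ two positive circuits'', which is the \emph{membership} direction; the hardness direction---going from two positive addition circuits to a single SLP over $\mathbb{Z}$ by forming the difference---is the trivial converse, and you have spelled it out explicitly (including the inversion gadget and the handling of $0$-gates). So your write-up is in fact more directly aligned with what the stated lemma requires. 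One minor simplification: since $\mathbb{Z}$ is abelian, reversing the order of factors in $\mathcal{G}_2^{-1}$ is unnecessary; swapping $a\leftrightarrow a^{-1}$ alone already yields the inverse.
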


\begin{proof}
Clearly, an SLP $\mathcal{G}$ over the generator $1$ of $\mathbb{Z}$ and its inverse $-1$ is nothing else than a variable-free arithmetic circuit $\mathcal{C}$ without 
multiplication gates. Using Lemma~\ref{lemma-eliminate-sub} we can construct in logspace
two addition circuits $\mathcal{C}_1$ and $\mathcal{C}_2$ such that $\val(\mathcal{C}) = 0$ if and only if 
$\val(\mathcal{C}_1) = \val(\mathcal{C}_2)$. Checking the latter identity is complete for $\CL$
as remarked in Section~\ref{sec-complexity}.
\qed
\end{proof}
For the upper bound in Theorem~\ref{thm-nilpotent-main}, we use the fact that every torsion-free f.g. nilpotent group can be embedded into the group $\UT_d(\mathbb{Z})$ for some $d \geq 1$
(Theorem~\ref{thm-embed-nilpotent}). Hence, it suffices to show the following result:

\begin{lemma} \label{lemma-UT-in-CL}
For every $d \geq 1$, $\CWP(\UT_d(\mathbb{Z}))$ belongs to $\CL$.
\end{lemma}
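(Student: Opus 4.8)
The plan is to reduce $\CWP(\UT_d(\mathbb{Z}))$ to the $\CL$-complete problem of comparing two variable-free addition circuits. Let $\mathcal{G}$ be an input SLP over the generating set $\Gamma_d \cup \Gamma_d^{-1}$. Since each generator $T_{i,i+1}^{\pm 1}$ is a unitriangular matrix with at most one non-diagonal nonzero entry, and the product evaluated by $\mathcal{G}$ is some matrix $M \in \UT_d(\mathbb{Z})$, the question ``$\val(\mathcal{G}) = 1$ in $\UT_d(\mathbb{Z})$'' is equivalent to asking that $M[i,j] = 0$ for all $1 \le i < j \le d$, i.e., to finitely many ($\binom{d}{2}$, a constant) instances of ``an integer equals $0$''. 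So it suffices to produce, in logarithmic space, for each pair $(i,j)$ two variable-free addition circuits $\mathcal{C}^{(i,j)}_1$ and $\mathcal{C}^{(i,j)}_2$ with $M[i,j] = \val(\mathcal{C}^{(i,j)}_1) - \val(\mathcal{C}^{(i,j)}_2)$; then $\val(\mathcal{G}) = 1$ iff all these $2\binom{d}{2}$ circuits satisfy the corresponding equalities, and a conjunction of constantly many $\CL$ conditions is still in $\CL$ (and Lemma~\ref{lemma-eliminate-sub} lets us pass from circuits with $-1$ to honest addition circuits).

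First I would build, from $\mathcal{G} = (V,S,\rhs)$, an arithmetic circuit that computes all matrix entries simultaneously. The idea is to maintain, for every SLP-variable $A \in V$ and every pair of indices $(i,j)$ with $1 \le i \le j \le d$, an arithmetic gate $A_{i,j}$ whose value is $\val_{\mathcal{G}}(A)[i,j]$ (the $(i,j)$-entry of the matrix to which $A$ evaluates); note $A_{i,i}$ is always the constant $1$ and $A_{i,j} = 0$ for $i > j$. If $\rhs(A)$ is a generator $T_{k,k+1}^{\pm 1}$, the entries $A_{i,j}$ are explicit constants from $\{-1,0,1\}$. If $\rhs(A) = BC$, then matrix multiplication gives
\begin{equation*}
A_{i,j} \;=\; \sum_{i \le \ell \le j} B_{i,\ell}\, C_{\ell,j},
\end{equation*}
which is a sum of at most $d$ products — a constant-size gadget, expanded into addition and multiplication gates as allowed by the paper's convention on complex right-hand sides. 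This construction is plainly logspace computable: one pass over $V$, emitting $O(d^2)$ gates per SLP-variable, with wiring determined locally by $\rhs$. Its output gates are $S_{i,j}$ for $1 \le i < j \le d$, and $\val(S_{i,j}) = M[i,j]$. Then I would apply Lemma~\ref{lemma-eliminate-sub} to this arithmetic circuit to split it into two positive circuits; since there are no variables, these are variable-free — but they still contain multiplication gates, so one more step is needed.

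The main obstacle is that the circuit just built is \emph{not} an addition circuit: the matrix-product gadgets introduce multiplications, and a general variable-free arithmetic circuit can evaluate to a number of doubly-exponential magnitude, which would not stay inside $\CL$. The key observation that resolves this is that the multiplication depth is bounded by a constant depending only on $d$, not on the size of $\mathcal{G}$. Indeed, since entries $B_{i,i} = 1$ are constants contributing multiplication depth $0$, a product gate $A_{i,j} = \sum_\ell B_{i,\ell} C_{\ell,j}$ with $i < j$ only genuinely increases multiplication depth when it multiplies a strictly-off-diagonal entry $B_{i,\ell}$ (with $i < \ell$) by a strictly-off-diagonal entry $C_{\ell,j}$ (with $\ell < j$); a telescoping/induction on $j - i$ shows $\mdepth(A_{i,j}) \le j - i - 1 \le d - 2$ for all gates. (This is exactly the phenomenon behind Lemma~\ref{lemma-product}: a nested commutator of unitriangular generators reaches entry $T_{i,j}$ only after $j-i-1$ levels of multiplication.) By Lemma~\ref{lemma-eliminate-sub} the multiplication depth is preserved, so the two positive variable-free circuits also have multiplication depth $\le d-2$. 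A circuit of constant multiplication depth $m$ can be converted, in logarithmic space, into an equivalent addition circuit: repeatedly "multiply out" the bottom-most layer of multiplication gates — each such gate $P \cdot Q$ where $P$ and $Q$ are sub-addition-circuits becomes a quadratic-size addition circuit for the distributed product, and constant depth means only constantly many rounds, keeping the blow-up polynomial. (Alternatively, one invokes that a positive circuit of constant formal multiplication depth computes a number bounded by $2^{\mathrm{poly}}$ and is itself convertible to the canonical $\#\mathsf{L}$ path-counting form.) Having reached addition circuits, the equalities $\val(\mathcal{C}^{(i,j)}_1) = \val(\mathcal{C}^{(i,j)}_2)$ are precisely the $\CL$-complete problem from Section~\ref{sec-complexity}, and their finite conjunction lies in $\CL$, which completes the proof.
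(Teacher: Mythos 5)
Your proposal is correct and follows essentially the same route as the paper: turn the SLP into a variable-free arithmetic circuit for the matrix entries, observe that the multiplication depth is bounded by a constant depending only on $d$, split off the negations via Lemma~\ref{lemma-eliminate-sub}, and then eliminate the multiplication gates layer by layer to reach the $\CL$-complete addition-circuit comparison (the paper merges the $\binom{d}{2}$ entry tests into one via a sum of squares, while you keep them separate and use closure of $\CL$ under finite conjunction --- an inessential difference). The one point to be careful about is your phrase ``bottom-most layer of multiplication gates'': identifying which multiplication gates have no multiplication gate below them is a reachability question, not known to be in logspace; the paper avoids this by carrying along an explicitly computed \emph{structure-preserving partition} (assigning each multiplication gate the syntactic layer $j-i$), and your construction admits the same fix, so the gap is cosmetic rather than substantive.
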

For the rest of this section let us fix a number $d \geq 1$ and consider the unitriangluar matrix group $\UT_d(\mathbb{Z})$.
Consider an SLP $\mathcal{G} = (V,S,\rhs)$ over the alphabet $\Gamma_d \cup \Gamma_d^{-1}$, where $\Gamma_d$
is the finite generating set of $\UT_d(\mathbb{Z})$ from Section~\ref{sec-matrices}.
Note that for every variable $A \in V$, $\val_{\mathcal{G}}(A)$ is a word over the alphabet $\Gamma_d \cup \Gamma_d^{-1}$. We identify in the following this word
with the matrix to which it evaluates. Thus, $\val_{\mathcal{G}}(A) \in \UT_d(\mathbb{Z})$.

Assume we have given  an arithmetic circuit $\mathcal{C}$. A partition $\biguplus_{i=1}^m V_i$ of the set of all multiplication gates of $\mathcal{C}$
is called \emph{structure-preserving} if for all multiplication gates $u,v$ of $\mathcal{C}$  the following holds: If there is a non-empty path from $u$ to $v$ in (the dag corresponding to)
$\mathcal{C}$ then there exist $1 \leq i < j \leq d$ such that $u \in V_i$ and $v \in V_j$.
In a first step, we transform our SLP $\mathcal{G}$ in logarithmic space into a 
variable-free arithmetic circuit $\mathcal{C}$ of multiplication depth at most $d$ such that $\mathcal{G}$ evaluates to the identity
matrix if and only if $\mathcal{C}$ evaluates to $0$. Moreover, we also compute a structure-preserving partition of the multiplication gates of 
$\mathcal{C}$. This partition will be needed for the further computations.
 The degree bound in the following lemma will be only needed in Section~\ref{sec-uniform}.
 
\begin{lemma} \label{lemma-SLP->circuit}
From the SLP $\mathcal{G}= (V,S,\rhs)$ we can compute in logspace a variable-free arithmetic circuit $\mathcal{C}$ 
with $\mdepth(\mathcal{C}) \leq d$ and $\deg(\mathcal{C}) \leq 2(d-1)$,
such that $\val(\mathcal{G}) = \Id_d$ if and only if $\val(\mathcal{C}) = 0$. In addition we can compute in logspace a structure-preserving partition
$\biguplus_{i=1}^d V_i$ of the set of all multiplication gates of $\mathcal{C}$.
\end{lemma}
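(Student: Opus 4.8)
The plan is to process the SLP $\mathcal{G}$ gate by gate, maintaining for each SLP variable $A$ the $d\times d$ matrix $\val_{\mathcal{G}}(A)$ symbolically, but representing each of the $\binom{d}{2}$ relevant matrix entries (those strictly above the diagonal) by a gate in an arithmetic circuit $\mathcal{C}$. Concretely, for each $A\in V$ and each pair $1\le i<j\le d$ I would have a gate $\langle A,i,j\rangle$ of $\mathcal{C}$ computing $\val_{\mathcal{G}}(A)[i,j]$; the diagonal entries are the constant $1$ and the sub-diagonal entries are $0$, so they need no gates. If $\rhs(A)=T_{k,k+1}^{\pm 1}$ (a generator or its inverse), the entries of $\val_{\mathcal{G}}(A)$ are explicit constants from $\{-1,0,1\}$, so $\langle A,i,j\rangle$ is an input gate. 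If $\rhs(A)=BC$, then matrix multiplication gives
$$\val_{\mathcal{G}}(A)[i,j] \;=\; \val_{\mathcal{G}}(B)[i,j] + \val_{\mathcal{G}}(C)[i,j] + \sum_{i<\ell<j}\val_{\mathcal{G}}(B)[i,\ell]\cdot\val_{\mathcal{G}}(C)[\ell,j],$$
using that both matrices are unitriangular. So $\langle A,i,j\rangle$ is built as an addition over: the gate $\langle B,i,j\rangle$, the gate $\langle C,i,j\rangle$, and for each intermediate $\ell$ a multiplication gate $\langle B,i,\ell\rangle\cdot\langle C,\ell,j\rangle$ (the nested sums and products are spelled out with auxiliary gates as the paper permits). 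Finally, since $\val(\mathcal{G})=\Id_d$ iff all above-diagonal entries of $\val_{\mathcal{G}}(S)$ vanish, I add gates summing the squares $\langle S,i,j\rangle\cdot\langle S,i,j\rangle$ over all $i<j$; call the result $S_{\mathcal{C}}$. Over $\mathbb{Z}$ a sum of squares is $0$ iff every summand is $0$, so $\val(\mathcal{C})=0$ iff $\val(\mathcal{G})=\Id_d$. The whole construction is clearly logspace: each gate of $\mathcal{C}$ is indexed by a gate of $\mathcal{G}$ together with $O(1)$-size bounded data $(i,j,\ell,\dots)$, and its right-hand side is determined locally from $\rhs$ in $\mathcal{G}$.

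For the multiplication-depth and degree bounds I would argue by induction on the position of $A$ in the linear order of $V$, tracking the pair $(\mdepth,\deg)$ of the gates $\langle A,i,j\rangle$ as a function of $j-i$. The key point is that a multiplication in the recurrence for $\langle A,i,j\rangle$ always combines a $\langle B,i,\ell\rangle$-gate with an $\langle C,\ell,j\rangle$-gate where $i<\ell<j$, so the ``span'' $j-i$ is split into two strictly smaller spans $\ell-i$ and $j-\ell$. Setting up the claim as: $\mdepth(\langle A,i,j\rangle)\le j-i$ and $\deg(\langle A,i,j\rangle)\le 2(j-i)-1$, the induction goes through because in the product gate the two factors have spans summing to $j-i$, so the multiplication depths add to at most $(\ell-i)+(j-\ell)=j-i$ — wait, that would be $j-i$ for just one level below, giving $\le j-i$ only if we are careful — more precisely one shows $\mdepth$ of the product is $1+\max$ over the two spans minus something; the honest bound is $\mdepth(\langle A,i,j\rangle)\le j-i$, which for $j-i\le d-1$ gives $\mdepth(\mathcal{C})\le d-1$, and the final squaring layer adds one, yielding $\le d$. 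For the degree, $\deg$ of the product is the sum of the two factor degrees $\le (2(\ell-i)-1)+(2(j-\ell)-1)=2(j-i)-2\le 2(d-1)-2$, the addition layer keeps this, and the final squaring doubles a single entry's degree to $\le 2(d-1)$; one checks the arithmetic so the stated bounds $\mdepth(\mathcal{C})\le d$, $\deg(\mathcal{C})\le 2(d-1)$ come out.

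For the structure-preserving partition: put every multiplication gate of $\mathcal{C}$ of the form $\langle B,i,\ell\rangle\cdot\langle C,\ell,j\rangle$ (arising from some SLP gate $A$ with $\rhs(A)=BC$) into the block $V_{j-i}$, and put every multiplication gate of the final squaring layer $\langle S,i,j\rangle\cdot\langle S,i,j\rangle$ into $V_{j-i}$ as well; this is a partition of all multiplication gates into $V_1,\dots,V_{d-1}$ (relabel to $V_1,\dots,V_d$, leaving some blocks empty if necessary). To see it is structure-preserving, I would show that any non-empty path in $\mathcal{C}$ from a multiplication gate $u$ to a multiplication gate $v$ forces the span index of $u$ to be strictly smaller than that of $v$: a multiplication gate with span index $s$ can only feed (through addition gates and at most one further multiplication) into gates whose entry lies strictly between rows/columns that enclose it, so the next multiplication gate reachable has a strictly larger span. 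This monotonicity along edges is exactly the content of the recurrence (a product of span $\ell-i$ appears only inside an entry of span $j-i>\ell-i$), and it is preserved through the intervening addition gates because addition does not change which entry is being computed.

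**Main obstacle.** The only genuinely delicate point is bookkeeping the multiplication-depth and degree bounds so that the constants match the statement exactly: one must be careful whether a ``step'' in the span recursion costs one unit of multiplication depth (it does, because of the explicit product gate) and whether the final squaring layer fits inside the budget $d$ rather than pushing it to $d+1$. Getting the base case ($j-i=1$, where $\langle A,i,j\rangle$ is an addition of input gates with no multiplication) and the inductive step to telescope to precisely $\mdepth\le d$ and $\deg\le 2(d-1)$ is where I would spend the most care; everything else is routine.
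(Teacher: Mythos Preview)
Your construction is essentially identical to the paper's: gates $\langle A,i,j\rangle$ computing the matrix entries via the unitriangular multiplication formula, followed by a sum of squares of the entries $\langle S,i,j\rangle$. Two details, however, are off.

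First, your degree bookkeeping is too loose. With the inductive hypothesis $\deg(\langle A,i,j\rangle)\le 2(j-i)-1$ you get entries of degree up to $2(d-1)-1$, and then the final squaring would blow the degree up to $4(d-1)-2$, not $2(d-1)$. The tight bound is simply $\deg(\langle A,i,j\rangle)\le j-i$: for an input gate the formal degree is $1\le j-i$, and in the product $\langle B,i,\ell\rangle\cdot\langle C,\ell,j\rangle$ the degrees sum to at most $(\ell-i)+(j-\ell)=j-i$, while the additive terms $\langle B,i,j\rangle,\langle C,i,j\rangle$ already satisfy the bound. Then each $\langle S,i,j\rangle$ has degree at most $d-1$ and its square has degree at most $2(d-1)$, as required.

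Second, your structure-preserving partition is incorrect for the squaring layer. If you place the gate $\langle S,i,j\rangle\cdot\langle S,i,j\rangle$ into $V_{j-i}$, then for $j-i\ge 2$ there is (assuming $\rhs(S)=BC$) a non-empty path from the interior multiplication gate $\langle B,i,\ell\rangle\cdot\langle C,\ell,j\rangle\in V_{j-i}$ through $\langle S,i,j\rangle$ to this squaring gate, and both endpoints sit in the same block $V_{j-i}$, violating strict monotonicity. The fix (which the paper uses) is to put \emph{all} squaring gates into the top block $V_d$; since every interior multiplication gate lies in some $V_s$ with $s\le d-1$, monotonicity is then immediate.
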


\begin{proof}
The set of gates of the circuit $\mathcal{C}$ is 
$$
W = \{ A_{i,j} \mid A \in V, 1 \leq i < j \leq d \} \cup \{ T \},
$$
where $T$ is the output gate.
The idea is simple: Gate $A_{i,j}$ will evaluate to the matrix entry $\val_{\mathcal{G}}(A)[i,j]$. To achieve this, we define 
the right-hand side mapping of the circuit $\mathcal{G}$ (which we denote again with $\rhs$) as follows:
\begin{equation*}
\rhs(A_{i,j})= \begin{cases} M[i,j]  & \text{ if } \rhs(A)=M \in \Gamma_d \cup \Gamma_d^{-1}  \\
  B_{i,j} + C_{i,j} + \sum_{i < k < j} B_{i,k} \cdot C_{k,j}   & \text{ if } \rhs(A)=BC  \end{cases} 
\end{equation*}
In the first line one has to notice that $M[i,j]$ is one of the numbers $-1,0,1$.
The second line  is simply the rule for matrix multiplication ($A_{i,j} = \sum_{k=1}^d B_{i,k} C_{k,j}$) taking into
account that all matrices are unitriangular.

Now, $\val(\mathcal{G})$ is the identity matrix if and only if all matrix entries $\val_{\mathcal{G}}(S)[i,j]$ ($1 \leq i < j \leq d$) are zero.
But this is the case if and only if the sum of squares $\sum_{1 \leq i < j \leq d} \val_{\mathcal{G}}(S)[i,j]^2$ is zero.
Hence, we finally define
$$
\rhs(T) = \sum_{1 \leq i < j \leq d} S_{i,j}^2 .
$$
Concerning the multiplication depth, note that the multiplication depth of the gate $A_{i,j}$ is bounded by $j-i$:
The only multiplications in $\rhs(A_{i,j})$ are of the form $B_{i,k} C_{k,j}$ (and these multiplications are not nested).
Hence, by induction, the multiplication depth of $A_{i,j}$ is bounded by $1+ \max \{ k-i, j-k \mid i < k < j \} = j-i$.
It follows that every gate $S_{i,j}$ has multiplication depth at most $d-1$, which implies that the output gate $T$ has multiplication
depth at most $d$.

Similarly, it can be shown by induction that  $\mathsf{deg}(A_{i,j}) \leq j-i$. Hence, 
$\mathsf{deg}(A_{i,j}) \leq d-1$ for all $1 \leq i < j \leq d$, which implies that the formal degree
of the circuit is bounded by $2(d-1)$.

The structure-preserving partition $\biguplus_{i=1}^d V_i$ of the set of all multiplication gates of $\mathcal{C}$ can be defined as follows:
All gates corresponding to multiplications $B_{i,k} \cdot C_{k,j}$ in $\rhs(A_{i,j})$ are put into the set $V_{j-i}$.
Finally, all gates corresponding to multiplications  $S_{i,j}^2$ in $\rhs(T)$ are put into $V_d$. It is obvious 
that this partition is structure-preserving. 
\qed
\end{proof}
In a second step we apply Lemma~\ref{lemma-eliminate-sub} and construct from the above circuit $\mathcal{C}$ 
two variable-free positive circuits $\mathcal{C}_1$ and $\mathcal{C}_2$, both having multiplication depth
at most $d$ such that $\val(\mathcal{C}) = \val(\mathcal{C}_1) - \val(\mathcal{C}_2)$. Hence, our input SLP
$\mathcal{G}$  evaluates to the indentity matrix if and only if $\val(\mathcal{C}_1) = \val(\mathcal{C}_2)$.
Moreover, using the construction from Lemma~\ref{lemma-eliminate-sub}  it is straightforward to 
compute in logspace a structure-preserving partition $\biguplus_{i=1}^d V_{k,i}$ of the the set of all multiplication gates of $\mathcal{C}_k$ 
($k \in \{1,2\}$).

The following lemma concludes the proof that $\CWP(\UT_d(\mathbb{Z}))$ belongs to $\CL$.

\begin{lemma}
Let $d$ be constant. From a given variable-free positive circuit $\mathcal{C}$ of multiplication depth $d$ together
with a structure-preserving partition $\biguplus_{i=1}^d V_i$ of the set of all multiplication gates of $\mathcal{C}$,
we can compute in logarithmic space a variable-free addition circuit $\mathcal{D}$ such that $\val(\mathcal{C}) = \val(\mathcal{D})$.
\end{lemma}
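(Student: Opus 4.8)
The plan is to eliminate the $d$ levels of multiplication one at a time, from the bottom level $V_1$ up to the top level $V_d$, using the fact that a product of two \emph{variable-free addition} circuits can itself be rewritten as a variable-free addition circuit. First I would record this basic building block: if $\mathcal{A}$ and $\mathcal{B}$ are variable-free addition (i.e.\ positive, multiplication-free) circuits, then $\val(\mathcal{A}) \cdot \val(\mathcal{B})$ can be computed by a variable-free addition circuit $\mathcal{A} \otimes \mathcal{B}$ of size polynomial in $|\mathcal{A}| + |\mathcal{B}|$, constructed in logspace. Indeed $\val(\mathcal{A})$ counts $s$-$t$ paths in a dag and $\val(\mathcal{B})$ counts $s'$-$t'$ paths in another dag, and the product counts paths in the concatenation of the two dags (glue $t$ to $s'$); translated back to addition circuits, one takes a fresh copy of $\mathcal{A}$ in which the constant $1$ at each input gate is replaced by (a pointer to) the output gate of a fresh copy of $\mathcal{B}$, and each constant $0$ stays $0$. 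This is well-defined precisely because $\mathcal{A}$ has no multiplication gates, so its value is "linear" in the value plugged into its inputs.

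Next I would process the levels. Since the partition $\biguplus_{i=1}^d V_i$ is structure-preserving, every path in $\mathcal{C}$ between two multiplication gates goes from a lower-indexed block to a strictly higher-indexed block; in particular, the gates in $V_1$ have only addition gates strictly below them, i.e.\ their two inputs are (outputs of) variable-free addition subcircuits. Hence each gate $u \in V_1$ with $\rhs(u) = B \cdot C$ can be replaced by the output gate of $\mathcal{C}_B \otimes \mathcal{C}_C$, where $\mathcal{C}_B, \mathcal{C}_C$ are the addition subcircuits rooted at $B,C$; after doing this for all $u \in V_1$ simultaneously, we obtain an equivalent circuit $\mathcal{C}'$ whose multiplication gates are exactly $V_2 \uplus \cdots \uplus V_d$, still with a structure-preserving partition, and whose multiplication depth has dropped to $d-1$. (Crucially, the gates originally below a $V_1$-gate were addition gates, so splicing in the $\otimes$-construction does not create new multiplications nor destroy the structure-preserving property: any path among the surviving multiplication gates already existed in $\mathcal{C}$.) Iterating this $d$ times — with $d$ a fixed constant, so the blow-up is a constant power — removes all multiplication gates and yields the desired variable-free addition circuit $\mathcal{D}$ with $\val(\mathcal{D}) = \val(\mathcal{C})$.

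The main point to be careful about is logspace-computability of the iterated construction: a naive recursion that literally rebuilds the circuit $d$ times would pass through intermediate objects of polynomial size, which a logspace transducer cannot store. The fix is to describe the gates of the final circuit $\mathcal{D}$ by a fixed-length "address" — roughly, a tuple recording, for a gate of $\mathcal{C}$, which of the (constantly many) nested $\otimes$-copies it lives in at each of the $d$ elimination stages — and to show that, given such an address, one can compute the gate's right-hand side (again an address, or a constant) by reading $\mathcal{C}$ and the partition with only $O(\log |\mathcal{C}|)$ bits of work. Since $d$ is constant the address has constant length, so this is a genuine logspace transformation; this bookkeeping, rather than any mathematical difficulty, is the real obstacle, and it is exactly the kind of argument that the paper elsewhere dispatches as "straightforward". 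Finally, $\val(\mathcal{C}) = \val(\mathcal{D})$ follows by induction on the number of eliminated levels, using the defining property $\val(\mathcal{A} \otimes \mathcal{B}) = \val(\mathcal{A})\cdot\val(\mathcal{B})$ at each step.
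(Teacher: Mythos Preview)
Your proposal is correct and follows essentially the same approach as the paper: eliminate the bottom multiplication layer $V_1$ by replacing each product $B\cdot C$ with the serial concatenation of the addition-only dags feeding $B$ and $C$, then iterate $d$ times. The paper differs only in two implementation details: (i) instead of your direct addressing scheme, it simply composes $d$ single-level logspace reductions (a constant composition of logspace maps is logspace); and (ii) rather than extracting ``the addition subcircuit rooted at $B$''---which would require reachability and hence $\mathsf{NL}$---it copies the \emph{entire} gate set once per $V_1$-gate and sets every multiplication gate in a copy to~$0$, a point you implicitly handle in your addressing scheme but should make explicit.
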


\begin{proof}
Let $\mathcal{C} = (V,S,\rhs)$ together
with the partition $V = \biguplus_{i=1}^d V_i$ as in the lemma. W.l.o.g. we can assume that there is a unique input gate
whose right-hand side is $0$ (resp., $1$) and we denote this gate simply with $0$ (resp., $1$).
%$I$ and $J$ such that $\rhs(I) = 0$ and $\rhs(J) = 1$.

Since $d$ is a constant, it suffices to construct in logarithmic space a 
variable-free positive circuit $\mathcal{C}' = (V',S,\rhs')$ of multiplication depth $d-1$ together
with a structure-preserving partition $V' = \biguplus_{i=1}^{d-1} V'_i$ of the set of all multiplication gates of $\mathcal{C}'$
such that $\val(\mathcal{C}) = \val(\mathcal{C}')$ (the composition of a constant number of logspace computations is again a logspace
computation).

To achieve the above goal, we eliminate in $\mathcal{C}$ all multiplication gates from $V_1$. Note that below these gates there
are not other multiplication gates. Then, we define the set $V'_i$ as $V_{i+1}$ for $1 \leq i \leq d-1$. 

Let $V_1  = \{ A_1, \ldots, A_m\}$ and assume that $\rhs_{\mathcal{C}}(A_i) = B_i \cdot C_i$.  
The set of gates of $\mathcal{C}'$ is
$$
V' = V \cup \{ A^{(i)} \mid A \in V, 1 \leq i \leq m\},
$$
i.e., we add $m$ copies of each gate to the circuit.
We define the right-hand side mapping as follows:
\begin{gather}
\rhs'(A) = \rhs(A) \text{ if } A \in V \setminus V_1 \\
\rhs'(A_i) =  B_i^{(i)} \text{ for } 1 \leq i \leq m \label{edge-to-A_i}\\
\rhs'(A^{(i)}) = B^{(i)} + C^{(i)} \text{ if } A \in V \text{ and } \rhs(A) = B + C \\
\rhs'(A^{(i)}) = 0 \text{ if } A \in V \text{ and } \rhs(A) = B \cdot C \label{copies-mult}\\
\rhs'(0^{(i)}) = 0 \\
\rhs'(1^{(i)}) = C_i \label{edge-to-1^i}
\end{gather}
Note that $A_i$ has only one incoming edge after this construction. To stick to our definition of arithmetic circuits,
we can make $A_i$ an addition gate, which gets another incoming edge from $0$, and similarly
for $1^{(i)}$ (the $i$-th copy of the unique $1$-gate).

The idea of the above construction is the following: Basically, we add $m$ many copies of the circuit $\mathcal{C}$. In these copies,
we do not need the multiplication gates\footnote{Actually, we only need in the $i$-th copy those nodes that belong to a path from
the unique $1$-gate to $B_i$. But we cannot compute the set of these nodes in logspace unless $\mathsf{L}=\mathsf{NL}$. Hence,
we put all nodes into the copy.} and since we do not want to introduce new multiplication gates, we set the right-hand side 
of a copy of a multiplication gate to $0$, see \eqref{copies-mult}.\footnote{This is an arbitrary choice; instead of $0$ we could have also taken $1$.}
Also notice that strictly below $A_i$ we only find addition gates and constants in the circuit $\mathcal{C}$. 
In particular, the value $\val_{\mathcal{C}}(B_i)$ is equal to the number of paths from the unique $1$-gate $1$ to $B_i$
and similarly for $C_i$. We want to assign to gate $A_i$ the product of these path numbers. For this,
we redirect the edges $(B_i, A_i)$ and $(C_i,A_i)$ of the multiplication gate for every $1 \leq i \leq m$ as follows:
The edge $(C_i,A_i)$ is replaced by the edge $(C_i,1^{(i)})$, see \eqref{edge-to-1^i}. Moreover, the edge 
$(B_i,A_i)$ is replaced by the edge $(B_i^{(i)}, A_i)$ (which is the unique incoming edge to $A_i$), see \eqref{edge-to-A_i}.
So, basically, we serially connect the circuit part between $1$ and $C_i$ with the circuit part between $1$ and $B_i$.
Thereby we multiply the number of paths.
The above construction can be clearly done in logspace.
 \qed
\end{proof}
So far, we have restricted to \emph{torsion-free} f.g. nilpotent groups.
For general f.g. nilpotent groups, we use the fact that every f.g. nilpotent group contains 
a torsion-free normal f.g. nilpotent subgroup of finite index (Theorem~\ref{thm-nilpotent-subgroup})
in order to show that the compressed word problem for every f.g. nilpotent group belongs to 
the complexity class $\DET$: To do this we need the following result:  
  
\begin{theorem} \label{thm-NC1-red}
Let $G$ be a finitely generated group. For every normal subgroup $H$ of $G$ with a finite index, $\CWP(G)$ 
is $\AC^0$-reducible to $\CWP(H)$ and $\CWP(G / H)$.
\end{theorem}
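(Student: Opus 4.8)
The plan is to work with the quotient map $\pi : G \to G/H$ and a set $R = \{r_1, \ldots, r_k\}$ of coset representatives of $H$ in $G$ (with $r_1 = 1$), fixed in advance since $[G:H]$ is a constant. Given an SLP $\mathcal{G} = (V, S, \rhs)$ over $G$, I first want to decide $\pi(\val(\mathcal{G})) = 1$ in $G/H$: this is exactly one instance of $\CWP(G/H)$, obtained by reading the same SLP over the finite group $G/H$. If the answer is negative, $\val(\mathcal{G}) \neq 1$ in $G$ and we reject. The real work is the case where $\val(\mathcal{G})$ already lies in $H$, where we must produce from $\mathcal{G}$ an SLP over $H$ (in some fixed finite generating set of $H$) that evaluates to the same group element, and then invoke $\CWP(H)$.

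The key step is a "coset rewriting" of the SLP, analogous to the substitution trick used in Lemma~\ref{lemma-eliminate-sub}. For each gate $A \in V$, write $\val_{\mathcal{G}}(A) = r_{\sigma(A)} \cdot h_A$ with $r_{\sigma(A)} \in R$ the representative of the coset $H\val_{\mathcal{G}}(A)$ and $h_A \in H$. The index $\sigma(A) \in \{1, \ldots, k\}$ is computable by evaluating a gate of the SLP over the finite group $G/H$; since $G/H$ is a fixed finite group, for each fixed target value $j$ the predicate "$\sigma(A) = j$" is an $\AC^0$ query to $\CWP(G/H)$, so all the values $\sigma(A)$ can be obtained in $\AC^0$ with oracle gates for $\CWP(G/H)$. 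Now I build an SLP $\mathcal{G}'$ over $H$ whose gates are pairs $A_j$ for $A \in V$ and $j \in \{1,\ldots,k\}$, with the intention that $\val_{\mathcal{G}'}(A_j) = r_j^{-1}\,\val_{\mathcal{G}}(A)\,r_{\tau}$ where $\tau$ is the index determined by $j$ and $\sigma(A)$ so that this product lands in $H$. For $\rhs(A) = BC$ we set $\rhs'(A_j) = B_j \cdot C_{j'}$ where $j'$ is chosen so that $r_{j}^{-1} \val_{\mathcal{G}}(B) r_{j'} \in H$, i.e.\ $j'$ is the index of the coset $H r_j^{-1} \val_{\mathcal{G}}(B)$, again a constant-table lookup driven by $\sigma(B)$ and $j$; this is consistent because $r_j^{-1}\val_{\mathcal{G}}(A) r_\tau = (r_j^{-1}\val_{\mathcal{G}}(B) r_{j'})(r_{j'}^{-1}\val_{\mathcal{G}}(C) r_\tau)$. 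For an input gate $A$ with $\rhs(A) = a$ a generator of $G$, the value $r_j^{-1} a\, r_\tau$ is a fixed element of $H$ (one of finitely many), so we hard-wire a short constant-size sub-SLP over the generators of $H$ producing it. Finally $\val(\mathcal{G}) \in H$ means $\sigma(S) = 1$, and then $\val_{\mathcal{G}'}(S_1) = \val_{\mathcal{G}}(S) \in H$, so $\val(\mathcal{G}) = 1$ in $G$ iff $\val(\mathcal{G}') = 1$ in $H$, a single $\CWP(H)$ query.

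The whole transformation is local: each new gate's right-hand side depends only on the original right-hand side of one gate together with the constantly many index values $\sigma(\cdot)$ of the gates mentioned in it, so $\mathcal{G}'$ is computable in $\AC^0$ from $\mathcal{G}$ once the $\sigma$-values are available, and the $\sigma$-values themselves come from $\AC^0$-many $\CWP(G/H)$ oracle calls. Composing, $\CWP(G)$ is $\AC^0$-reducible to $\CWP(H)$ and $\CWP(G/H)$. The main obstacle I anticipate is purely bookkeeping: verifying that the index $j'$ used on the right branch of a product gate is well-defined and that the recursion is coherent (the same pair $A_j$ must not be required to equal two different elements along two different derivation routes), which follows because $j'$ is a deterministic function of $j$ and the coset of $\val_{\mathcal{G}}(B)$, and the latter is an invariant of the gate $B$. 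A minor subtlety is handling inverse generators $a^{-1}$, but these are treated exactly like generators since $r_j^{-1} a^{-1} r_\tau$ is again a fixed element of $H$; and one should note that the size of $\mathcal{G}'$ is only a constant factor ($k = [G:H]$) times that of $\mathcal{G}$ plus the constant-size gadgets for the input gates, so no blow-up occurs.
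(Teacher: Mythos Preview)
Your proposal is correct and follows essentially the same coset-rewriting argument as the paper. The paper indexes the new SLP variables by triples $[g_i, A, g_j^{-1}]$ with $g_i \, \val_{\mathcal{G}}(A) \, g_j^{-1} \in H$, whereas you use pairs $A_j$ with the second coset index determined implicitly from $j$ and $\sigma(A)$; since that second index is uniquely determined, the two encodings are equivalent, and the remaining details (hard-wired words for input gates, $\AC^0$-many parallel $\CWP(G/H)$-oracle calls to compute the coset data, followed by a single $\CWP(H)$-oracle call) match.
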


\begin{proof}
To show the lemma, we adopt the proof of \cite[Theorem 4.4]{Loh14}, 
where the statement is shown for polynomial time many-one reducibility instead of $\AC^0$-reducibility.
Let $G$ be a finitely generated group with the finite generating set $\Sigma$ and let $H$ be a normal subgroup of $G$ of finite index (which must be f.g. as well)
with the finite generating set $\Gamma$.  As the generating set for the quotient $G/H$ we can take the set $\Sigma$ as well.
Let $\left\{ Hg_1,\ldots, Hg_n\right\}$ be the set of cosets of $H$ in $G$, where $g_1 = 1$. Moreover,
let  $\phi: G \to G/H$ be the canonical homomorphism and let $h: (\Sigma \cup \Sigma^{-1})^{*} \to G$ be the morphism that maps every word from 
$(\Sigma \cup \Sigma^{-1})^*$ to the 
group element in $G$ to which it evaluates. Now let $\mathcal{G} =(V, S, \rhs_{\mathcal{G}})$ be an SLP over the alphabet
$\Sigma \cup \Sigma^{-1}$. We have to construct an $\AC^0$-circuit with oracle gates for $\CWP(H)$ and $\CWP(G / H)$ that checks whether
$\val(\mathcal{G})=1$ in $G$. 

Consider the set of triples
\[W = \left\{ [g_i, A, g_j^{-1}] \mid A \in V, 1 \leq i,j \leq n, g_ih(\val_{ \mathcal{G}}(A))g_j^{-1} \in H \right\}. \] 
In a first step, we construct the set of all these triples using $n^2|V|$ parallel $\CWP(G/H)$-oracle gates.
More precisely, we construct for all $A \in V, 1 \leq i,j \leq n$ an SLP $\mathcal{G}_{A,i,j}$ that 
evaluates to the group element $\phi(g_ih(\val_{ \mathcal{G}}(A))g_j^{-1}) \in G/H$.
For this, we take the SLP $\mathcal{G}$ and add a new start variable $S_{A,i,j}$ with the right-hand side
$w_i A w_j^{-1}$, where $w_i \in (\Sigma \cup \Sigma^{-1})^{*}$ is a word that represents the group element $g_i$.
We do not need to compute these words $w_i$; they can be ``hard-wired'' into the circuit. 
The SLP $\mathcal{G}_{A,i,j}$ can be clearly constructed in $\AC^0$, and we have $\val(\mathcal{G}_{A,i,j}) = 1$ in $G/H$ 
if and only if $g_ih(\val_{ \mathcal{G}}(A))g_j^{-1} \in H$.

Note that $\val(\mathcal{G}_{S,1,1}) = 1$ in $G/H$ if and only if $\val(\mathcal{G})$ represents an element of the subgroup $H$.
Thus, if it turns out that $\val(\mathcal{G}_{S,1,1}) \neq 1$ in $G/H$, then the whole circuit will output zero.
Otherwise (i.e., in case $h(\val(\mathcal{G})) \in H$), we construct an SLP $\mathcal{H}$ over the alphabet $\Gamma \cup \Gamma^{-1}$ 
(the monoid generating set for $H$)
that will represent the group element $h(\val(\mathcal{G}))$.

The variable set of $\mathcal{H}$ is $W$, the start variable is $[g_1, S, g_1^{-1}]$  and the right-hand sides are defined as follows:
If $\rhs_{\mathcal{G}}(A)=a \in \Sigma \cup \Sigma^{-1}$,
we set $\rhs_{\mathcal{H}}([g_i,A,g_j^{-1}])=w_{a,i,j}$, where $w_{a,i,j} \in (\Gamma \cup \Gamma^{-1})^{*}$ is a word that represents 
the group element $g_iag_j^{-1} = g_ih(\val_{ \mathcal{G}}(A))g_j^{-1} \in H$. Note again, that we do not have to compute these words $w_{a,i,j}$
(they are fixed).
If $\rhs_{\mathcal{G}}(A)=BC$ and $[g_i, A, g_j^{-1}] \in W$, then
we determine the unique $k$, so that 
$g_ih(\val_{\mathcal{G}}(B))g_k^{-1} \in H$. To do this we have to go through the set $W$ and look for the unique $k$ such that $[g_i,B,g_k^{-1}] \in H$.  
Now we define $\rhs_{\mathcal{H}}([g_i, A, g_j^{-1}]) = [g_i,B,g_k^{-1}][g_k, C, g_j^{-1}]$. 
Clearly, this construction can be carried out by an $\AC^0$-circuit. 
Finally, it is straightforward to show that $\val_{\mathcal{H}}([g_i, A, g_j^{-1}])$ represents the group
element $g_ih(\val_{ \mathcal{G}}(A))g_j^{-1} \in H$. Hence, we have 
$\val(\mathcal{G}) = 1$ in $G$, if and only if  $\val(\mathcal{H})=1$ in $H$. This finishes our reduction.
Note that the overall circuit consists of $n^2|V|$ parallel $\CWP(G/H)$-oracle gates followed by a single  $\CWP(H)$-oracle gate.
\qed
\end{proof}
We can now show:

\begin{theorem} \label{thm-fg-nilpotent}
For every f.g. nilpotent group,
the compressed word problem is in $\DET$.
\end{theorem}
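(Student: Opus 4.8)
The plan is to reduce the general case to the torsion-free case (Theorem~\ref{thm-nilpotent-main}) and the finite case (Theorem~\ref{thm-CWP-finite-groups}), glued together by the subgroup reduction of Theorem~\ref{thm-NC1-red}. So let $G$ be a f.g.\ nilpotent group. By Theorem~\ref{thm-nilpotent-subgroup}, $G$ contains a torsion-free normal subgroup $H$ of finite index that is again f.g.\ nilpotent, and by Theorem~\ref{thm-nilpotent-torsion-free-subgroup} the finite quotient $G/H$ is nilpotent, hence solvable.

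First I would bound the complexity of the two ``smaller'' compressed word problems. For $H$: if $H$ is trivial then $\CWP(H)$ is decidable even in $\AC^0$, and otherwise Theorem~\ref{thm-nilpotent-main} gives $\CWP(H) \in \CL$; in either case $\CWP(H) \in \CL \subseteq \DET$. For $G/H$: since $G/H$ is a finite solvable group, Theorem~\ref{thm-CWP-finite-groups} yields $\CWP(G/H) \in \DET$.

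Finally, Theorem~\ref{thm-NC1-red} shows that $\CWP(G)$ is $\AC^0$-reducible to $\CWP(H)$ and $\CWP(G/H)$. Since both of these problems lie in $\DET$, and $\DET$ is closed under $\AC^0$-reducibility --- indeed under $\NC^1$-reducibility, which is transitive, together with the inclusion $\AC^0 \subseteq \NC^1$ --- it follows that $\CWP(G) \in \DET$. The argument contains no genuine obstacle; the only points that need a little care are to observe that $G/H$ is solvable (so that Theorem~\ref{thm-CWP-finite-groups} places $\CWP(G/H)$ in $\DET$ rather than merely making it $\mathsf{P}$-complete) and that $\DET$ absorbs the composition of the relevant reductions.
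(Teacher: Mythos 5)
Your proposal is correct and follows essentially the same route as the paper: decompose $G$ via a torsion-free normal subgroup $H$ of finite index (Theorem~\ref{thm-nilpotent-subgroup}), place $\CWP(H)$ in $\CL \subseteq \DET$ by Theorem~\ref{thm-nilpotent-main} and $\CWP(G/H)$ in $\DET$ by Theorem~\ref{thm-CWP-finite-groups}, and combine via Theorem~\ref{thm-NC1-red}. Your case split on whether $H$ is trivial corresponds to the paper's split on whether $G$ is finite, and your explicit remarks about the solvability of $G/H$ and the closure of $\DET$ under the reductions are exactly the points the paper relies on.
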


\begin{proof}
Let $G$ be a f.g. nilpotent group. If $G$ is finite, then the result follows from Theorem~\ref{thm-CWP-finite-groups}
(every nilpotent group is solvable).  If $G$ is infinite, then by Theorem~\ref{thm-nilpotent-subgroup}, 
$G$ has a torsion-free normal subgroup $H$ of finite index. By Theorem~\ref{thm-nilpotent-torsion-free-subgroup},
$H$ and $G/H$ are nilpotent too; moreover $H$ is finitely generated. By Theorem~\ref{thm-nilpotent-main},
$\CWP(H)$ belongs to $\CL \subseteq \DET$. Moreover, by Theorem~\ref{thm-CWP-finite-groups} $\CWP(G/H)$ 
belongs to $\DET$ as well. Finally, Theorem~\ref{thm-NC1-red} implies that $\CWP(G)$ belongs to $\DET$.
\qed
\end{proof}
Actually, Theorem~\ref{thm-fg-nilpotent} can be slightly extended to groups that are (f.g. nilpotent)-by-(finite solvable)
(i.e., groups that have a normal subgroup, which is f.g. nilpotent, and where the quotient is finite solvable.
This follows from Theorem~\ref{thm-NC1-red} and the fact that the compressed word problem for a finite solvable
group belongs to $\DET$ (Theorem~\ref{thm-CWP-finite-groups}).

\section{The uniform compressed word problem for unitriangular groups} \label{sec-uniform}

For Lemma~\ref{lemma-UT-in-CL} it is crucial that the dimension $d$ is a constant. 
In this section, we consider a uniform variant of the compressed word problem for $\UT_d(\mathbb{Z})$. 
We denote this problem with $\CWP(\UT_*(\mathbb{Z}))$. The input consists of a unary encoded number $d$ 
and an SLP, whose terminal symbols are generators of $\UT_d(\mathbb{Z})$ or there inverses. Alternatively,
we can assume that the terminal symbols are arbitrary matrices from $\UT_d(\mathbb{Z})$ with binary encoded
entries (given such a matrix $M$, it is easy to construct an SLP over the generator matrices that produces $M$).
The question is whether the SLP evaluates to the identity matrix. We show that this problem is complete for 
the complexity class $\CLOGCFL$.

\begin{theorem}
The problem $\CWP(\UT_*(\mathbb{Z}))$ is complete for $\CLOGCFL$.
\end{theorem}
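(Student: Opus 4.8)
The proof splits into two directions: membership in $\CLOGCFL$ and $\CLOGCFL$-hardness.

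For membership, the plan is to reuse the machinery of Section~\ref{cwp-nilpotent}, but tracking the dependence on $d$. Given the input $(d,\mathcal{G})$ with $d$ in unary, apply Lemma~\ref{lemma-SLP->circuit} to obtain in logspace a variable-free arithmetic circuit $\mathcal{C}$ with $\mdepth(\mathcal{C}) \le d$ and, crucially, $\deg(\mathcal{C}) \le 2(d-1)$, such that $\val(\mathcal{G}) = \Id_d$ iff $\val(\mathcal{C}) = 0$. Then Lemma~\ref{lemma-eliminate-sub} yields in logspace two positive variable-free circuits $\mathcal{C}_1,\mathcal{C}_2$ with $\deg(\mathcal{C}_i) \le \deg(\mathcal{C}) \le 2(d-1)$ and $\val(\mathcal{C}) = \val(\mathcal{C}_1) - \val(\mathcal{C}_2)$, so that $\val(\mathcal{G}) = \Id_d$ iff $\val(\mathcal{C}_1) = \val(\mathcal{C}_2)$. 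Since $d$ is given in unary, $\deg(\mathcal{C}_i)$ is polynomially bounded in the input length, and $\mathsf{depth}(\mathcal{C}_i)$ is trivially polynomial; hence by Lemma~\ref{lemma-circuit-NAuxPDA} there is an NAuxPDA running in time polynomial in $\mathsf{depth}(\mathcal{C}_i)\cdot\deg(\mathcal{C}_i)$ whose number of accepting computations on input $\mathcal{C}_i$ equals $\val(\mathcal{C}_i)$. Composing with the logspace preprocessing (logspace is contained in $\#\mathsf{LogCFL}$-computability after the fact, or more simply: an NAuxPDA can do the logspace transformation on its work tape and then simulate itself), we get two functions $f_1,f_2 \in \#\mathsf{LogCFL}$ with $\val(\mathcal{G}) = \Id_d$ iff $f_1(d,\mathcal{G}) = f_2(d,\mathcal{G})$, which is exactly membership in $\CLOGCFL$.

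For hardness, the plan is to reduce the canonical $\CLOGCFL$-problem---equality of two $\#\mathsf{LogCFL}$ functions, equivalently (by the circuit characterization from \cite{Vinay91} quoted in Section~\ref{sec-complexity}) equality of the values of two positive variable-free arithmetic circuits $\mathcal{C}_1, \mathcal{C}_2$ of polynomially bounded formal degree---to $\CWP(\UT_*(\mathbb{Z}))$. The key idea is that a degree bound on a positive circuit limits how unitriangular matrices need to be, via Lemma~\ref{lemma-product}: the identity $[T_{i,j}^a, T_{j,k}^b] = T_{i,k}^{ab}$ lets one realize a multiplication gate by a commutator of SLPs that produce powers of generators, pushing the ``result'' of a multiplication up one level in the index. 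Concretely, given a positive variable-free circuit $\mathcal{C}$ of formal degree bounded by a polynomial $p(n)$, I would by induction on the gate structure build, for each gate $A$, an SLP producing $T_{i,i+\ell}^{\val_{\mathcal{C}}(A)}$ inside $\UT_{d}(\mathbb{Z})$ for $d$ chosen large enough (polynomial in the input, so encodable in unary) to accommodate the nesting of multiplications, with the index gap $\ell$ linked to the formal degree so that $\val_{\mathcal{C}}(A) = 0$ corresponds to the trivial matrix; addition gates become products of SLPs (adding exponents of a common $T_{i,j}$), multiplication gates become commutators as in Lemma~\ref{lemma-product}, shifting indices. One then combines the SLPs for $\mathcal{C}_1$ and $\mathcal{C}_2$ (using $\mathcal{C}_1$ together with the inverse SLP for $\mathcal{C}_2$, sitting in the same pair of coordinates) to get a single SLP that evaluates to the identity of $\UT_*(\mathbb{Z})$ iff $\val(\mathcal{C}_1) = \val(\mathcal{C}_2)$; the whole construction is an $\AC^0$ (or logspace) reduction since it is a local, gate-by-gate rewriting.

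The main obstacle I expect is the hardness direction, specifically controlling the dimension $d$ and the index gaps so that they stay polynomially bounded while the commutator encoding of Lemma~\ref{lemma-product} correctly simulates \emph{nested} multiplications: each multiplication consumes one ``level'' of the unitriangular structure, and the polynomial degree bound on $\mathcal{C}$ is precisely what guarantees the multiplication depth---and hence the required number of levels, i.e.\ the dimension $d$---is polynomial rather than exponential. Making this induction precise (the right invariant relating the value of a gate, the coordinates $i,j$ of the generator power it is represented by, and the multiplication depth), and verifying that the resulting SLP has polynomial size, is the delicate part; the membership direction is essentially bookkeeping on top of the lemmas already proved.
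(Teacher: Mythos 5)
Your proposal follows the paper's proof in both directions: membership is obtained exactly as you describe, by combining Lemma~\ref{lemma-SLP->circuit} (whose degree bound $2(d-1)$ is polynomial since $d$ is unary), Lemma~\ref{lemma-eliminate-sub}, and the NAuxPDA of Lemma~\ref{lemma-circuit-NAuxPDA}; and hardness uses the commutator identity of Lemma~\ref{lemma-product} to represent the value of a gate of formal degree $\ell$ as a power of $T_{i,i+\ell}$, with addition realized by concatenation and multiplication by commutators. The ``delicate part'' you flag is resolved in the paper exactly along the lines you anticipate: every gate is labelled with its formal degree (via \cite{AllenderJMV98}) and the circuits are padded so that both inputs of each addition gate have the same formal degree, whence the two summands live at a common pair of coordinates $(i,j)$ with $j-i=\deg(A)$, and a product gate with inputs at $(i,k)$ and $(k,j)$, $k-i=\deg(B)$, $j-k=\deg(C)$, is realized at $(i,j)$ by Lemma~\ref{lemma-product}.
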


\begin{proof}
We start with the upper bound. Consider an SLP $\mathcal{G}$, whose terminal symbols are generators of $\UT_d(\mathbb{Z})$ or there inverses.
The dimension $d$ is clearly bounded by the input size. Consider the variable-free arithmetic circuit $\mathcal{C}$ constructed from $\mathcal{G}$
in Lemma~\ref{lemma-SLP->circuit} and let $\mathcal{C}_1$ and $\mathcal{C}_2$ be the two variable-free positive arithmetic circuits obtained from $\mathcal{C}$
using Lemma~\ref{lemma-eliminate-sub}. Then $\mathcal{G}$ evaluates to the identity matrix if and only if $\val(\mathcal{C}_1) = \val(\mathcal{C}_2)$. 
Moreover, the formal degrees $\deg(\mathcal{C}_1)$ and $\deg(\mathcal{C}_2)$ are bounded by $2(d-1)$, i.e., polynomially bounded in the input length.
Finally, we compose a logspace machine that computes from the input SLP $\mathcal{G}$ the circuit $\mathcal{C}_i$ with the NAuxPDA 
from Lemma~\ref{lemma-circuit-NAuxPDA} to get an NAuxPDA $\mathcal{P}_i$ such that the number of accepting computation paths of $\mathcal{P}_i$ on input 
$\mathcal{G}$ is exactly $\val(\mathcal{C}_i)$. Moreover, the running time of $\mathcal{P}_i$ on input 
$\mathcal{G}$ is bounded polynomially in $(2d-1) \cdot \depth(\mathcal{C}_i) \in O(d \cdot |\mathcal{G}|)$.

Let us now show that  $\CWP(\UT_*(\mathbb{Z}))$ is hard for $\CLOGCFL$.
Let $(\mathcal{C}_{1,n})_{n \geq 0}$ and $(\mathcal{C}_{2,n})_{n \geq 0}$
be two logspace-uniform families of positive arithmetic circuits of polynomially bounded size and formal degree.
Let $w = a_1 a_2 \cdots a_n \in \{0,1\}^n$ be an input for the circuits $\mathcal{C}_{1,n}$ and $\mathcal{C}_{2,n}$. 
Let $\mathcal{C}_i$ be the variable-free positive arithmetic circuit obtained from $\mathcal{C}_{i,n}$ by replacing every
$x_j$-labelled input gate by $a_j \in \{0,1\}$. 
By \cite[Lemma~3.2]{AllenderJMV98} we can assume that every gate of $\mathcal{C}_i$ is labelled by its formal degree.
By adding if necessary additional multiplication gates, where one input is set to $1$, we
can assume that $\mathcal{C}_1$ and $\mathcal{C}_2$ have 
the same formal degree $d \leq p(n)$ for a polynomial $p$. Analogously, we can assume that
if $A$ is an addition gate in $\mathcal{C}_1$ or $\mathcal{C}_2$ with right-hand side $B+C$,
then $\deg(B) = \deg(C) = \deg(A)$. All these preprocessing steps can be carried out in logarithmic space.

We will construct in logarithmic space 
an SLP $\mathcal{G}$ over the alphabet $\Gamma_{d+1} \cup \Gamma_{d+1}^{-1}$, where $\Gamma_{d+1}$ is our 
canonical generating set for the matrix group $\UT_{d+1}(\mathbb{Z})$, such
that $\mathcal{G}$ evaluates to the identity matrix if and only if $\mathcal{C}_1$ and  $\mathcal{C}_2$
evaluate to the same number.
Let $v_i$ be the output value of $\mathcal{C}_i$.
We first construct in logspace an SLP $\mathcal{G}_1$ that evaluates to the matrix $T_{1,d}^{v_1}$. In the same way we can construct
in logspace a second SLP $\mathcal{G}_2$ that evaluates to $T_{1,d}^{-v_2}$. Then, by concatenating the two SLPs
$\mathcal{G}_1$ and $\mathcal{G}_2$ we obtain the desired SLP.

The variables of $\mathcal{G}_1$ are $A_{i,j}^{b}$, where $A$ is a gate of $\mathcal{C}_1$,
$b \in \{-1,1\}$, and $1 \leq i < j \leq d$ such that  
$j-i$ is the formal degree of $A$.  The SLP $\mathcal{G}_1$ will be constructed in such a way that
$\val_{\mathcal{G}_1}(A_{i,j}^{b}) = T_{i,j}^{b \cdot v}$, where $v = \val_{\mathcal{C}_1}(A)$.
If $\rhs_{\mathcal{C}_1}(A) = 0$, then we set $\rhs_{\mathcal{G}_1}(A_{i,j}^b) = \Id$ and 
if $\rhs_{\mathcal{C}_1}(A) = 1$, then we set $\rhs_{\mathcal{G}_1}(A_{i,j}^b) = T_{i,j}^b$.
Correctness is obvious in these cases.
If $\rhs_{\mathcal{C}_1}(A) = B+C$, then we set $\rhs_{\mathcal{G}_1}(A_{i,j}^b) = B_{i,j}^b C_{i,j}^b$.
Correctness follows immediately by induction. Note that $\deg(B) = \deg(C) = \deg(A) = j-i$, which implies
that the gates $B_{i,j}^b$ and $C_{i,j}^b$ exist.
Finally, if $\rhs_{\mathcal{C}_1}(A) = B \cdot C$, then we set 
$\rhs_{\mathcal{G}_1}(A_{i,j}^1) = B_{i,k}^{-1} C_{k,j}^{-1} B_{i,k}^1 C_{k,j}^1$ and 
$\rhs_{\mathcal{G}_1}(A_{i,j}^{-1}) = C_{k,j}^{-1} B_{i,k}^{-1} C_{k,j}^{1} B_{i,k}^{1}$,
where $k$ is such that $\deg(B) = k-i$ and $\deg(B) = j-k$. Such a $k$ must exist since $j-i = \deg(A) = \deg(B)+\deg(C)$.
Correctness follows from Lemma~\ref{lemma-product} and induction.
\qed
\end{proof}

\section{The compressed word problem for polycyclic groups} \label{sec-polycyclic}

In this section we consider the compressed word problem for polycyclic groups. Since every polycyclic group is f.g. linear, the compressed
word problem for a polycyclic group can be reduced to polynomial identity testing. In this section, we show a lower bound:
There exists a strongly polycyclic group $G$ (which is also metabelian) such that polynomial identity testing for skew arithmetic circuits can be reduced to $\CWP(G)$. 

Let us start with a specific example of a polycyclic group. 
Consider the two matrices 
\begin{equation}  \label{g_a-and-h}
g_a =\left( \begin{array}{cc} a \ & 0 \\ 0 & 1 \end{array} \right) \text{ and }  h = \left( \begin{array}{cc} 1 \ & 1 \\ 0 & 1 \end{array} \right),
\end{equation}
where $a \in \mathbb{R}$, $a \geq 2$. Let $G_a = \langle g_a, h \rangle \leq \mathsf{GL}_2(\mathbb{R})$.
Let us remark that, for instance, the group $G_2$ 
is not polycyclic, see e.g. \cite[p.~56]{Wehr73}. On the other hand, we have:

\begin{proposition} \label{prop-a-polycyclic-group}
The group $G = G_{1+\sqrt{2}}$ is polycyclic and metabelian.\footnote{It is probably known to experts that $G$
is polycyclic. Since we could not find an explicit proof, we present the arguments for completeness.}
\end{proposition}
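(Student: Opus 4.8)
The plan is to exhibit $G = G_{1+\sqrt 2}$ as a short exact sequence $1 \to K \to G \to \mathbb{Z} \to 1$ in which the kernel $K$ is free abelian of rank $2$; both claimed properties then drop out immediately. Write $a = 1+\sqrt 2$. Every matrix occurring in $G$ is upper triangular with lower-right entry $1$, so the map $\pi\colon G \to \mathbb{R}^\times$ that reads off the top-left entry is a group homomorphism. Since $\pi(g_a) = a > 1$ and $\pi(h) = 1$, the image of $\pi$ is the infinite cyclic group $\langle a\rangle$ (the powers $a^k$ are pairwise distinct), and hence $G/\ker\pi \cong \mathbb{Z}$.

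Next I would analyze $K := \ker\pi$. An element of $G$ lies in $K$ exactly when it has the form $\begin{pmatrix}1 & t\\ 0 & 1\end{pmatrix}$, and $\begin{pmatrix}1 & t\\ 0 & 1\end{pmatrix} \mapsto t$ is an isomorphism of $K$ onto a subgroup of $(\mathbb{R},+)$. On one side, every entry appearing in any product of the generators $g_a^{\pm1}, h^{\pm1}$ lies in the ring $\mathbb{Z}[a,a^{-1}]$, so this subgroup is contained in $(\mathbb{Z}[a,a^{-1}],+)$. On the other side, the conjugates $g_a^k h g_a^{-k} = \begin{pmatrix}1 & a^k\\ 0 & 1\end{pmatrix}$ all lie in $K$, and $\{a^k : k \in \mathbb{Z}\}$ generates $\mathbb{Z}[a,a^{-1}]$ as an abelian group. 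Therefore $K \cong (\mathbb{Z}[a,a^{-1}], +)$.

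The one place where the specific base $a = 1+\sqrt 2$ enters — and, I expect, the crux of the whole argument — is the observation that $a$ is a \emph{unit} of the ring $\mathbb{Z}[\sqrt 2]$: indeed $a^{-1} = \sqrt 2 - 1 \in \mathbb{Z}[\sqrt 2]$, while $\sqrt 2 = a - 1 \in \mathbb{Z}[a]$. Consequently $\mathbb{Z}[a,a^{-1}] = \mathbb{Z}[\sqrt 2] = \mathbb{Z}\oplus\mathbb{Z}\sqrt 2$, which is free abelian of rank $2$. This is precisely what breaks for $a = 2$, where $\mathbb{Z}[a,a^{-1}] = \mathbb{Z}[1/2]$ is not finitely generated, explaining why $G_2$ fails to be polycyclic. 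Thus $K \cong \mathbb{Z}^2$.

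To conclude: since $G/K$ is abelian we have $[G,G] \le K$, and $K$ is abelian, so $[G,G]$ is abelian and $G$ is metabelian. And choosing a rank-one direct summand $C$ of $K \cong \mathbb{Z}^2$ — for instance $C = \langle h\rangle$, corresponding to $\mathbb{Z}\cdot 1 \subseteq \mathbb{Z}[\sqrt 2]$, which is part of a $\mathbb{Z}$-basis — we obtain a subnormal series $G \rhd K \rhd C \rhd 1$ with $G/K \cong K/C \cong C \cong \mathbb{Z}$; hence $G$ is (strongly) polycyclic of Hirsch length $3$. Beyond spotting the unit property, the only care needed is the bookkeeping in the second paragraph — checking that $K$ is \emph{exactly} the matrices over $\mathbb{Z}[\sqrt 2]$ of the form $\begin{pmatrix}1&t\\0&1\end{pmatrix}$, with no extra elements and with $t$-image the whole ring — which is a routine induction over the structure of the generating words.
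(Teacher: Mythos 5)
Your proof is correct, but it takes a genuinely different route from the paper's. The paper attacks the commutator subgroup head-on: citing Miller's result that $[G,G]$ is generated by the commutators $M_{s,t}=g^{s}h^{t}g^{-s}h^{-t}$ ($s,t\in\mathbb{Z}$), it computes that $M_{s,t}$ is the unipotent matrix with upper-right entry $t\bigl((1+\sqrt{2})^{s}-1\bigr)$ and then shows by explicit binomial expansions (split into the cases $s\geq 0$ and $s<0$) that these matrices generate exactly the copy of $\mathbb{Z}\times\mathbb{Z}$ given by the translations by $2$ and by $\sqrt{2}$. You instead split off the diagonal character $\pi$ and identify $\ker\pi$ with the additive group of $\mathbb{Z}[a,a^{-1}]$, reducing everything to the single observation that $a=1+\sqrt{2}$ is a unit of $\mathbb{Z}[\sqrt{2}]$, so that $\mathbb{Z}[a,a^{-1}]=\mathbb{Z}[\sqrt{2}]\cong\mathbb{Z}^{2}$. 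Your version is shorter, isolates exactly where the hypothesis on $a$ enters (and why $G_{2}$ fails: there the kernel is $\mathbb{Z}[1/2]$, which is not finitely generated), and yields as a bonus that $G$ itself is strongly polycyclic of Hirsch length $3$ via $G\rhd\ker\pi\rhd\langle h\rangle\rhd 1$ --- slightly more than the paper extracts, since its later corollary passes to an index-$2$ subgroup $H$ to obtain a strongly polycyclic series. What the paper's computation buys in exchange is the precise identification of $[G,G]$ (which has index $2$ in your $\ker\pi$) and hence of $G/[G,G]\cong\mathbb{Z}\times\mathbb{Z}_2$, facts it reuses after Theorem~\ref{thm-polycyclic}. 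Both arguments are complete; the bookkeeping you flag at the end (that the upper-right entries realized in $\ker\pi$ are exactly $\mathbb{Z}[a,a^{-1}]$) is indeed routine, with containment from closure of the ring under matrix products and surjectivity from the conjugates $g_a^{k}hg_a^{-k}$.
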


\begin{proof}
We show that the commutator subgroup of $G$ is isomorphic to $\mathbb{Z} \times \mathbb{Z}$, which implies  the theorem.
First we calculate the commutator subgroup of $G$. It is known that the commutator subgroup of a group generated by two elements $g_1,g_2$ is 
generated by all commutators $g_1^{s}g_2^{t}g_1^{-s}g_2^{-t}$ for  $s,t \in \mathbb{Z}$ \cite{Mil32}. Hence, 
$$[G,G] = \langle M_{s,t} \mid s,t \in \mathbb{Z} \rangle,$$ 
where for $s,t \in \mathbb{Z}$ we set
\begin{eqnarray*}
M_{s,t} & =  & \left( \begin{array}{cc} 1+ \sqrt{2} \ & 0 \\ 0 & 1 \end{array} \right)^s \left( \begin{array}{cc} 1 & 1 \\ 0  & 1 \end{array} \right)^t \left( \begin{array}{cc} 1+ \sqrt{2} \ & 0 \\ 0 \ & 1 \end{array} \right)^{-s} \left( \begin{array}{cc} 1 & 1 \\ 0 & 1 \end{array} \right)^{-t} \\
& = & \left( \begin{array}{cc} (1+ \sqrt{2})^s \ & 0 \\ 0 & 1 \end{array} \right) \left( \begin{array}{cc} 1 & t \\ 0 \ & 1 \end{array} \right) \left( \begin{array}{cc} (1+ \sqrt{2})^{-s} \ & 0 \\ 0 \ & 1 \end{array} \right) \left( \begin{array}{cc} 1 & -t \\ 0 & 1 \end{array} \right) \\
& =&  \left( \begin{array}{cc}  (1+ \sqrt{2})^{s} \ & t (1+ \sqrt{2})^{s} \\ 0 \ & 1 \end{array} \right) \left( \begin{array}{cc}  (1+ \sqrt{2})^{-s} \ & -t (1+ \sqrt{2})^{-s} \\ 0 \ & 1 \end{array} \right)  \\
& = & \left( \begin{array}{cc} 1 \ & -t + t(1+\sqrt{2})^s \\ 0 \ & 1 \end{array} \right)  \\
&  = &  \left( \begin{array}{cc} 1 \ &  t((1+\sqrt{2})^s-1) \\ 0 \ & 1 \end{array} \right) .
\end{eqnarray*} 
With the setting 
$$u= \left( \begin{array}{cc} 1 \ &  \sqrt{2} \\ 0 & 1 \end{array} \right) \quad \text{and} \quad v=\left( \begin{array}{cc} 1 \  &  2 \\ 0 & 1 \end{array} \right)
$$
we show that $\langle M_{s,t} \mid s,t \in \mathbb{Z} \rangle = \langle u,v \rangle$.
Moreover, it is easy to see that $u$ and $v$ generate a copy of $\mathbb{Z} \times \mathbb{Z}$.
 
We have $M_{1,1} = u$ and
$$
M_{2,1} M_{1,1}^{-2} =  \left( \begin{array}{cc} 1 \ & 2+ 2\sqrt{2} \\ 0 & 1 \end{array} \right) 
 \left( \begin{array}{cc} 1 \ & - 2\sqrt{2} \\ 0 & 1 \end{array} \right) = \left( \begin{array}{cc} 1 \ &  2 \\ 0 & 1 \end{array} \right) = v.
$$
This shows that $\langle u,v \rangle \subseteq \langle M_{s,t} \mid s,t \in \mathbb{Z} \rangle$.
For the other inclusion assume first that
$s \geq 0$. Then 
\begin{eqnarray*}
t\left( \left(1+\sqrt{2}\right)^s-1\right) &=& t\left( \left( \sum_{i=0}^s {s \choose i} \sqrt{2}^i \right) -1\right)\\
&= & t\left(\sum_{i=1}^s {s \choose i} \sqrt{2}^i\right) \\
&= & t\left(\sum_{i=1}^{\lfloor {s \over 2} \rfloor} {s \choose 2i} \sqrt{2}^{2i} + \sum_{i=1}^{\lceil {s \over 2} \rceil} {s  \choose 2i-1} \sqrt{2}^{2i-1}\right)\\
& =& 2\left(\sum_{i=1}^{\lfloor {s \over 2} \rfloor}t {s \choose 2i} 2^{i-1}\right)+ \sqrt{2}\left( \sum_{i=1}^{\lceil {s \over 2} \rceil} t {s  \choose 2i-1} 2^{i-1}\right).
 \end{eqnarray*}
So with 
$$
c_1=\sum_{i=1}^{\lfloor {s \over 2} \rfloor}t {s \choose 2i} 2^{i-1} \in \mathbb{Z} \quad\text{and}\quad
c_2=\sum_{i=1}^{\lceil {s \over 2} \rceil} t {s  \choose 2i-1} 2^{i-1} \in \mathbb{Z}
$$ 
we get
$$
M_{s,t} = \left( \begin{array}{cc} 1 \ & t((1+\sqrt{2})^s-1) \\ 0 \ & 1 \end{array} \right)=\left( \begin{array}{cc} 1 \ & 2c_1 + \sqrt{2} c_2 \\ 0 \ & 1 \end{array} \right) = v^{c_1}u^{c_2} .
$$
For $s < 0$ we get with $a = -s \text{ mod } 2$:
\begin{eqnarray*}
t\left( \left(1+\sqrt{2}\right)^s-1\right) &=& t\left( \left(\sqrt{2} - 1\right)^{-s}-1\right)\\
&=& t\left( \left(\sum_{i=0}^{-s} {-s \choose i} (\sqrt{2})^i (-1)^{-s-i} \right)  -1\right)\\
&=& t\left(-2a + \sum_{i=1}^{-s} {-s \choose i} (\sqrt{2})^i (-1)^{-s-i} \right)\\
&=& t\left( -2a+ \sum_{i=1}^{\lfloor {-s \over 2} \rfloor} {-s \choose 2i} (\sqrt{2})^{2i} (-1)^{-s-2i}\right) + \\
& &  t\sum_{i=1}^{\lceil {-s \over 2} \rceil} {-s  \choose 2i-1} (\sqrt{2})^{2i-1} (-1)^{-s-(2i-1)}\\
&=& 2\left( -at+\sum_{i=1}^{\lfloor {-s \over 2} \rfloor}t {-s \choose 2i} 2^{i-1} (-1)^{-s-2i}\right) + \\
& &  \sqrt{2}\left( \sum_{i=1}^{\lceil {-s \over 2} \rceil} t {-s  \choose 2i-1} 2^{i-1} (-1)^{-s-(2i-1)}\right).
\end{eqnarray*}
 So with 
 \[c_1=-at+\sum_{i=1}^{\lfloor {-s \over 2} \rfloor}t {-s \choose 2i} 2^{i-1} (-1)^{-s-2i} \in \mathbb{Z}\] and
 \[c_2=\sum_{i=1}^{\lceil {-s \over 2} \rceil} t {-s  \choose 2i-1} 2^{i-1} (-1)^{-s-(2i-1)} \in \mathbb{Z}\] we get 
 \[M_{s,t} = \left( \begin{array}{cc} 1 \ & t((1+\sqrt{2})^s-1) \\ 0 \ & 1 \end{array} \right)=v^{c_1}u^{c_2} .\]
This shows that $\langle M_{s,t} \mid s,t \in \mathbb{Z} \rangle \subseteq \langle u,v \rangle$.
\qed
\end{proof}
The main result of this section is:

\begin{theorem} \label{thm-polycyclic}
Let $a \geq 2$.
Polyomial  identity testing for skew arithmetic circuits is logspace-reducible to the compressed word problem for the group $G_a$.
\end{theorem}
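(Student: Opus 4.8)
The plan is to convert a given skew arithmetic circuit $\mathcal C = (V,S,\rhs)$ with $s := |V|$ gates and variables $x_1,\dots,x_n$ (so $n \le s$) into an SLP $\mathcal G$ over the generators $g_a^{\pm 1},h^{\pm 1}$ of $G_a$ whose value is an upper unitriangular matrix $\bigl(\begin{smallmatrix}1 & \beta\\ 0 & 1\end{smallmatrix}\bigr)$, where $\beta$ is the value obtained by a Kronecker-style substitution $x_i \mapsto a^{E_i}$ in the polynomial $\val(\mathcal C)$. For a suitable choice of the exponents $E_i$ one will have $\beta = 0$ if and only if $\val(\mathcal C)$ is the zero polynomial, and since $\bigl(\begin{smallmatrix}1 & \beta\\ 0 & 1\end{smallmatrix}\bigr)$ is the identity exactly when $\beta = 0$, this gives $\val(\mathcal G) = 1$ in $G_a$ iff $\val(\mathcal C) = 0$. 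The construction of $\mathcal G$ will be carried out in logarithmic space.

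The simulation relies on the two matrix identities $\bigl(\begin{smallmatrix}1 & x\\ 0 & 1\end{smallmatrix}\bigr)\bigl(\begin{smallmatrix}1 & y\\ 0 & 1\end{smallmatrix}\bigr) = \bigl(\begin{smallmatrix}1 & x+y\\ 0 & 1\end{smallmatrix}\bigr)$ and $g_a^{e}\bigl(\begin{smallmatrix}1 & x\\ 0 & 1\end{smallmatrix}\bigr)g_a^{-e} = \bigl(\begin{smallmatrix}1 & a^{e}x\\ 0 & 1\end{smallmatrix}\bigr)$: a product of such matrices adds the off-diagonal entries, and conjugation by a power of $g_a$ multiplies that entry by a \emph{fixed} power of $a$. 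This is precisely enough to evaluate a skew circuit inside the off-diagonal entry. For every gate $A$ of $\mathcal C$ I introduce an SLP variable $\hat A$, with $\rhs_{\mathcal G}(\hat A) := h^{c}$ if $\rhs(A) = c \in \{-1,0,1\}$, $\rhs_{\mathcal G}(\hat A) := g_a^{E_i}\,h\,g_a^{-E_i}$ if $\rhs(A) = x_i$, and $\rhs_{\mathcal G}(\hat A) := \hat B\hat C$ if $\rhs(A) = B+C$. If $\rhs(A) = B\cdot C$ then, since $\mathcal C$ is skew, one factor -- say $C$ -- is an input gate, and I set $\rhs_{\mathcal G}(\hat A)$ to $g_a^{E_i}\hat B\,g_a^{-E_i}$, to $\hat B$, to $h h^{-1}$, or to $\hat B^{-1}$ according to whether $C$ is labelled $x_i$, $1$, $0$, or $-1$; the last case is handled via the usual closure of SLPs under inversion (reversing right-hand sides and inverting their symbols), and the case where $B$ rather than $C$ is the input factor is symmetric. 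The start variable is $\hat S$. The auxiliary powers $g_a^{\pm E_i}$ are produced by short chains $Z_0 := g_a^{M}$, $Z_j := Z_{j-1}^{\,d}$ for $j\ge 1$ (so $\val_{\mathcal G}(Z_j) = g_a^{M d^{j}}$), together with analogues for the inverses, where the integers $d,M$ are fixed below; as $E_i = M d^{\,i-1}$ has $O(s\log s)$ bits, $\mathcal G$ has polynomial size and is logspace-computable.

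An easy induction along the gate order then shows $\val_{\mathcal G}(\hat A) = \bigl(\begin{smallmatrix}1 & \nu_A\\ 0 & 1\end{smallmatrix}\bigr)$, where $\nu_A$ is $\val_{\mathcal C}(A)$ evaluated at $x_i = a^{E_i}$; in particular $\val(\mathcal G) = \bigl(\begin{smallmatrix}1 & \beta\\ 0 & 1\end{smallmatrix}\bigr)$ with $\beta = \sum_{\alpha} c_{\alpha}\,a^{M\langle\alpha\rangle}$, where $\val(\mathcal C) = \sum_{\alpha} c_{\alpha}x^{\alpha}$ and $\langle\alpha\rangle := \sum_{i} \alpha_i\,d^{\,i-1}$. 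Now I put $d := s+2$ and $M := s+1$. Because $\mathcal C$ is skew, a multiplication gate raises the formal degree only by $1$, so $\deg_{x_i}(\val(\mathcal C)) \le \deg(\mathcal C) \le s+1 < d$; hence the integers $\langle\alpha\rangle$ are pairwise distinct over the monomials with $c_\alpha \ne 0$ (they are the base-$d$ numbers with digit vector $\alpha$). Also because $\mathcal C$ is skew, a multiplication only scales coefficients by $0$, $\pm 1$, or a variable, so a routine induction gives $|c_\alpha| \le 2^{s}$ for all $\alpha$. Therefore $a^{M}-1 \ge 2^{s+1}-1 > 2^{s} \ge |c_\alpha|$ for every real $a \ge 2$. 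If $\val(\mathcal C) \ne 0$, let $p$ be the largest $\langle\alpha\rangle$ with $c_\alpha \ne 0$; then $|\beta| \ge a^{Mp} - 2^{s}\sum_{k=0}^{p-1} a^{Mk} \ge a^{Mp}\bigl(1 - 2^{s}/(a^{M}-1)\bigr) > 0$, whereas $\val(\mathcal C) = 0$ gives $\beta = 0$. Thus $\val(\mathcal G) = 1$ in $G_a$ iff $\val(\mathcal C) = 0$, and since all parameters and right-hand sides are clearly logspace-computable, this is the desired reduction.

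The step I expect to be most delicate is the last one. Because $a$ is a \emph{fixed} real number -- in the polycyclic case the algebraic number $1+\sqrt 2$, for which $\mathbb{Z}[a,a^{-1}] = \mathbb{Z}[\sqrt 2]$ has many accidental vanishing combinations -- one cannot just substitute $x_i \mapsto a^{e}$ and read off whether $\val(\mathcal C)$ is the zero polynomial from its vanishing at a single point. The fix combines a Kronecker substitution, which separates monomials and exploits the \emph{linear} degree bound for skew circuits, with a large enough ``coefficient base'' $a^{M}$, which exploits the \emph{linear} coefficient bound for skew circuits together with $a \ge 2$; then $\beta$ is essentially a redundancy-free $a^{M}$-ary expansion of the coefficient vector of $\val(\mathcal C)$, so $\beta = 0$ is equivalent to $\val(\mathcal C) = 0$. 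Both bounds genuinely use skewness -- for general arithmetic circuits degrees and coefficients can be exponential and doubly exponential -- and it is also skewness that makes the encoding possible at all, since $G_a$ can only ever scale the off-diagonal entry by a \emph{fixed} power of $a$, never multiply two entries computed by the circuit. Pinning down these points is the heart of the argument; the remaining matrix bookkeeping is routine.
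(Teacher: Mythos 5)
Your construction of the SLP is the same as the paper's: values are carried in the off-diagonal entry of upper unitriangular $2\times 2$ matrices, addition gates become products (which add those entries), and a skew multiplication by $x_i$ becomes conjugation by a power of $g_a$ (which scales the entry by a fixed power of $a$) -- this part is correct and matches the paper line by line, including the handling of constants via $h^{\pm1}$. Where you genuinely diverge is in the choice of evaluation point and its justification. The paper substitutes $x_i \mapsto a^{2^{i\cdot n^2}}$ and invokes a lemma of Allender, B\"urgisser, Kjeldgaard-Pedersen and Miltersen, which guarantees for \emph{arbitrary} arithmetic circuits of size $n$ that the polynomial vanishes iff it vanishes at this point; the doubly exponential exponents are needed there because general circuits can have exponential degree and doubly exponential coefficients. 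You instead exploit skewness twice -- linear degree bound $\deg_{x_i}(\val(\mathcal C))\le s+1$ and coefficient bound $|c_\alpha|\le 2^s$ -- to justify a Kronecker substitution $x_i\mapsto a^{Md^{i-1}}$ with $d=s+2$, $M=s+1$, and your leading-term estimate $|\beta|\ge a^{Mp}\bigl(1-2^s/(a^M-1)\bigr)>0$ is valid for every real $a\ge 2$ (the coefficients are integers, so the leading one has absolute value at least $1$). Both arguments are sound; yours is self-contained, more elementary, and produces exponents with only $O(s\log s)$ bits, hence a smaller SLP, while the paper's cited lemma has the advantage of extending beyond skew circuits -- indeed the paper's subsequent remark on ``powerful skew circuits'' (with binary-coded exponents and coefficients) relies on that generality, and your degree and coefficient bounds would not survive that extension. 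For the theorem as stated, your proof is complete.
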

In particular, there exist polycyclic groups for which the compressed word problem is at least as hard as 
polynomial  identity testing for skew circuits.  Recall that it is not known, whether there exists a 
polynomial time algorithm for polynomial identity testing restricted to skew arithmetic circuits.

For the proof of Theorem~\ref{thm-polycyclic}, we will make use of the following result from \cite{AllenderBKM09} (see the proof of 
Proposition~2.2 in  \cite{AllenderBKM09}, where the result is shown for $a=2$, but the proof immediately generalizes to any $a \geq 2$):

\begin{lemma} \label{lemma-allender}
Let $\mathcal{C}$ be an arithmetic circuit of size $n$ with variables $x_1, \ldots, x_m$ and let
$p(x_1, \ldots, x_m) = \val(\mathcal{C})$. Let $a \geq 2$ be a real number.
Then $p(x_1,\ldots,x_n)$ is the zero-polynomial if
and only if $p(\alpha_1,\ldots,\alpha_n)=0$, where $\alpha_i=a^{2^{i \cdot n^2}}$ for $1 \leq i \leq m$.
\end{lemma}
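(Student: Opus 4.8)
The plan is to prove the lemma by a Kronecker‑type substitution combined with a ``dominant term'' estimate. Substituting $x_i \mapsto \alpha_i = a^{2^{i n^2}}$ sends a monomial $\mu = x_1^{e_1}\cdots x_m^{e_m}$ to the real number $a^{w(\mu)}$ with $w(\mu) = \sum_{i=1}^m e_i\, 2^{i n^2}$, and the exponents $2^{i n^2}$ are chosen so large that (i)~the monomials actually occurring in $p = \val(\mathcal{C})$ are mapped to pairwise distinct, widely separated powers of $a$, and (ii)~the (integer) coefficients of $p$ are far too small for the summand of largest exponent to be cancelled by the remaining ones. The direction ``$p$ is the zero polynomial $\Rightarrow p(\alpha_1,\dots,\alpha_m)=0$'' is trivial, so it suffices to establish the contrapositive of the converse. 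Hence assume $p = \sum_\mu c_\mu\,\mu$ is \emph{not} the zero polynomial, the sum running over the monomials $\mu$ with $c_\mu \in \mathbb{Z}\setminus\{0\}$ (the coefficients are integers, since $\mathcal{C}$ uses only the constants $-1,0,1$).

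The first step is to record two size bounds, both by induction along a topological order $A_1 < A_2 < \cdots < A_n$ of the $n$ gates of $\mathcal{C}$. First, $\deg(A_k) \le 2^{k-1}$: an input gate has degree $1$, an addition gate takes a maximum, and a multiplication gate $A_k = B\cdot C$ has $\deg(A_k) = \deg(B)+\deg(C) \le 2^{k-2}+2^{k-2} = 2^{k-1}$; hence $\deg(\mathcal{C}) \le 2^{n-1}$, so every exponent $e_i$ occurring in $p$ satisfies $0 \le e_i \le 2^{n-1}$. Second, letting $N(A)$ be the sum of the absolute values of the coefficients of $\val_{\mathcal{C}}(A)$, one has $N(A_k) \le 2^{2^{k-1}}$, because $N$ is at most $1$ on input gates, sub‑additive on addition gates, and sub‑multiplicative on multiplication gates; hence $C := \max_\mu |c_\mu| \le N(\mathcal{C}) \le 2^{2^{n-1}}$.

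Next comes the key combinatorial observation. Put $B := 2^{n^2}$. Since every exponent $e_i$ in $p$ satisfies $0 \le e_i \le 2^{n-1} < B$, the integer $w(\mu)/B = \sum_{i=1}^m e_i B^{i-1}$ is simply the number whose base‑$B$ digit string is $(e_1,e_2,\dots,e_m)$; by uniqueness of base‑$B$ representations the map $\mu \mapsto w(\mu)$ is injective on the monomials occurring in $p$, and all its values are multiples of $B$. In particular, any two distinct weights occurring in $p$ differ by at least $B = 2^{n^2}$. Therefore $p(\alpha_1,\dots,\alpha_m) = \sum_{w\in W} \tilde c_w\, a^w$, where $W$ is the set of weights of monomials occurring in $p$ (nonempty, as $p\neq 0$) and each $\tilde c_w$ is a nonzero integer with $|\tilde c_w| \le C$. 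Letting $w^\star = \max W$ and using $|\tilde c_{w^\star}| \ge 1$, that every other element of $W$ is at most $w^\star - 2^{n^2}$, and $a \ge 2$, one obtains
\[
|p(\alpha_1,\dots,\alpha_m)| \;\ge\; a^{w^\star} - C\sum_{j=0}^{\,w^\star - 2^{n^2}} a^j \;\ge\; a^{w^\star}\bigl(1 - C\,a^{\,1 - 2^{n^2}}\bigr),
\]
which is strictly positive as soon as $C < a^{2^{n^2}-1}$. For $n \ge 2$ this holds since $C \le 2^{2^{n-1}} < 2^{2^{n^2}-1} \le a^{2^{n^2}-1}$ (the middle inequality because $2^{n-1} < 2^{n^2}-1$, the last because $a \ge 2$); hence $p(\alpha_1,\dots,\alpha_m) \neq 0$. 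The case $n = 1$ is immediate, since then $p \in \{0, \pm 1, x_1\}$ and $\alpha_1 = a^2 \neq 0$.

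I expect the only delicate point to be the bookkeeping in that final inequality: one must verify that \emph{both} the degree of $\mathcal{C}$ and the bit‑length of the coefficients of $p$ — each of size $2^{O(n)}$ — stay comfortably below the exponent gap $2^{n^2}$. This is precisely why the exponents carry the quadratic factor $i \cdot n^2$ rather than a merely linear $i \cdot n$: linear spacing already separates the monomials by degree, but it would not leave enough room for the (up to $2^n$‑bit) coefficients of $p$, and the dominant‑term estimate would break down.
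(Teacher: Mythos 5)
The paper does not prove this lemma at all --- it is quoted from \cite{AllenderBKM09} (Proposition~2.2) --- so there is no in-paper proof to compare against; what you have written is a correct, self-contained proof and is essentially the standard argument behind that citation. The three ingredients all check out: the inductive bounds $\deg(\mathcal{C})\le 2^{n-1}$ and $\sum_\mu |c_\mu|\le 2^{2^{n-1}}$, the injectivity of the weight map $\mu\mapsto w(\mu)$ via base-$2^{n^2}$ digits (which also guarantees that each $\tilde c_w$ is a genuine nonzero coefficient and that distinct weights differ by at least $2^{n^2}$), and the dominant-term estimate $|p(\alpha_1,\dots,\alpha_m)|\ge a^{w^\star}\bigl(1-C\,a^{1-2^{n^2}}\bigr)>0$ since $C\le 2^{2^{n-1}}<a^{2^{n^2}-1}$. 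One small quibble with your closing remark: with exactly these bounds, linearly spaced exponents $2^{in}$ would in fact still suffice for $n\ge 2$ (one needs $2^{n-1}<2^{n}-1$), so the quadratic factor $n^2$ is there to make the margin comfortable rather than being strictly necessary for your estimate; this side comment does not affect the validity of the proof.
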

{\it Proof of Thereom~\ref{thm-polycyclic}.}
Let us fix a skew arithmetic circuit $\mathcal{C}$ of size $n$ with $m$ variables $x_1, \ldots, x_m$. 
We will  define an SLP $\mathcal{G}$ over the alphabet $\{ g_a, g^{-1}_a, h, h^{-1} \}$
such that $\val(\mathcal{G})=\Id$ in $G_a$ if and only if $\val(\mathcal{C}) = 0$.
First of all, using iterated squaring, we can construct an SLP $\mathcal{H}$ with variables $A_1, A_1^{-1} \ldots, A_m, A_m^{-1}$ (and 
some other auxiliary variables)
such that
\begin{eqnarray*}
\val_{\mathcal{H}}(A_i) & = & g_a^{2^{i \cdot n^2}} = \left( \begin{array}{cc} a^{2^{i \cdot n^2}} \ & 0 \\ 0 & 1 \end{array} \right) = 
\left( \begin{array}{cc} \alpha_i \ & 0 \\ 0 & 1 \end{array} \right) \text{ and } \\
\val_{\mathcal{H}}(A_i^{-1}) & = & g_a^{-2^{i \cdot n^2}} = \left( \begin{array}{cc} a^{- 2^{i \cdot n^2}} \ & 0 \\ 0 & 1 \end{array} \right) = 
\left( \begin{array}{cc} \alpha_i^{-1} \ & 0 \\ 0 & 1 \end{array} \right) .
\end{eqnarray*}
%We set $\mathcal{G}=(V,s_1,\rhs)$ with
% \begin{align*}
%  V=&\left\{ v_{k} \mid v \in V, k \in \{-1,1\} \right\}  \\
%  \cup &\left\{ v_{k,\alpha} \mid v \in V, k \in \{-1,1\}, \alpha  \in \{B_1,\ldots,B_n,-1,0,1\} \right\}  \\
%  \cup& \left\{ x_{i,j,k} \mid 1 \leq i \leq n, 0 \leq j \leq i \cdot n^2, k \in \{-1,1\} \right\} .
%\end{align*}  
We now construct the SLP $\mathcal{G}$ as follows: The set of variables of $\mathcal{G}$ consists of the gates
of  $\mathcal{C}$ and the variables of $\mathcal{H}$. We copy the right-hand sides from $\mathcal{H}$ and define
the right-hand side for a gate $A$ of $\mathcal{C}$ as follows:
\begin{equation*} \label{rhs-polycyclic}
\rhs_{\mathcal{G}}(A) = 
\begin{cases} 
  \Id & \text{ if } \rhs_{\mathcal{C}}(A) = 0 \\ 
  h  & \text{ if } \rhs_{\mathcal{C}}(A) = 1 \\
  h^{-1} & \text{ if } \rhs_{\mathcal{C}}(A) = -1  \\
  B C  & \text{ if } \rhs_{\mathcal{C}}(A) = B+C \\
  A_i B A_i^{-1}   & \text{ if } \rhs_{\mathcal{C}}(A) = x_i \cdot B
\end{cases}
\end{equation*}
We claim that for every gate $A$ of $\mathcal{C}$ we have the following, where we denote for better readability the polynomial $\val_{\mathcal{C}}(A)$
to which gate $A$ evaluates with $p_A$:
$$
\val_{\mathcal{G}}(A) =  \left( \begin{array}{cc} 1 \ & p_A(\alpha_1, \ldots, \alpha_n) \\ 0 \ & 1 \end{array} \right)
$$
The case that $\rhs_{\mathcal{C}}(A)$ is a constant is obvious. If $\rhs_{\mathcal{C}}(A) = B+C$ then we obtain by induction
\begin{eqnarray*}
\val_{\mathcal{G}}(A) &=& \val_{\mathcal{G}}(B) \val_{\mathcal{G}}(C) \\
&=& \left( \begin{array}{cc} 1 \ & p_B(\alpha_1, \ldots, \alpha_n) \\ 0 \ & 1 \end{array} \right) \left( \begin{array}{cc} 1 \ & p_C(\alpha_1, \ldots, \alpha_n) \\ 0 \ & 1 \end{array} \right) \\
&=& \left( \begin{array}{cc} 1 \ & p_B(\alpha_1, \ldots, \alpha_n) + p_C(\alpha_1, \ldots, \alpha_n) \\ 0 \ & 1 \end{array} \right) \\
&=& \left( \begin{array}{cc} 1 \ & p_A(\alpha_1, \ldots, \alpha_n)  \\ 0 \ & 1 \end{array} \right) .
\end{eqnarray*}
Finally, if $\rhs_{\mathcal{C}}(A) = x_i \cdot B$ then we obtain by induction
\begin{eqnarray*}
\val_{\mathcal{G}}(A) &=&
\left( \begin{array}{cc} \alpha_i \ & 0 \\ 0 & 1 \end{array} \right) \val_{\mathcal{G}}(B) \left( \begin{array}{cc} \alpha_i^{-1} \ & 0 \\ 0 & 1 \end{array} \right) \\
&=& \left( \begin{array}{cc} \alpha_i \ & 0 \\ 0 & 1 \end{array} \right) \left( \begin{array}{cc} 1 \ & p_B(\alpha_1, \ldots, \alpha_n) \\ 0 \ & 1 \end{array} \right) \left( \begin{array}{cc} \alpha_i^{-1} \ & 0 \\ 0 & 1 \end{array} \right) \\
& = & \left( \begin{array}{cc} \alpha_i \ & \alpha_i \cdot p_B(\alpha_1, \ldots, \alpha_n) \\ 0 \ & 1 \end{array} \right) \left( \begin{array}{cc} \alpha_i^{-1} \ & 0 \\ 0 & 1 \end{array} \right) \\
& = & \left( \begin{array}{cc} 1 \ & \alpha_i \cdot p_B(\alpha_1, \ldots, \alpha_n) \\ 0 \ & 1 \end{array} \right) \\
& = & \left( \begin{array}{cc} 1 \ & p_A(\alpha_1, \ldots, \alpha_n) \\ 0 \ & 1 \end{array} \right) .
\end{eqnarray*}
We finally take the output gate $S$ of the skew circuit $\mathcal{C}$ as the start variable of $\mathcal{G}$.
Then, $\val(\mathcal{G})$ yields the identity matrix in the group $G_a$ if and only if $p_S(\alpha_1, \ldots, \alpha_n) = 0$.
By Lemma~\ref{lemma-allender} this is equivalent to $\val(\mathcal{C}) = p_S(x_1,\ldots,x_n) = 0$.
\qed

\medskip
\noindent
Actually, we can carry out the above reduction for a class of arithmetic circuits that is slightly larger than 
the class of skew arithmetic circuits. Let us define a {\em powerful skew circuit} as an arithmetic
circuit, where for every multiplication gate $A$, $\rhs(A)$ is of the form $\alpha \cdot \prod_{i=1}^m x_i^{e_i} \cdot B$
for a gate $B$, binary coded integers $\alpha, e_1, \ldots, e_m$ ($e_i \geq 0$), and variables $x_1, \ldots, x_m$.
Such a circuit can be converted into an ordinary arithmetic circuit, which, however is no longer skew.
To extend the reduction from the proof of Thereom~\ref{thm-polycyclic} to powerful skew circuits,
first note that in a right-hand side $\alpha \cdot \prod_{i=1}^m x_i^{e_i} \cdot B$ we can assume that
$\alpha=1$, since we can obtain $\alpha \cdot \prod_{i=1}^m x_i^{e_i} \cdot B$ from $\prod_{i=1}^m x_i^{e_i} \cdot B$
using additional addition gates. For a gate $A$ with $\rhs_{\mathcal{C}}(A) = \prod_{i=1}^m x_i^{e_i} \cdot B$ we set
 $\rhs_{\mathcal{G}}(A)  = \prod_{i=1}^m A_i^{e_i} B \prod_{i=1}^m A_i^{-e_i}$.
The powers $A_i^{e_i}$ and $A_i^{-e_i}$ can be defined using additional multiplication gates.
In our recent paper \cite{KoeLo15power}, we introduced powerful skew circuits, and proved that for this class,
polynomial identity testing can be solved in $\mathsf{coRNC}$. We applied this result to the compressed
word problem for wreath products.

Let us look again at the group $G=G_{1+\sqrt{2}}$ from Proposition~\ref{prop-a-polycyclic-group}. Its commutator subgroup is isomorphic 
to $\mathbb{Z} \times \mathbb{Z}$. Moreover, the quotient $G / [G,G]$ is isomorphic to $\mathbb{Z} \times \mathbb{Z}_2$: 
The $G$-generator $h$ from \eqref{g_a-and-h} satisfies $h^2 \in [G,G]$, whereas the generator $g_{1+\sqrt{2}}$ has infinite order in the quotient.
Hence, $G$ has a subnormal series of the form $G \rhd H \rhd \mathbb{Z} \times \mathbb{Z} \rhd \mathbb{Z} \rhd 1$, where
$H$ has index $2$ in $G$ and $H / (\mathbb{Z} \times \mathbb{Z}) \cong \mathbb{Z}$. 
The group $H$ is strongly polycyclic and has Hirsch length $3$.
By Theorem~\ref{thm-NC1-red} we obtain:

\begin{corollary}
There is a strongly polycyclic group $H$ of Hirsch length 3 such that
polynomial identity testing for skew circuits is polynomial time reducible to $\CWP(H)$.
\end{corollary}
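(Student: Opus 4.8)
The plan is to instantiate Theorem~\ref{thm-polycyclic} at $a = 1+\sqrt{2}$ and then to eliminate the finite quotient $G/H$ by means of Theorem~\ref{thm-NC1-red}.

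First I would note that $1+\sqrt{2} \geq 2$, so Theorem~\ref{thm-polycyclic} yields a logspace (and in particular polynomial time) reduction from polynomial identity testing for skew arithmetic circuits to $\CWP(G)$, where $G = G_{1+\sqrt{2}}$. By Proposition~\ref{prop-a-polycyclic-group} together with the discussion preceding this corollary, $G$ contains the subgroup $H$ of index $2$ that fits into the subnormal series $H \rhd \mathbb{Z}\times\mathbb{Z} \rhd \mathbb{Z} \rhd 1$ all of whose factors are infinite cyclic (here $\mathbb{Z}\times\mathbb{Z} = [G,G]$ and $H/[G,G] \cong \mathbb{Z}$); hence $H$ is strongly polycyclic of Hirsch length $3$, and since $[G:H]=2$ it is normal in $G$.

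Next I would apply Theorem~\ref{thm-NC1-red} to $H \trianglelefteq G$, which provides an $\AC^0$-reduction of $\CWP(G)$ to $\CWP(H)$ and $\CWP(G/H)$. The quotient $G/H$ has order $2$, so $\CWP(G/H)$ is decidable in polynomial time — in fact it lies in $\DET$ by Theorem~\ref{thm-CWP-finite-groups}, as $G/H$ is a finite solvable group. Replacing the $\CWP(G/H)$-oracle gates by polynomial time subroutines turns the above into a polynomial time reduction of $\CWP(G)$ to $\CWP(H)$ (the resulting procedure makes a single query to $\CWP(H)$ after a polynomial time preprocessing). Composing this with the reduction from the previous paragraph gives the desired polynomial time reduction from polynomial identity testing for skew circuits to $\CWP(H)$.

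I do not expect a genuine obstacle here, since both Theorem~\ref{thm-polycyclic} and Theorem~\ref{thm-NC1-red} are already in hand; the only delicate points are the purely group-theoretic observation that an index-$2$ subgroup is automatically normal, and the remark that the oracle for $\CWP(G/H)$ can be discarded because the finite group $G/H$ has a polynomial time solvable compressed word problem.
\qed
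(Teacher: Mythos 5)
Your proposal is correct and follows essentially the same route as the paper: instantiate Theorem~\ref{thm-polycyclic} at $a=1+\sqrt{2}$, identify the index-$2$ subgroup $H$ with subnormal series $H \rhd \mathbb{Z}\times\mathbb{Z} \rhd \mathbb{Z} \rhd 1$ (hence strongly polycyclic of Hirsch length $3$), and discharge the finite quotient via Theorem~\ref{thm-NC1-red}. The paper leaves the argument implicit in the preceding discussion, and your added remarks (normality of the index-$2$ subgroup, polynomial-time solvability of $\CWP(G/H)$ for the finite quotient) are exactly the right points to make explicit.
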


%\bibliographystyle{abbrv}
%\bibliography{../../../BIBTEX/bib}

\def\cprime{$'$} \def\cprime{$'$}

\appendix

\section{The complexity of the classical word problem for  finitely generated linear groups}

In this section we consider the ordinary
(uncompressed) word problem for linear groups. 
The most important result in this context was shown by Lipton and Zalcstein \cite{LiZa77}:

\begin{theorem}
For every f.g. linear group the word problem can be solved in deterministic logarithmic space.
\end{theorem}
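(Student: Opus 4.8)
The plan is to follow the classical argument of \cite{LiZa77} (see also \cite{Sim79}); we only sketch it. Let $G \le \mathsf{GL}_d(F)$ be generated by the finite set $\Sigma$. All entries of the matrices in $\Sigma \cup \Sigma^{-1}$ lie in a finitely generated subring $R$ of $F$, and we may assume that $F$ is the fraction field of $R$, so $F$ is a finitely generated field. Since $G$ acts faithfully, a word $w \in (\Sigma \cup \Sigma^{-1})^*$ of length $n$ represents $1$ in $G$ iff the matrix product $M_w$ it names equals $\Id$. The obstacle to doing this in logarithmic space is that $M_w$ itself cannot be stored: scanning $w$ from left to right, each step multiplies the accumulated entries by a bounded factor, so after $n$ steps their magnitude is at most singly exponential in $n$, i.e.\ their bit length is linear in $n$. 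The classical remedy is to carry out the matrix arithmetic modulo many small moduli and to argue, via Chinese remaindering, that a nonidentity $M_w$ is already detected by some small modulus.

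First I would treat the case $\mathrm{char}(F)=0$. Then $R$ is a finitely generated $\mathbb{Z}$-algebra, and after inverting a fixed integer $N$ we may assume the generator entries lie in a ring $\mathcal{O}=\mathbb{Z}[1/N,\theta]$, where $\theta$ is a root of a fixed monic $f\in\mathbb{Z}[x]$ of degree $e$; an element of $\mathcal{O}$ modulo a prime $q\nmid N$ is stored as an $e$-tuple over $\mathbb{Z}_q$. By the estimate above, if $M_w\neq\Id$ then some coordinate of $M_w-\Id$ is a nonzero element of $\mathcal{O}$ whose numerator is an integer of absolute value at most $2^{cn}$ for a constant $c=c(\Sigma)$, hence is nonzero modulo some prime $q=O(n)$ with $q\nmid N$. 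The algorithm therefore runs over all such primes $q$ up to a suitable linear bound and, for each, computes $M_w \bmod q$ over the finite ring $\mathcal{O}/q\mathcal{O}\cong\mathbb{Z}_q[x]/(f)$ by scanning $w$ once from left to right while maintaining the running product; it accepts iff every one of these products equals $\Id$.

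This is a logarithmic-space computation: the running product is a $d\times d$ matrix over $\mathbb{Z}_q[x]/(f)$, i.e.\ $d^2e$ residues of size $<q=\mathrm{poly}(n)$, which for fixed $d,e$ is $O(\log n)$ bits in total; scanning $w$ needs only an $O(\log n)$-bit position counter; one multiply-and-reduce step is an $\AC^0$ operation on these $O(\log n)$-bit objects; and the outer loop over $q$ reuses the same work tape. For $\mathrm{char}(F)=p>0$, $F$ is a finite extension of $\mathbb{F}_p(t_1,\dots,t_k)$; the generator entries lie in $\mathbb{F}_p[t_1,\dots,t_k][\theta]$, the coordinates of $M_w$ are polynomials in the $t_i$ of degree $O(n)$, and a nonzero such polynomial stays nonzero after substituting for the $t_i$ suitable elements of a field $\mathbb{F}_{p^m}$ with $m=O(\log n)$, so the same ``enumerate small moduli and scan'' strategy applies, now with the arithmetic carried out in $\mathbb{F}_{p^m}$.

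The main obstacle is not the logspace bookkeeping, which is routine once the setup is fixed, but the two points that make the passage to small moduli legitimate: (i) the field-theoretic normalization, replacing an arbitrary $F$ by $\mathbb{Q}(\theta)$ or $\mathbb{F}_p(t_1,\dots,t_k)(\theta)$ together with an explicit integral (respectively polynomial) model of the generators, which relies on $G$ being finitely generated; and (ii) the size estimate showing that the coordinates of $M_w$, hence of $M_w-\Id$, have bit length (respectively degree) only linear in $n$, so that being nonzero over $\mathcal{O}$ is already witnessed modulo a prime (respectively by a substitution) of size $\mathrm{poly}(n)$. Granting these, correctness follows from the Chinese remainder theorem, and the theorem is proved.
\qed
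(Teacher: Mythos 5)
The paper does not prove this theorem: it is stated as a known result with a citation to Lipton and Zalcstein \cite{LiZa77} (with Simon \cite{Sim79} covering positive characteristic), so the only thing to compare your sketch against is that classical argument --- and your sketch is indeed that argument: normalize the field, observe that the entries of the length-$n$ product have bit length (resp.\ degree) $O(n)$, and detect a non-identity product by reducing modulo polynomially many small moduli, each reduction being a single left-to-right scan with an $O(\log n)$-bit running product.

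There is, however, one genuine gap in your characteristic-$0$ case. You assert that the finitely generated ring $R$ containing the generator entries can be taken to be $\mathcal{O}=\mathbb{Z}[1/N,\theta]$ with $\theta$ algebraic over $\mathbb{Q}$. That is only true when $F$ has transcendence degree $0$ over $\mathbb{Q}$; in general a finitely generated field of characteristic $0$ is a finite extension of $\mathbb{Q}(t_1,\dots,t_k)$ with $k\geq 1$ possible, and a f.g.\ linear group may well have transcendental matrix entries (nothing in the hypothesis excludes, say, $\mathrm{diag}(\pi,1)$ as a generator). So the step ``some coordinate of $M_w-\Id$ is a nonzero element of $\mathcal{O}$ whose numerator is an integer of absolute value at most $2^{cn}$'' fails as stated. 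The repair is exactly the device you already deploy in characteristic $p$: treat $t_1,\dots,t_k$ as formal variables, clear denominators by a fixed polynomial $g$, note that the coordinates of $M_w$ are then polynomials of degree $O(n)$ with integer coefficients of bit length $O(n)$, and detect a nonzero one by substituting small integers for the $t_i$ (avoiding the zeros of $g$ and of the leading coefficient of the minimal polynomial of $\theta$) and reducing modulo a small prime; the enumeration of substitution points and primes remains logspace. Two smaller points you should also make explicit: the existence of the primitive element $\theta$ over $\mathbb{F}_p(t_1,\dots,t_k)$ (and over $\mathbb{Q}(t_1,\dots,t_k)$) requires choosing a separating transcendence base, which exists because prime fields are perfect; and the chosen modulus/substitution must not kill the fixed denominator $N$ (resp.\ $g$), which costs only finitely many excluded choices. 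With these repairs your sketch is the standard and correct proof.
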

By Tits alternative \cite{Tits72}, every linear group is either virtually solvable (i.e., has a solvable subgroup of finite index, which can be assumed to be normal) or contains a free group of rank $2$.
Since by \cite[Theorem~6.3]{Rob93}, the word problem for a free group of rank $2$ is hard for $\DLOGTIME$-uniform $\NC^1$,
one gets:

\begin{theorem}
For every f.g linear group that is not virtually solvable, the word problem is hard for $\DLOGTIME$-uniform $\NC^1$.
\end{theorem}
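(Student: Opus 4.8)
The plan is to reduce the word problem of the rank-$2$ free group $F_2$ to the word problem of $G$ by a single fixed substitution morphism, and then invoke the Tits alternative together with the cited hardness of $\mathrm{WP}(F_2)$. First I would apply the Tits alternative \cite{Tits72}: since $G$ is a f.g.\ linear group that is not virtually solvable, $G$ contains a subgroup $F = \langle x, y \rangle$ with $x, y \in G$ freely generating $F \cong F_2$. Fix a finite generating set $\Sigma$ of $G$ and words $u, v \in (\Sigma \cup \Sigma^{-1})^*$ with $u = x$ and $v = y$ in $G$. The elements $x, y$ and the words $u, v$ depend only on the fixed group $G$, so they may be treated as constants and hard-wired into the reduction.

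Next I would spell out the reduction. Given an input word $w$ over the free generators of $F_2$ and their inverses, let $\bar w$ be obtained from $w$ by replacing each letter $x^{\pm 1}$ by $u^{\pm 1}$ and each letter $y^{\pm 1}$ by $v^{\pm 1}$. Since $F = \langle x, y \rangle$ is free on $x, y$ and sits inside $G$, the element represented by $w$ is trivial in $F_2$ iff it is trivial in $F \le G$ iff $\bar w = 1$ in $G$; thus $w \mapsto \bar w$ is a correct many-one reduction from $\mathrm{WP}(F_2)$ to $\mathrm{WP}(G)$. Because $u$ and $v$ have constant length, $\bar w$ arises from $w$ by a length-multiplying block substitution in which the $i$-th output symbol depends only on a single, easily computed input position; hence the reduction is computable by a $\DLOGTIME$-uniform $\AC^0$ circuit family (indeed essentially by a projection).

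Finally, by \cite[Theorem~6.3]{Rob93}, $\mathrm{WP}(F_2)$ is hard for $\DLOGTIME$-uniform $\NC^1$, and $\DLOGTIME$-uniform $\NC^1$ is closed under $\AC^0$ many-one reductions; composing Robinson's reduction with the substitution reduction above therefore shows that $\mathrm{WP}(G)$ is hard for $\DLOGTIME$-uniform $\NC^1$. The Tits alternative and Robinson's theorem do the real work here; the only point requiring (routine) attention is verifying that hard-wiring the constant words $u, v$ and performing the uniform block substitution does not leave $\AC^0$, which is immediate since the substitution blows up lengths only by the constant factor $\max(|u|, |v|)$ and is position-wise trivial to describe.
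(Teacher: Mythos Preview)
Your proposal is correct and follows essentially the same approach as the paper: invoke the Tits alternative to obtain a copy of $F_2$ inside $G$, then appeal to Robinson's $\NC^1$-hardness of $\mathrm{WP}(F_2)$. The paper in fact gives only a two-line sketch and leaves the reduction from $\mathrm{WP}(F_2)$ to $\mathrm{WP}(G)$ implicit, whereas you have spelled out explicitly that the fixed substitution $x^{\pm1}\mapsto u^{\pm1}$, $y^{\pm1}\mapsto v^{\pm1}$ is an $\AC^0$ (indeed projection-like) many-one reduction; this extra detail is welcome and entirely standard.
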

This leads to the question for the complexity of the word problem for a virtually solvable linear group.
For the special case of a polycyclic group, Robinson \cite[Theorem~8.5]{Rob93} proved that the word problem belongs to $\TC^0$, but 
his circuits are not uniform. Waack proved in \cite{Waa91} that the word problem for a virtually solvable linear group
belongs to logspace-uniform $\NC^1$. Using the famous division breakthrough by Hesse et al. \cite{HeAlBa02}, we can improve 
Waack's result in the following way:

\begin{theorem} \label{thm-wp-linear-solvable}
For every f.g. virtually solvable linear group $G$ the word problem belongs to $\DLOGTIME$-uniform $\NC^1$.
If $G$ is moreover infinite solvable, then the word problem is complete for $\DLOGTIME$-uniform $\TC^0$.
\end{theorem}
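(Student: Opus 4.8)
The plan is to follow Waack's approach \cite{Waa91} but to replace his logspace-uniform $\NC^1$ arithmetic by the $\DLOGTIME$-uniform $\TC^0$ algorithms of Hesse, Allender and Barrington \cite{HeAlBa02} for iterated integer multiplication and division (and hence for powering $a \mapsto a^N$ with $N$ given in binary). First I would reduce to the triangular case: write $G \leq \mathsf{GL}_d(F)$ with $F$ finitely generated, and invoke the Lie--Kolchin--Mal'cev theorem (see \cite{Wehr73}) to obtain a normal finite-index subgroup $G_1$ which, after conjugation over a finite extension $F'$ of $F$, lies in the group $T_d(F')$ of upper triangular matrices. The word problem for $G$ then reduces to the word problem for $G_1$: on input $w$ one computes in parallel all prefix images of $w$ in the finite group $G/G_1$ --- a $\DLOGTIME$-uniform $\NC^1$ computation by \cite{Bar89}, and a $\DLOGTIME$-uniform $\TC^0$ one when $G/G_1$ is solvable \cite{BarringtonT88}, which is automatic once $G$ is solvable --- rejects if the image of $w$ itself is nontrivial, and otherwise rewrites $w$ letter by letter over a fixed generating set of $G_1$ by the Schreier procedure, each resulting block being a fixed word determined by the current letter and the two adjacent prefix cosets. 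Hence it suffices to place the word problem for $G_1 \leq T_d(F')$ in $\DLOGTIME$-uniform $\TC^0$.

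For this I would use that the $(i,j)$-entry of a product $M_1 M_2 \cdots M_n$ of upper triangular matrices equals $\sum_{i = k_0 \leq k_1 \leq \cdots \leq k_n = j} \prod_{\ell=1}^{n} (M_\ell)_{k_{\ell-1},k_\ell}$, and that, since $d$ is fixed, each summand has at most $d-1$ indices $\ell$ with $k_{\ell-1} < k_\ell$; so there are only $O(n^{d-1})$ summands, each addressed by choosing these at most $d-1$ ``ascents''. Each summand is a product of at most $d-1$ off-diagonal entries (fixed elements of $F'$) and at most $d$ blocks of consecutive diagonal entries of a single generator matrix; a block contributes an iterated product inside the fixed finitely generated group $\Lambda \leq (F')^{\ast}$ generated by all diagonal entries of the generator matrices. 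Since $\Lambda \cong \mathbb{Z}^r \times (\text{finite})$, such a block product equals $\prod_{i=1}^{r} u_i^{E_i}$ times a bounded torsion factor, where the $u_i \in F'$ are fixed and each exponent $E_i$ is a sum of $n$ bounded integers, hence of polynomial magnitude; the $E_i$ are obtained by iterated addition, the powers $u_i^{E_i}$ by iterated multiplication over $F'$ --- which reduces to integer iterated multiplication via a fixed integral basis of the number field involved (using numerical approximation for an algebraic extension, and treating transcendental generators of $\Lambda$ as formal monomials of polynomially bounded degree), hence is $\DLOGTIME$-uniform $\TC^0$ by \cite{HeAlBa02} --- and the bounded products of these by $O(1)$-fold multiplication. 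Finally each $(i,j)$-entry is the sum of polynomially many such values, which over $\mathbb{Z}$ or a number ring is a single iterated addition and, in the presence of transcendentals, is handled by first grouping the summands according to their (polynomially bounded, hence $\DLOGTIME$-addressable) monomial in the transcendentals and adding coefficients within each group; the input $w$ equals the identity iff all entries vanish. The characteristic-$p$ case is analogous with $\mathbb{Z}$ replaced by a fixed finite field.

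For the matching lower bound when $G$ is infinite solvable, note that $G$ contains an element of infinite order: passing to $G_1$, either its diagonal image in $(F'^{\ast})^d$ is infinite and some element of $G_1$ maps to an infinite-order element of it, or the finite-index subgroup $G_1 \cap U_d(F')$ is infinite and, in characteristic $0$, torsion-free (being finitely generated and unipotent); in characteristic $p$ a finitely generated periodic linear group is finite, so this case cannot occur. Thus $\mathbb{Z}$ embeds into $G$, say with the generator of $\mathbb{Z}$ represented by a fixed word $u$ over the generators of $G$. Substituting $u$ for $a$ and its formal inverse for $a^{-1}$ turns an instance $w \in \{a,a^{-1}\}^{\ast}$ of the word problem for $\mathbb{Z}$ into an instance of the word problem for $G$ with the same answer, and this substitution is a $\DLOGTIME$-uniform $\AC^0$ reduction. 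Since the word problem for $\mathbb{Z}$ asks whether an input string has equally many $a$'s and $a^{-1}$'s --- equivalently, whether exactly half the letters of a binary string are $1$ --- it is $\TC^0$-hard, and hence so is the word problem for $G$.

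I expect the main obstacle to be the second paragraph: making the $\TC^0$ arithmetic genuinely $\DLOGTIME$-uniform and provably correct over an arbitrary finitely generated field --- in particular dealing uniformly with transcendental generators of $\Lambda$ (where the ``powers'' become formal monomials) and with positive characteristic --- while carefully tracking the polynomial size and degree bounds that legitimize the appeal to \cite{HeAlBa02}. The group-theoretic ingredients (Lie--Kolchin--Mal'cev, Schreier rewriting, and the finite-group word problems) are classical, so the real work lies in this arithmetic bookkeeping.
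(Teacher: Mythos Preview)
Your overall architecture matches the paper almost exactly: Mal'cev/Lie--Kolchin triangularization to pass to a finite-index triangular subgroup, the coset/Schreier reduction (this is the paper's Theorem~\ref{thm-WP-finite-quotient}), and the $D_i+U_i$ expansion of a triangular product into $O(n^{d-1})$ summands, each involving at most $d$ blocks of diagonal products and $d-1$ fixed off-diagonal entries. Your lower-bound paragraph is correct and in fact supplies something the paper's proof leaves implicit.

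Where you diverge is in the field arithmetic, and this is where your sketch has a genuine gap. Your route through the finitely generated diagonal group $\Lambda\cong\mathbb{Z}^r\times(\text{finite})$ computes each diagonal block as an \emph{exponent vector}, but to form the \emph{sum} of the path contributions you must turn these back into concrete elements of $F'$ and add them. The generators $u_1,\dots,u_r$ of $\Lambda$ are arbitrary nonzero elements of a finitely generated extension of the prime field, not ``pure transcendentals'', so a power $u_i^{E_i}$ is in general a rational function of polynomial degree in a transcendence basis together with an algebraic component; it is not a single monomial, so the ``group by monomial and add coefficients'' step does not work as written. And ``numerical approximation'' cannot decide vanishing of such sums without an a~priori height/degree bound and a separation argument, which you have not supplied.

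The paper resolves exactly this obstacle, and its fix is cleaner than patching yours: choose a separating transcendence basis $x_1,\dots,x_k$ and a primitive element $\beta$ with monic minimal polynomial $q(y)\in\mathbb{Z}[x_1,\dots,x_k][y]$ (respectively over $\mathbb{F}_p$), clear a single common denominator $g$ from the generator entries, and carry out the whole $D+U$ computation \emph{symbolically} in the polynomial ring $\mathbb{Z}[x_1,\dots,x_k,y]$, reducing modulo $q(y)$ only at the end. The two primitives needed---iterated multiplication and division with remainder for multivariate polynomials---are placed in $\DLOGTIME$-uniform $\TC^0$ by a Kronecker substitution (pack $x_1,\dots,x_k,y$ into a single variable $z$ via $x_i\mapsto z^{(d+1)^{i-1}}$, $y\mapsto z^{(d+1)^k}$) followed by Eberly's reduction and \cite{HeAlBa02}. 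This replaces your $\Lambda$-exponent bookkeeping entirely and handles the transcendental and algebraic parts uniformly; your instinct that the work lies here was right, and this is the missing ingredient.
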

For the proof, we first have to consider the complexity of iterated multiplication and division with remainder for 
polynomials in several variables. Recall that $\mathbb{Z}[x_1,\ldots,x_k]$ denotes the ring of polynomials in the variables
$x_1,\ldots,x_k$ with coefficients from $\mathbb{Z}$. 
For a polynomial $p \in \mathbb{Z}[x_1,\ldots,x_k]$ and a variable $x_i$ we denote with 
$\deg_{x_i}(p)$ the maximal value $d$ such that $x_i^d$ appears in a monomial of $p$.
We specify polynomials from $\mathbb{Z}[x_1,\ldots,x_k]$ by writing down for every non-zero term
$a x_1^{n_1} \cdots x_k^{n_k}$ the tuple of integers $(a, n_1, \ldots, n_k)$, where $a$ is represented in binary notation
and the exponents are represented in unary notation. 
Iterated multiplication  of polynomials in the ring $\mathbb{Z}[x_1,\ldots,x_k]$ is the task of computing 
from a given list of polynomials $p_1, p_2, \ldots, p_n \in \mathbb{Z}[x_1,\ldots,x_k]$ the product polynomial
$p_1 p_2 \cdots p_n$. Division with remainder in the ring $\mathbb{Z}[x]$ (later, we will generalize this to several variables)
is the task of computing for given polynomials $s,t \in \mathbb{Z}[x]$ such that $t \neq 0$ and the leading coefficient of $t$ is $1$
the unique polynomials $s \text{ mod } t$ and 
$s \text{ div } t$ such that $s = (s \text{ div } t) \cdot t + s \text{ mod } t$ and $\deg(s \text{ mod } t) < \deg(t)$, where
$\deg(p)$ denotes the degree of the polynomial $t$.

The following result was shown in \cite{Eberly89,HeAlBa02}:\footnote{Explicitly, the result is stated in \cite[Corollary~6.5]{HeAlBa02},
where the authors note that Eberly's reduction \cite{Eberly89} from iterated polynomial multiplication to iterated integer multiplication is actually
an $\AC^0$-reduction, which yields a $\DLOGTIME$-uniform $\TC^0$ bound with the main result from \cite{HeAlBa02}.}

\begin{theorem}[c.f.~\cite{Eberly89,HeAlBa02}] \label{lemma-eberly}
Iterated multiplication and division with remainder of polynomials in the ring $\mathbb{Z}[x]$ (respectively, $\mathbb{F}_p[x]$)
belong to $\DLOGTIME$-uniform $\TC^0$.
\end{theorem}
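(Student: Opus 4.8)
The plan is to reduce polynomial arithmetic in $\mathbb{Z}[x]$ (respectively $\mathbb{F}_p[x]$) to iterated multiplication and division with remainder of binary integers, and then to quote the breakthrough of \cite{HeAlBa02}, which says precisely that those integer operations lie in $\DLOGTIME$-uniform $\TC^0$. For iterated multiplication of given polynomials $p_1, \ldots, p_n \in \mathbb{Z}[x]$ I would first compute, in $\AC^0$, an a priori bound $D$ on $\deg(p_1 \cdots p_n)$ — the sum of the $\deg(p_i)$, which is polynomial since the exponents are given in unary — and a bound $2^B$ on the absolute value of every coefficient of the product, where $B$ is obtained by summing the coefficient bit-lengths and adding $O(\log n) + O(\log D)$. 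Choosing $N$ a little larger than $B$ and substituting $x = 2^N$ turns each $p_i$ into an integer $p_i(2^N)$ — again an $\AC^0$ step, writing the binary coefficients into disjoint $N$-bit blocks and dealing with negative coefficients by the usual device of splitting a polynomial into its positive and negative parts, or by a shift-and-correct step — and the integer $\prod_i p_i(2^N)$ then stores the coefficients of $p_1 \cdots p_n$ in its base-$2^N$ digits. Computing this product is exactly iterated integer multiplication, which is in $\DLOGTIME$-uniform $\TC^0$ by \cite{HeAlBa02}, and reading the coefficients back out of the digit blocks is essentially bit selection, up to a carry-resolution step that is itself in $\TC^0$. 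This is, in substance, Eberly's reduction \cite{Eberly89}; the point is that it is an $\AC^0$ reduction, so no uniformity is lost.

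Division with remainder is handled in the same spirit. If $t$ is monic of degree $m$ and $\deg(s) \le D$, then the coefficients of the quotient $q$ and remainder $r$ with $s = qt + r$ and $\deg(r) < m$ again have polynomially bounded bit-length (monicity is what makes this true, and also what makes the naive long division over $\mathbb{Z}[x]$ well defined), so after substituting $x = 2^N$ for a suitably large $N$ the schoolbook long division of $s$ by $t$ over $\mathbb{Z}[x]$ mirrors base-$2^N$ long division of $s(2^N)$ by $t(2^N)$ over $\mathbb{Z}$; hence $q$ and $r$ can be recovered from the integer quotient and remainder of $s(2^N)$ upon division by $t(2^N)$, and integer division with remainder is in $\DLOGTIME$-uniform $\TC^0$ by \cite{HeAlBa02}. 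The $\mathbb{F}_p[x]$ case I would settle either by carrying out the whole computation over $\mathbb{Z}$ and reducing the result modulo $p$ at the end — legitimate because lifting preserves products and monicity — or by appealing directly to the $\mathbb{F}_p$-versions of the integer statements in \cite{HeAlBa02}.

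The genuinely hard mathematics here — that iterated integer multiplication and integer division with remainder are in $\TC^0$ at all — is imported wholesale from \cite{HeAlBa02}; the work left is purely the reduction, so I expect the main obstacle to be bookkeeping rather than ideas. Specifically, one must pin down the a priori degree and coefficient bounds tightly enough that the Kronecker blocks never overlap, even after signed cancellation and the borrows produced by long division; choose $N$ accordingly; and then verify that computing $N$, packing coefficients into $N$-bit blocks, and resolving carries and signs can all be done by $\DLOGTIME$-uniform constant-depth threshold circuits, and not merely in logarithmic space. Once those size and uniformity estimates are in place the theorem follows, and the single-variable statement obtained this way will serve — after a further substitution $x_i \mapsto x^{e_i}$ with geometrically growing exponents $e_i$, which linearizes the $k$-variable case — as the base case for the multivariate version of iterated multiplication and division needed in the proof of Theorem~\ref{thm-wp-linear-solvable}.
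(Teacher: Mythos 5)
The paper does not actually prove this statement: it imports it from Eberly and Hesse--Allender--Barrington, and its footnote describes precisely the route you take, namely that Eberly's Kronecker-substitution reduction of iterated polynomial multiplication to iterated integer multiplication is an $\AC^0$-reduction, so the $\TC^0$ bounds of \cite{HeAlBa02} transfer. Your treatment of the iterated-multiplication half is exactly that argument, and the details you list (degree and coefficient bounds computed in $\AC^0$, packing into $N$-bit blocks, splitting off negative coefficients) are the right ones.

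The division half, however, asserts something that is false as stated. If $s = qt + r$ in $\mathbb{Z}[x]$ with $t$ monic and $\deg r < \deg t$, it does \emph{not} follow that $q(2^N)$ and $r(2^N)$ are the integer quotient and remainder of $s(2^N)$ divided by $t(2^N)$: the polynomial remainder $r$ (and $s$ itself) may have negative leading coefficient, so $r(2^N)$ can be negative, in which case integer division returns $q(2^N)-1$ and $r(2^N)+t(2^N)$; and even when the integer quotient does equal $q(2^N)$, its base-$2^N$ digits are not the coefficients of $q$ once $q$ has negative coefficients, because borrows propagate across blocks. Since integer polynomials routinely have negative coefficients, this is not an edge case but the generic one, so the claim that polynomial long division ``mirrors'' base-$2^N$ long division is the gap, not mere bookkeeping. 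It is repairable: either shift all coefficients of $s$, $q$, $r$ into a known nonnegative range by adding a fixed offset polynomial before substituting (using the a priori coefficient bounds that monicity of $t$ provides), and undo the offset at the end; or avoid integer division altogether by computing $q$ via truncated power-series inversion of the reversed $t$ (a finite geometric series, hence again an instance of iterated multiplication and summation in $\TC^0$) and then setting $r = s - qt$. Either fix restores the argument, but one of them has to be made explicit.
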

We need generalizations of Lemma~\ref{lemma-eberly} to multivariate polynomials. In the following proofs we always use
the fact that iterated addition, iterated multiplication and division with remainder of binary coded integers can be done in $\DLOGTIME$-uniform $\TC^0$
\cite{HeAlBa02}.

\begin{lemma} \label{lemma-poly-mult-TC0}
Iterated multiplication  of polynomials in the ring $\mathbb{Z}[x_1,\ldots,x_k]$ (respectively, $\mathbb{F}_p[x_1,\ldots,x_k]$) belongs to 
$\DLOGTIME$-uniform $\TC^0$.
\end{lemma}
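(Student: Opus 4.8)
The plan is to reduce iterated multiplication of multivariate polynomials to the single-variable case covered by Theorem~\ref{lemma-eberly}, using a Kronecker-type substitution that collapses the variables $x_1,\ldots,x_k$ into powers of a single variable $x$. First I would compute, in $\DLOGTIME$-uniform $\TC^0$, an a priori bound on the degree of the product $p = p_1 p_2 \cdots p_n$ in each variable $x_i$: since $\deg_{x_i}(p_1 \cdots p_n) = \sum_{j=1}^n \deg_{x_i}(p_j)$ and all exponents are given in unary, this is just an iterated addition of unary numbers, which is certainly in $\TC^0$. Let $D_i = 1 + \sum_{j} \deg_{x_i}(p_j)$; then every monomial $a\, x_1^{e_1}\cdots x_k^{e_k}$ appearing in any $p_j$ (or in the product) has $0 \le e_i < D_i$.

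Next I would apply the substitution $x_i \mapsto x^{N_i}$, where $N_1 = 1$ and $N_{i+1} = N_i D_i$ (so $N_i = D_1 D_2 \cdots D_{i-1}$). The point is that the map sending the exponent vector $(e_1,\ldots,e_k)$ with $0 \le e_i < D_i$ to the single exponent $\sum_i e_i N_i$ is injective, so no two distinct monomials collide after the substitution. Each $p_j$ becomes a univariate polynomial $\tilde p_j \in \mathbb{Z}[x]$ (resp.\ $\mathbb{F}_p[x]$); computing the exponent $\sum_i e_i N_i$ for each term is a short iterated multiplication/addition of integers, which is in $\TC^0$, and collecting equal exponents is again an iterated addition of the binary coefficients. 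I then invoke Theorem~\ref{lemma-eberly} to compute $\tilde p = \tilde p_1 \cdots \tilde p_n$ in $\DLOGTIME$-uniform $\TC^0$. Finally I read off the multivariate product: each monomial $c\, x^m$ of $\tilde p$ with a nonzero coefficient corresponds to the unique tuple $(e_1,\ldots,e_k)$ obtained by writing $m$ in the mixed-radix system with digit bounds $D_1, D_2, \ldots$, i.e.\ $e_1 = m \bmod D_1$, then recurse on $\lfloor m/D_1\rfloor$, etc. These divisions with remainder on binary integers are in $\TC^0$, so the whole decoding step is as well.

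One subtlety I need to address carefully is the \emph{uniformity} and \emph{size} of the construction: the exponents $N_i$ are products of $k$ unary numbers each of magnitude polynomial in the input, so $N_i$ is of polynomial magnitude and hence has $O(\log n)$ bits, and $\deg(\tilde p) = \sum_i (D_i - 1) N_i$ is still polynomially bounded, which is exactly what is needed for Theorem~\ref{lemma-eberly} to apply and for the output of $\tilde p$ to be a legitimate polynomial-size object. I would also note that the number of monomials stays polynomially bounded throughout (the product of $n$ polynomials each with polynomially many terms need not be expanded naively — the $\TC^0$ bound for univariate iterated multiplication from Theorem~\ref{lemma-eberly} already handles this), so there is no blow-up. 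The composition of a constant number of $\DLOGTIME$-uniform $\TC^0$ stages (compute the $D_i$ and $N_i$; encode each $p_j$ as $\tilde p_j$; multiply; decode) is again $\DLOGTIME$-uniform $\TC^0$.

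The main obstacle, such as it is, is bookkeeping rather than a genuine mathematical difficulty: one must make sure that the encoding and decoding maps are themselves computable in $\DLOGTIME$-uniform $\TC^0$ with the specified input conventions (binary coefficients, unary exponents), and in particular that the degree bounds $D_i$ are available cheaply — which they are, precisely because exponents come in unary. Everything else is a direct reduction to Theorem~\ref{lemma-eberly} together with the closure of $\DLOGTIME$-uniform $\TC^0$ under composition and under iterated integer arithmetic \cite{HeAlBa02}.
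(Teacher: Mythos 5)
Your proposal is correct and follows essentially the same route as the paper: both reduce to univariate iterated multiplication via a Kronecker substitution (the paper uses the fixed-radix map $x_i \mapsto z^{(d+1)^{i-1}}$ with a single degree bound $d$, you use the mixed-radix variant $x_i \mapsto x^{N_i}$ with per-variable bounds $D_i$, which is an inessential difference), then invoke Theorem~\ref{lemma-eberly} and decode by base-conversion in $\DLOGTIME$-uniform $\TC^0$. No gaps.
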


\begin{proof}
We only prove the result for $\mathbb{Z}[x_1,\ldots,x_k]$; exactly the same proof also works for $\mathbb{F}_p[x_1,\ldots,x_k]$.
 
For $d \geq 1$ let $\mathbb{Z}[x_1,\ldots,x_k]_d \subseteq \mathbb{Z}[x_1,\ldots,x_k]$ be the set of 
all polynomials $p \in \mathbb{Z}[x_1,\ldots,x_k]$ such that 
$\deg_{x_i}(p) \leq d$ for all $1 \leq i \leq k$. For $d \geq 2$
we define the mapping $\mathcal{U}_d: \mathbb{Z}[x_1,\dots,x_k] \to \mathbb{Z}[z]$
by
$$
\mathcal{U}_d(p(x_1, x_2,\dots,x_k)) = p(z^{d^{0}},z^{d^{1}},\ldots,z^{d^k}).
$$
The mapping $\mathcal{U}_{d}$ is also used in \cite{AgrawalB03} to reduce polynomial identity testing to 
univariate polynomial identity testing.
The mapping $\mathcal{U}_{d+1}$ restricted to $\mathbb{Z}[x_1,\ldots,x_k]_d$ is injective, since
for a polynomial $p \in \mathbb{Z}[x_1,\ldots,x_k]_d$ we obtain the polynomial $\mathcal{U}_{d+1}(p)$ by replacing for every
monomial $a \cdot x_1^{n_1} \cdots x_k^{n_k}$ by the monomial $a \cdot z^N$, where $N$ the number with base-$(d+1)$
expansion $(n_1 \cdots n_k)$ (with the most significant digit on the right).
Moreover, for  all polynomials $p,q \in \mathbb{Z}[x_1,\ldots,x_k]$ and all $d \geq 2$ we have 
\begin{equation} \label{U-is-homomorph}
\mathcal{U}_d(p + q) = \mathcal{U}_d(p) + \mathcal{U}_d(q)
\text{ and }\mathcal{U}_d(p q)= \mathcal{U}_d(p)\mathcal{U}_d(q).
\end{equation}
We can calculate $\mathcal{U}_d(p)$ for a given polynomial  $p \in \mathbb{Z}[x_1,\ldots,x_k]$ and a given number $d \geq 2$ 
in $\DLOGTIME$-uniform $\TC^0$: For a monomial $a x_1^{n_1} \cdots x_k^{n_k}$ (which is represented
by the tuple $(a,n_1, \ldots, n_k)$) we have to compute the pair $(a, \sum_{i=0}^{k-1} n_{i+1} d^i)$,
which is possible in $\DLOGTIME$-uniform $\TC^0$.
Similarly, we can compute 
$\mathcal{U}^{-1}_{d+1}(p)$ for a polynomial $p \in \mathcal{U}_{d+1}(\mathbb{Z}[x_1,\ldots,x_k]_d)$.
in $\DLOGTIME$-uniform $\TC^0$: From a given monomial $a z^N$ (represented by the pair $(a,N)$) we have
to compute the tuple $(a, n_1, \ldots, n_k)$, where $n_i = (N \text{ div } (d+1)^{i-1}) \text{ mod } (d+1)$,
which can be done in $\DLOGTIME$-uniform $\TC^0$.

We now multiply given polynomials  $p_1,\dots,p_n \in \mathbb{Z}[x_1,\dots,x_k]$ in the following way, where
all steps can be carried out in $\DLOGTIME$-uniform $\TC^0$ by the above remarks.
\begin{enumerate}
\item 
Compute the number $d  = \max\{ \sum_{i=1}^n \deg_{x_j}(p_i) \mid  1 \leq j \leq k\}$.
This number bounds the degree of the product polynomial $p_1 p_2 \cdots p_n$ in any of the variables
$x_1, \ldots, x_n$, i.e., $p_1 p_2 \cdots p_n \in \mathbb{Z}[x_1,\ldots,x_k]_d$.
\item 
Compute in parallel $s_i(z) = \mathcal{U}_{d+1}(p_i)$ for $1 \leq i \leq n$.
\item Using Theorem~\ref{lemma-eberly}, compute the product
$S(z) = s_1(z) s_2(z) \cdots s_n(z)$, which is $\mathcal{U}_{d+1}(p_1 p_2\cdots p_n)$ by \eqref{U-is-homomorph}.
\item 
Finally, compute $\mathcal{U}^{-1}_{d+1}(S)$, which is $p_1 p_2 \cdots p_n$. 
\qed
\end{enumerate}
\end{proof}
For polynomial division in several variables,  we need a distinguished variable.
Therefore, we consider the polynomial ring $\mathbb{Z}[x_1,\ldots,x_k,y]$.
We view polynomials from this ring
as polynomials in the variable $y$, where coefficients are polynomials from $\mathbb{Z}[x_1,\ldots,x_k]$.
We will only divide by a polynomial $t$ for which the leading monomial $p(x_1, \ldots, x_n) y^m$ of $t$
satisfies $p(x_1, \ldots, x_n) = 1$. This ensures that the coefficients of the quotient and remainder polynomial are again
in $\mathbb{Z}[x_1,\ldots,x_k]$ (and not in the quotient field $\mathbb{Q}(x_1,\ldots,x_n)$). 

\begin{lemma} \label{lemma-poly-div-TC0}
Division with remainder of polynomials in the ring $\mathbb{Z}[x_1,\ldots,x_k,y]$ (respectively, $\mathbb{F}_p[x_1,\ldots,x_k,y]$) belongs to
$\DLOGTIME$-uniform $\TC^0$.
\end{lemma}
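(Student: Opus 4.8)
The plan is to reduce multivariate division with remainder to the univariate case provided by Theorem~\ref{lemma-eberly}, using a Kronecker-type substitution in the spirit of $\mathcal{U}_{d}$ from the proof of Lemma~\ref{lemma-poly-mult-TC0}, but with one essential modification: the distinguished variable $y$ must be made the \emph{most significant} variable of the substitution, so that the monic leading term and the remainder-degree condition are preserved. I treat only the ring $\mathbb{Z}[x_1,\ldots,x_k,y]$; the argument for $\mathbb{F}_p[x_1,\ldots,x_k,y]$ is literally the same. As in Lemma~\ref{lemma-poly-mult-TC0}, $k$ is a fixed constant, so a product of $k$ polynomially bounded quantities is still polynomially bounded, and iterated integer addition, multiplication and division with remainder are available in $\DLOGTIME$-uniform $\TC^0$ by \cite{HeAlBa02}.

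First I would fix notation. The input consists of $s,t \in \mathbb{Z}[x_1,\ldots,x_k,y]$ with $t\neq 0$ and the coefficient of $y^{m}$ in $t$ equal to the constant $1$, where $m=\deg_y(t)$; thus $t = y^m + \sum_{j<m} t_j y^j$ with $t_j \in \mathbb{Z}[x_1,\ldots,x_k]$. Since $t$ is monic in $y$, ordinary division by a monic polynomial over the commutative ring $\mathbb{Z}[x_1,\ldots,x_k]$ shows that the quotient $q = s \text{ div } t$ and remainder $r = s \bmod t$ exist, are unique, and again have coefficients in $\mathbb{Z}[x_1,\ldots,x_k]$. Writing out the usual division recursion for the $y$-coefficients of $q$ and $r$ gives, by induction, the bounds $\deg_{x_\ell}(q),\deg_{x_\ell}(r) \le \deg_{x_\ell}(s) + (\deg_y(s)+1)\deg_{x_\ell}(t) =: D_\ell$ for every $\ell$, together with $\deg_y(q) \le \deg_y(s) =: D_y$ and $\deg_y(r) < m$. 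Since exponents are given in unary, the numbers $D_1,\ldots,D_k,D_y$ are polynomially bounded and computable in $\DLOGTIME$-uniform $\TC^0$; note that I do not actually run the division recursion, I only need these bounds to set up the substitution.

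Next I would introduce the ring homomorphism $\Phi:\mathbb{Z}[x_1,\ldots,x_k,y]\to\mathbb{Z}[z]$ given by $x_i \mapsto z^{a_i}$ with $a_1 = 1$ and $a_{i+1}=a_i(D_i+1)$ for $1 \le i \le k$, and $y \mapsto z^{a_{k+1}}$ where $a_{k+1} = E := \prod_{i=1}^{k}(D_i+1)$. This is exactly a mixed-radix encoding in which $y$ is the top digit, so $\Phi$ is injective on polynomials of $x_i$-degree $\le D_i$ and $y$-degree $\le D_y$; moreover $E$ and all exponents occurring in $\Phi(p)$ remain polynomially bounded, and $\Phi$ together with its digit-extraction inverse on the relevant range of exponents are computable in $\DLOGTIME$-uniform $\TC^0$, just as $\mathcal{U}_d$ and $\mathcal{U}_d^{-1}$ in the proof of Lemma~\ref{lemma-poly-mult-TC0}. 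The key point of putting $y$ on top is that, because the coefficient of $y^m$ in $t$ is the \emph{constant} $1$ and every other monomial of $t$ has $y$-degree $< m$, the highest power of $z$ in $\Phi(t)$ is $z^{mE}$ with coefficient $1$; hence $\Phi(t)$ is monic of $z$-degree $mE$, and $\deg_z(\Phi(r)) < mE$ since $\deg_y(r) < m$. Applying $\Phi$ to $s = qt + r$ yields $\Phi(s) = \Phi(q)\Phi(t) + \Phi(r)$ in $\mathbb{Z}[z]$ with $\Phi(t)$ monic and $\deg_z(\Phi(r)) < \deg_z(\Phi(t))$, so by uniqueness of univariate division $\Phi(q) = \Phi(s)\text{ div }\Phi(t)$ and $\Phi(r) = \Phi(s)\bmod\Phi(t)$.

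The algorithm then reads: compute the $D_\ell$, $D_y$ and the exponents $a_i$; form $\Phi(s)$ and $\Phi(t)$; use Theorem~\ref{lemma-eberly} to divide $\Phi(s)$ by $\Phi(t)$ (whose leading coefficient is $1$, as that theorem requires), obtaining $Q(z)$ and $R(z)$; and apply the digit-extraction inverse of $\Phi$ to $Q$ and $R$ to read off $q$ and $r$. Correctness follows from the previous paragraph together with the degree bounds, which guarantee that $Q = \Phi(q)$ and $R = \Phi(r)$ have exponents in the range on which $\Phi^{-1}$ is defined; in particular $\deg_z(R) < mE$ forces the recovered $r$ to satisfy $\deg_y(r) < m$, as it must. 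Each step is a $\DLOGTIME$-uniform $\TC^0$ computation, and composing constantly many of them stays in $\DLOGTIME$-uniform $\TC^0$. I expect the only genuine difficulty, beyond the bookkeeping already done for Lemma~\ref{lemma-poly-mult-TC0}, to be precisely the design and verification of $\Phi$: a substitution that treats $y$ on the same footing as the $x_i$ destroys the monic leading term and hence the correspondence between division in $\mathbb{Z}[x_1,\ldots,x_k,y]$ (by a $y$-monic divisor) and division in $\mathbb{Z}[z]$, so one must order the variables with $y$ most significant and check that the univariate image of the divisor is still monic of the predicted degree with a strictly smaller remainder degree.
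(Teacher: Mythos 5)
Your proposal is correct and follows essentially the same route as the paper: a Kronecker substitution with $y$ as the most significant digit, degree bounds on the quotient and remainder (which the paper obtains by "tracing the division algorithm" and absorbs into the uniform base $d^2+1$ rather than your mixed-radix exponents), and an appeal to the uniqueness of univariate division with remainder to transfer the result back via the inverse substitution. The only differences are cosmetic (tailored mixed radix versus a single base, and a more explicit quotient/remainder degree bound).
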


\begin{proof}
Again, we only prove the result for $\mathbb{Z}[x_1,\ldots,x_k,y]$; exactly the same proof works for $\mathbb{F}_p[x_1,\ldots,x_k,y]$ as well.
As in the proof of Lemma~\ref{lemma-poly-mult-TC0} consider the set $\mathbb{Z}[x_1,\ldots,x_k,y]_d \subseteq \mathbb{Z}[x_1,\ldots,x_k,y]$
of all polynomials in $\mathbb{Z}[x_1,\ldots,x_k,y]$ such that for every monomial $a \cdot x_1^{n_1} \cdots x_k^{n_k} y^n$ we have
$n_1, \ldots, n_k,n < d$, and the mapping $\mathcal{U}_d: \mathbb{Z}[x_1,\dots,x_k,y] \to \mathbb{Z}[z]$ with
$$
\mathcal{U}_d(p(x_1,x_2, \dots,x_k,y)) = p(z^{d^{0}},z^{d^{1}},\dots,z^{d^{k-1}},z^{d^k}).
$$
Note that for  polynomials $p,q \in \mathbb{Z}[x_1,\ldots,x_k,y]_d$ with $\deg_y(p) < \deg_y(q)$ we have $\deg(\mathcal{U}_{d+1}(p)) < \deg(\mathcal{U}_{d+1}(q))$,
since the exponent of $y$ becomes the most significant digit in the base-$(d+1)$ representation.
Then, for all polynomials $s,t \in \mathbb{Z}[x_1,\ldots,x_k,y]_d$ (where the leading coefficient of $t$ is $1$)
we have 
$$
\mathcal{U}_{d^2+1}(s \text{ mod } t)= \mathcal{U}_{d^2+1}(s) \text{ mod } \mathcal{U}_{d^2+1}(t).
$$ 
To see this, assume that
$s = q t + r$ with $\deg_y(r) < \deg_y(t)$, so that $r = s \text{ mod } t$.
We have $q,r \in \mathbb{Z}[x_1,\ldots,x_k,y]_{d^2}$, which can be checked by tracing the polynomial
division algorithm.
By \eqref{U-is-homomorph} we have 
$$\mathcal{U}_{d^2+1}(s) = \mathcal{U}_{d^2+1}(q) \mathcal{U}_{d^2+1}(t) + \mathcal{U}_{d^2+1}(r).
$$
Moreover,  $\deg(\mathcal{U}_{d^2+1}(r)) < \deg(\mathcal{U}_{d^2+1}(t))$. Hence 
$$
\mathcal{U}_{d^2+1}(r) = \mathcal{U}_{d^2+1}(s) \text{ mod } \mathcal{U}_{d^2+1}(t).
$$
Now we can compute the remainder $s \text{ mod } t$ for given polynomials 
$s,t \in \mathbb{Z}[x_1,\ldots,x_k,y]$ (where the leading coefficient of $t$ is $1$)
in $\DLOGTIME$-uniform $\TC^0$ as follows:
\begin{enumerate}
\item
Compute the number $d  = \max\{ \deg_z(p) \mid  p \in \{s,t\}, z \in \{x_1, \ldots, x_k,y\}\}$, so that
$s,t \in \mathbb{Z}[x_1,\ldots,x_k,y]_d$.
\item 
Compute in parallel $u(z) = \mathcal{U}_{d^2+1}(s)$ and $v(z) = \mathcal{U}_{d^2+1}(t)$.
\item Compute, using Theorem~\ref{lemma-eberly},
$R(z) = u(z) \text{ mod } v(z)$, which is $\mathcal{U}_{d^2+1}(s \text{ mod } t)$.
\item 
Finally, compute $\mathcal{U}^{-1}_{d^2+1}(R)$ which is $s \text{ mod } t$. 
\qed
\end{enumerate}
In the same way we can also compute the quotient, but we only will need the remainder 
$s \text{ mod } t$ in the following.
\end{proof}
Finally, we will also need the following result from \cite{Rob93}:

\begin{theorem}[Theorem~5.2 in \cite{Rob93}] \label{thm-WP-finite-quotient}
Let $G$ be a f.g. group with a normal subgroup $H$ of finite index. Then, the word problem for 
$G$ is $\AC^0$-reducible to the word problems for $H$ and $G/H$.
\end{theorem}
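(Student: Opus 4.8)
\emph{Proof idea.} This is the uncompressed shadow of Theorem~\ref{thm-NC1-red}, and I would prove it by the same coset-tracking argument, only simpler because no straight-line programs are involved. Write $G=\langle\Sigma\rangle$; the finite-index subgroup $H$ is again finitely generated, say $H=\langle\Gamma\rangle$, and I take $\Sigma$ (its image) as a generating set of the finite group $G/H$. Fix right-coset representatives $g_1=1,g_2,\dots,g_n$ of $H$ in $G$, fixed words $\overline{w_\ell}\in(\Sigma\cup\Sigma^{-1})^{*}$ representing $g_\ell$, and a finite table that, for each $a\in\Sigma\cup\Sigma^{-1}$ and each $r\in\{1,\dots,n\}$, records the unique $j=j(a,r)$ with $g_ra\in Hg_j$ together with a fixed word $w_{a,r}\in(\Gamma\cup\Gamma^{-1})^{*}$ representing $g_rag_j^{-1}\in H$. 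All of this is finite data and gets hard-wired into the circuit.

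Given an input word $w=a_1a_2\cdots a_m\in(\Sigma\cup\Sigma^{-1})^{*}$ the circuit proceeds in three stages. First it determines, for every prefix, which coset it lies in: writing $c(i)$ for the index with $a_1\cdots a_i\in Hg_{c(i)}$ (so $c(0)=1$), one has $a_1\cdots a_i\in Hg_\ell$ iff $a_1\cdots a_i\,\overline{w_\ell}^{-1}\in H$ iff $a_1\cdots a_i\,\overline{w_\ell}^{-1}=1$ in $G/H$, and here the normality of $H$ is exactly what makes $G/H$ a group and turns ``lies in $H$'' into ``trivial in $G/H$''. So for all $0\le i\le m$ and all $1\le\ell\le n$ in parallel I feed the word $a_1\cdots a_i\,\overline{w_\ell}^{-1}$ to a $\mathsf{WP}(G/H)$-oracle gate; for each $i$ exactly one $\ell$ answers ``$1$'', which is $c(i)$, delivered in unary on $n$ wires. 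Forming these query words from the input is pure routing, hence $\AC^0$. Second, from $a_1\cdots a_i=k_ig_{c(i)}$ with $k_i\in H$ one gets $k_i=k_{i-1}\cdot\big(g_{c(i-1)}a_ig_{c(i)}^{-1}\big)$, so $k_m$ is represented by the $(\Gamma\cup\Gamma^{-1})^{*}$-word $u=w_{a_1,c(0)}w_{a_2,c(1)}\cdots w_{a_m,c(m-1)}$, a concatenation of $m$ bounded-length blocks, the $i$-th of which is chosen from the finite table, in $\AC^0$, using the input symbol $a_i$ and the unary value $c(i-1)$. Third, since $w=1$ in $G$ iff $w\in H$ and $w=1$ in $H$, i.e.\ iff $c(m)=1$ and $u$ is trivial in $H$, the circuit outputs the AND of the test ``$c(m)=1$'' and the answer of a single $\mathsf{WP}(H)$-oracle gate on $u$. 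The circuit has constant depth and polynomial size, with one layer of $(m+1)n$ parallel $\mathsf{WP}(G/H)$-gates followed by one $\mathsf{WP}(H)$-gate, which is the claimed $\AC^0$-reduction.

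\emph{The hard part.} There is no deep obstacle here; the one thing that has to be seen clearly is that the coset of each prefix can be read off by a \emph{single} round of parallel $\mathsf{WP}(G/H)$-queries --- asking directly, for the whole prefix $a_1\cdots a_i$ and every candidate representative $g_\ell$, whether $a_1\cdots a_i\,\overline{w_\ell}^{-1}$ is trivial in $G/H$ --- rather than by an inherently sequential left-to-right scan through the finite transition data, which would only yield an $\NC^1$ bound. This works precisely because $H$ is normal. The rest (hard-wiring the finite tables, concatenating bounded-length blocks by routing, carrying $c(i)$ in unary) is routine $\AC^0$ bookkeeping, and the correctness of the rewriting $w\mapsto u$ into $H$ is the same induction as in the proof of Theorem~\ref{thm-NC1-red}.
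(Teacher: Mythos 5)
The paper gives no proof of this statement---it is quoted from Robinson's thesis---so the only internal point of comparison is the proof of Theorem~\ref{thm-NC1-red}, and your argument is exactly its uncompressed specialization: the same coset-tracking and rewriting into $(\Gamma\cup\Gamma^{-1})^*$, with one parallel layer of $\mathsf{WP}(G/H)$-oracle queries (one per prefix and coset representative, in place of one per SLP variable and pair of representatives) followed by a single $\mathsf{WP}(H)$-oracle gate. The argument is correct; the only detail left implicit is that the table words $w_{a,r}$ should be padded to a common length so that the concatenation $u$ has fixed wiring, which is routine.
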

Now we are in the position to prove Theorem~\ref{thm-wp-linear-solvable}.

\medskip

\noindent
{\it Proof of Theorem~\ref{thm-wp-linear-solvable}.}
Let us first assume that $G$ is f.g. solvable and linear over a field $F$.
By a theorem of Mal'cev (see e.g. \cite[Theorem~3.6]{Wehr73}), $G$ contains a normal subgroup $H$ of  finite index, which is triangularizable over a finite extension of $F$.
Using Theorem~\ref{thm-WP-finite-quotient} we know that the word problem for $G$ is $\AC^0$-reducible to the word problems for $H$ and $G/H$. The latter
is a finite solvable group, see Theorem~\ref{thm-nilpotent-torsion-free-subgroup}. Hence, its word problem belongs to $\DLOGTIME$-uniform $\TC^0$
(actually $\mathsf{ACC}^0$) by \cite{BarringtonT88}. 

By the previous discussion, it suffices to show that the word problem for a f.g. triangular matrix group $G$ over some field $F$ belongs to 
$\DLOGTIME$-uniform $\TC^0$. Let $P$ be the prime field of $F$.  
We can replace $F$ by the finitely generated extension of $P$ that is generated by all matrix entries in generators of $G$.
It is known that the field extension $[F:P]$ has a separating transcendence base $\{ x_1, \ldots, x_k\}$, which means that
$[F:P(x_1,\ldots,x_k)]$ is a finite separable extension; see e.g. \cite[Theorem~31]{ZaSa58}.\footnote{Every finitely generated extension field
of a perfect field has a separating  transcendence base and every prime field is perfect.}
Hence, the theorem of the primitive element applies, which says
that $F$ is generated over 
$P(x_1,\ldots,x_k)$ by a single element $\alpha \in F$, which is algebraic over $P(x_1,\ldots,x_k)$.

Assume now that $P = \mathbb{Q}$ (in case $P  = \mathbb{F}_p$ for a prime $p$ we have to replace 
in all arguments below $\mathbb{Z}$ by $\mathbb{F}_p$).
Consider the minimal polynomial $p(y) \in \mathbb{Q}(x_1,\ldots,x_k)[y]$ of $\alpha$. We can write it as
\begin{equation} \label{min-poly}
p(y) = y^m + \frac{p_1}{q} y^{m-1} + \frac{p_2}{q} y^{m-2} \cdots + \frac{p_m}{q} 
\end{equation}
for $p_1, \ldots, p_m, q \in \mathbb{Z}[x_1, \ldots, x_k]$, $q \neq 0$.
The element $\beta = \alpha \cdot q \in F$ also generates $F$ over $P(x_1,\ldots,x_k)$, and 
its minimal polynomial is
$$
q(y) = y^m + p_1 \cdot y^{m-1} +  p_2 q \cdot y^{m-2} + \cdots + p_m q^{m-1} \in \mathbb{Z}[x_1, \ldots, x_k,y]
$$
(multiply \eqref{min-poly} by $q^m$).
We have
$$
F =  \mathbb{Q}(x_1,\ldots,x_k)[y]/\langle q(y) \rangle ,
$$
where $\langle q(y) \rangle = \{ a(x) \cdot q(x) \mid a(x) \in  \mathbb{Q}(x_1,\ldots,x_k)[y] \}$ is the 
ideal generated by $q(x)$.

Each of the finitely many generators of the group $G$ is a matrix, whose entries are polynomials in the variable $y$ with coefficients from the fraction
field $\mathbb{Q}(x_1,\ldots,x_k)$. Every such coefficient is a fraction $a(x_1,\ldots, x_k)/b(x_1,\ldots, x_k)$ with 
$a(x_1,\ldots, x_k)$, $b(x_1,\ldots, x_k) \in  \mathbb{Z}[x_1,\dots,x_k]$. Let $g(x_1,\ldots,x_k)$ be the greatest common divisor 
of all denominators $b(x_1,\ldots, x_k)$, which is a fixed polynomial. 
Instead of asking whether $A_1\cdots A_n \equiv \Id \text{ mod } q(y)$ (for group generators $A_1, \ldots, A_n$ of $G$)
we can ask whether $gA_1 \cdots gA_n \equiv g^n \Id \text{ mod } q(y)$.\footnote{Here, for two $(d \times d)$-matrices $A$ and $B$, $A \equiv B$ mod $q(x)$ 
means that $A[i,j] \equiv B[i,j]$ mod $q(x)$ for all $1 \leq i,j \leq d$.}
So far, the proof has been following more or less closely Waack's arguments from \cite{Waa91}.

Let $M_ i = gA_i$, which is a triangular matrix of dimension $d$ for some fixed $d \in \mathbb{N}$ with entries from $\mathbb{Z}[x_1,\dots,x_k]$. 
Let us write $M_i  = D_i + U_i$, where $D_i$ is a diagonal matrix and $U_i$ is upper triangular with all diagonal entries equal to zero.
We get 
\begin{equation} \label{product-of-M_i}
M_1\cdots M_n=\prod_{i=1}^n(D_i +U_i) = \sum_{X_1 \in \{D_1,U_1\}} \cdots \sum_{X_n \in \{D_n,U_n\}} \prod_{j=1}^n X_j.
\end{equation}
If there are more than $d-1$  factors $U_i$ in a product $\prod_{j=1}^n X_j$, then the product is the zero matrix. 
So there are at most $\sum_{i=0}^{d-1} \binom{n}{i} \leq d \binom{n}{d} \leq d n^d$ summands (for $n>2d$) 
in the sum \eqref{product-of-M_i} that are not equal to zero. When we look at one of the products $\prod_{j=1}^n X_j$ with at most 
$d-1$ many factors $U_i$, we can write it as 
\begin{align*}
& \left( \prod_{i=1}^{m_1-1}D_{i} \right) U_{m_1} \left( \prod_{i=m_{1}+1}^{m_2-1}D_{i}\right)  \cdots  U_{m_l} \left(\prod_{i=m_l+1}^{n}D_{i} \right)
= \\
& D_{1,m_1-1} U_{m_1}  D_{m_1+1,m_2-1}  \cdots U_{m_l} D_{m_l+1,n}
\end{align*}
for some $0 \leq l \leq d-1$ and $1 \leq m_1 < \cdots < m_l \leq n$,
where $D_{u,v} = \prod_{i=u}^v D_i$ ($1 \leq u \leq v+1$, $0 \leq v \leq n$) is a product of at most $n$ diagonal matrices.
Each of these products can be calculated by calculating $d$ products of at most $n$ polynomials
from $\mathbb{Z}[x_1,\dots,x_k]$, which can be done  in $\DLOGTIME$-uniform $\TC^0$ by Lemma~\ref{lemma-poly-mult-TC0}. 
Moreover, all products $D_{u,v}$ for $1 \leq u < v \leq n$ can be computed in parallel.
Once these products are computed, we can, in parallel, compute for 
all $0 \leq l \leq d-1$ and $1 \leq m_1 < \cdots < m_l \leq n$ the matrix product
$D_{1,m_1-1} U_{m_1}  D_{m_1+1,m_2-1}  \cdots U_{m_l} D_{m_l+1,n}$. Note that these products
have constant length and hence involve a constant number of polynomial multiplications and additions.
So, all the above matrix products can be computed in $\DLOGTIME$-uniform $\TC^0$ as well. Next, we have to compute 
the sum of all polynomially many matrices computed in the previous step. For this we have to compute
$d^2$ many sums of polynomially many polynomials, which is again possible in $\DLOGTIME$-uniform $\TC^0$.
The resulting matrix is  $M_1\cdots M_n = g^n A_1 \cdots A_n$.
Finally we have to reduce all entries of the matrices $M_1 \cdots M_n$ and $g^n \Id$ modulo the minimal polynomial $q(y)$ 
which can also be done in $\DLOGTIME$-uniform $\TC^0$ by Lemma~\ref{lemma-poly-div-TC0}. Note that we 
divide by the polynomial $q(y)$, whose leading coefficient is indeed $1$.

Finally, let $G$ be a f.g. virtually solvable linear group $G$. Then $G$ contains a normal solvable subgroup $H$, for which we know that
the word problem can be solved in $\DLOGTIME$-uniform $\TC^0$. Moreover, the quotient $G/H$ is a finite group, for which the word problem belongs to 
$\DLOGTIME$-uniform $\NC^1$. Hence,  Theorem~\ref{thm-WP-finite-quotient} implies that the word problem for $G$ belongs to $\DLOGTIME$-uniform $\NC^1$.
\qed

\end{document}